\documentclass[12pt,reqno]{article}

\usepackage[T1]{fontenc}
\usepackage[utf8]{inputenc}
\usepackage{lmodern}
\usepackage{microtype}

\usepackage[a4paper,margin=1in]{geometry}
\usepackage{setspace}
\usepackage{natbib}
\usepackage{amsmath,amssymb,amsthm,mathtools,bm}
\numberwithin{equation}{section}

\usepackage{graphicx}
\usepackage{subcaption}
\usepackage{booktabs}
\usepackage{multirow}
\usepackage{tabularx}
\usepackage{colortbl}
\usepackage{float}
\usepackage{comment}
\usepackage{tikz}
\usetikzlibrary{arrows.meta,calc,positioning}
\usepackage{pgfplots}
\pgfplotsset{compat=1.18}

\usepackage{enumitem}

\usepackage{lscape}

\usepackage{natbib}
\usepackage{hyperref}
\hypersetup{
  colorlinks=true,
  linkcolor=blue!50!black,
  citecolor=blue!50!black,
  urlcolor=blue!50!black,
  pdfauthor={},
  pdftitle={Partial Identification with Auxiliary Moment Restrictions}
}
\usepackage[nameinlink,capitalise,noabbrev]{cleveref}

\theoremstyle{plain}
\newtheorem{theorem}{Theorem}[section]
\newtheorem{proposition}[theorem]{Proposition}
\newtheorem{lemma}[theorem]{Lemma}
\newtheorem{corollary}[theorem]{Corollary}

\theoremstyle{definition}
\newtheorem{assumption}{Assumption}

\newtheorem{definition}[theorem]{Definition}

\theoremstyle{remark}
\newtheorem{remark}[theorem]{Remark}

\newcommand{\R}{\mathbb{R}}

\newcommand{\E}{\mathbb{E}}

\newcommand{\PP}{\mathbf{P}}

\newcommand{\1}{\mathbf{1}}

\DeclareMathOperator{\Sel}{\mathbf{Sel}}
\title{\bf{Partial Identification with Auxiliary Moment Restrictions}}
\author{
  Arie Beresteanu\thanks{University of Pittsburgh, Pittsburgh, PA, USA. \texttt{arie@pitt.edu}}
  \and
  Behrooz Moosavi Ramezanzadeh\thanks{University of Pittsburgh, Pittsburgh, PA, USA. \texttt{behroozmoosavi@pitt.edu}}
}
\date{\today}

\begin{document}
\maketitle

\begin{abstract}
Partial identification is often set aside in practice because the identification regions it delivers are too wide to be useful, pushing researchers toward strong assumptions that buy point identification at the cost of credibility. We show that a source of information already sitting in most interval-valued datasets can fix this without adding any assumption at all. When an outcome is reported only as an interval---because a data custodian bracketed, top-coded, or formally privatized it to protect respondents---the same custodian typically continues to publish accurate population aggregates of that outcome, precisely because doing so does not compromise any individual record. We develop a framework for exploiting exactly this information: restricting the set of admissible completions of the data to those consistent with a known aggregate, rather than restricting the interval itself, and characterizing the sharp identification region that results for the best linear predictor. The restrictions we study behave in strikingly different ways---some collapse the region by a full dimension, others narrow it while leaving its shape intact. We characterize the geometric effect of each restriction and derive closed-form directional measures of identifying value for the mean and conditional-mean cases. An illustration using interval-valued wages from the Current Population Survey shows that the effect is far from marginal: modest auxiliary information recovers a substantial share of the identifying power usually thought to be lost once an outcome is coarsened.
\end{abstract}

\noindent\textbf{Credit authorship contribution statement: }

\textbf{Arie Beresteanu: } Writing - review \& editing, Writing - original draft, Methodology, Investigation, Conceptualization.

\textbf{Behrooz Moosavi Ramezanzadeh: } Writing - review \& editing, Writing - original draft, Methodology, Investigation, Conceptualization.

\noindent\textbf{Conflict of Interest:} The authors declare that they have no conflict of interest.

\section{Introduction}\label{sec:intro}

The literature on partial identification offers a principled way to learn from data under weak and credible assumptions. Yet it is adopted less often than its intellectual standing would warrant. A recurring complaint among practitioners is that the identified sets these methods deliver are large and, as a result, uninformative for policy. Confronted with wide bounds, researchers frequently impose strong behavioral or distributional assumptions that restore point identification but erode credibility---the very property that partial identification was designed to protect.

This paper develops a remedy that adds \emph{information} rather than assumptions, and it does so in a setting where the information arises naturally. In a growing number of applications the outcome of interest is reported only as an interval, not by accident, but \emph{by design}: to protect respondent confidentiality, the data custodian deliberately coarsens the true value $y^*$ into a published range $[y_L,y_U]$. Income is bracketed or top-coded in public-use survey files; sensitive variables are released only within bins; and modern formal-privacy systems perturb record-level data before release. At the same time, the very same custodian continues to publish accurate population-level summaries of the same latent outcome---official means, or the exact aggregates that a formal-privacy system releases as invariants---precisely because such aggregates pass disclosure review without exposing any individual record. The econometrician therefore confronts interval-valued microdata alongside known population aggregates of the same unobserved outcome. We show that this auxiliary information---a byproduct of the privacy regime rather than an external dataset that must be located and merged---can substantially shrink the identified set for a best linear predictor (BLP), without imposing any restriction on the data-generating process beyond the interval-containment condition $\PP(y_L\le y^*\le y_U)=1$.

We study the predictor associated with squared loss: the best linear predictor that solves the population linear projection problem for the conditional expectation of the latent outcome. Our analysis builds directly on \citet{BM2008}, who characterize the sharp identified set for the conditional-expectation BLP with interval-valued outcomes. We ask how much identifying power is contained in knowledge of (i) the unconditional mean of the latent outcome, (ii) a known moment of a transformation of the latent outcome, and (iii) a conditional mean given a subset of the covariates or given an external variable that is excluded from the model.

\paragraph{Overview of the analysis and contributions.}
The central device in the paper is a restricted selection set. We leave the observed random interval $Y=[y_L,y_U]$ unchanged and restrict only the set of admissible selections from that interval. Each candidate selection represents a possible realization of the latent outcome $y^*$, and auxiliary information removes those selections whose moments do not agree with the known aggregate. The resulting identified region is therefore the image, under the best-linear-predictor map, of a selection set restricted by the available population information. Nonemptiness of the relevant restricted selection sets is established in the companion note \citet{beresteanu2025noterestrictedselectionset}.

We use this framework to study several combinations of target parameters and auxiliary restrictions. The baseline target is the coefficient vector of the best linear predictor of $y^*$. When the unconditional mean $E[y^*]=\kappa$ is known, the identified region is the intersection of the unrestricted region with an affine hyperplane. Under a nondegeneracy condition, the restriction therefore lowers the affine dimension of the region by one. We then ask how the result changes when the target or the auxiliary information involves a transformation $f(y^*)$. If the target is the best linear predictor of $f(y^*)$, knowledge of $E[y^*]=\kappa$ restricts the admissible selections but does not impose a direct affine restriction on the transformed-outcome coefficients. If instead $E[f(y^*)]=\kappa_f$ is known, the transformed-outcome coefficients satisfy an affine hyperplane restriction. By contrast, when the target remains the best linear predictor of $y^*$ and $E[f(y^*)]=\kappa_f$ is used only as auxiliary information, the restriction generally narrows the identified region without reducing its dimension.

We next consider conditional auxiliary information. If the researcher knows
$$
E[y^* \mid x_1]=\kappa(x_1)
$$
for a subvector $x_1$ of the regressors, a population Frisch--Waugh--Lovell decomposition shows that the corresponding coefficient block is an affine function of the coefficients on the remaining regressors. The restriction therefore imposes $d_1+1$ exact affine restrictions and reduces the full identified set to an affine image of a lower-dimensional coefficient region. If instead the researcher knows
$$
E[y^* \mid v]=\kappa(v)
$$
for an external variable $v$ excluded from the predictor, no coefficient is generally pinned down by an affine equation. The restriction tightens identification by fixing how the available interval width must be allocated across values of $v$, and it improves on the pooled-mean restriction whenever that allocation differs from the one selected by the pooled optimization problem.

Beyond characterizing these geometric effects, we quantify the identifying value of the mean and conditional-mean restrictions. Directional bounds can be written as allocation problems in which the interval width $y_U-y_L$ is assigned across observations according to the score associated with a direction in coefficient space. A known mean fixes the total amount of width that may be allocated, while a known conditional mean fixes how that amount is divided across covariate cells or values of an external variable. This representation yields formulas for the loss in each directional support value and shows that a restriction matters only when it forces the allocation away from the one an unrestricted optimizer would choose.

Finally, we extend the restricted-selection framework to nonlinear and discontinuous transformations. For continuous transformations, we establish convexity and derive a Lagrangian support representation under suitable regularity conditions. For indicator transformations, which encode information such as a known distributional probability at a threshold, we provide a finite-sample sorting characterization. This characterization connects this paper to the quantile analysis of \citet{beresteanu-2021}, but the role of the quantile information is different here: it is used as auxiliary information to sharpen identification of a mean-based best linear predictor rather than as the target of the analysis itself.

\paragraph{Related literature.}
At its core, partial identification asks what can be learned about a parameter that is only set-identified, emphasizing credible assumptions; see \citet{Manski2003} for a monograph treatment and \citet{Tamer2010} for a survey. The foundational treatment of interval-valued regression itself is \citet{manskitamer}, who derive nonparametric bounds on a regression function when an outcome or regressor is observed only within an interval; we work throughout in their interval-containment setting but target the BLP coefficient rather than the conditional expectation function directly, following \citet{BM2008}. A central methodological theme is to translate credible restrictions on the data-generating process or on agents' behavior into inequality restrictions on the parameters, and to show that these inequalities are both necessary and sufficient, so that the identified set is sharp. We obtain sharpness through the random-set apparatus of \citet{Molchanov2005,BMM2011} and \citet{MolchanovMolinari2014}, and we study how auxiliary information on the mean of the latent outcome, or of a transformation of it, tightens the corresponding selection sets and hence the identified regions.

The motivating environment connects our analysis to the literature on confidentiality protection. Agencies have long limited disclosure by bracketing or top-coding sensitive variables, which is precisely what makes the outcome interval-valued; more recently, national statistical offices have adopted formal privacy, as in the U.S. Census Bureau's 2020 Disclosure Avoidance System, which protects record-level data through differential privacy while releasing selected aggregates exactly as invariants \citep{Abowd2022}. We take such released aggregates as given and ask what they identify, so that the auxiliary information in our results is a feature of the disclosure regime rather than an additional assumption.

Our use of external information is also related to, but distinct from, the data-combination literature. \citet{CrossManski2002} consider a setting in which $(y,x,z)$ are never jointly observed and the researcher knows the marginals $P(y\mid x)$ and $P(z\mid x)$ from two separate sources, which are then combined to bound $E(y\mid x,z)$. In our setting a single sample is observed in which the outcome is reported only as an interval, and the additional information takes the form of known moments of the unobserved outcome. Unlike in \citet{CrossManski2002}, the conditional expectation of $y$ given the covariates is already partially identified without the auxiliary information; that information is used here to tighten the bounds. Our approach is also distinct from methods that treat the bounds on the outcome as themselves estimated or as an unknown object to be restricted directly: \citet{ChandrasekharEtAl2019} construct BLP regions from an estimated band around the outcome and develop inference for the resulting support function, while \citet{MagnacMaurin2008} and \citet{beresteanu-2021} restrict an unknown bounded function or transformation of the outcome. We instead hold the observed interval $Y=[y_L,y_U]$ fixed throughout and restrict only the \emph{set of selections} admissible within it, so that auxiliary information enters through which completions of the observed interval are permitted rather than through the interval itself. Finally, shape restrictions such as monotonicity or concavity are typically imposed on selections by redefining the random set from which they are drawn; \citet[p.~65]{MolchanovMolinari2014} discuss restricting the values of selections through intersection with another (possibly random) set. We instead leave the observed random set intact and restrict the \emph{set of selections} to those whose moments match externally known values. In spirit, this is close to the use of monotone-instrument and other auxiliary restrictions to sharpen partially identified objects, as in \citet{ManskiPepper2000}; the present paper extends \citet{BM2008} by showing that auxiliary moments---of the outcome itself, of a subvector of covariates, of an external variable, and of transformations of the outcome, including quantiles---carry identifying power, and by quantifying it.

\paragraph{Roadmap.}
\Cref{sec:condexp} develops the results summarized above: the unconstrained region and its basic properties, the hyperplane characterization under a known unconditional mean, directional and coordinate bounds, the value of a known mean, identification with transformations (with and without retargeting), a known conditional mean given a subvector of covariates, and a known conditional mean given an external variable. \Cref{sec:numerical} illustrates every result numerically using interval-valued wages constructed from the 2020 March Current Population Survey (CPS) via a stylized coarsening mechanism that plays the role of a privacy device applied to the observed wage, in the spirit of the data used by \citet{CHT2007} and \citet{BM2008}. \Cref{app:numerical} reports supplementary numerical detail. \Cref{app:background} collects the background results from random-set theory and convex analysis used in the proofs, including the general treatment of vector-valued transformations and quantile information; \Cref{app:proofs} contains all proofs.

\section{Results}\label{sec:condexp}

In this section, we characterize the identification region for the parameter
vector $\theta$ of a best linear predictor under squared loss when the outcome
is interval-valued and the researcher has access to auxiliary population
information about the latent outcome.

Let $(\Omega,\mathcal F,\PP)$ be a probability space. Let
$y^*\in\R$ be a latent scalar outcome, and let
\[
Y=[y_L,y_U]
\]
be an observed random interval satisfying
\[
\PP(y_L\leq y^*\leq y_U)=1.
\]
Let $x\in\R^d$ be an observed vector of covariates, and define the augmented
covariate vector
\[
\tilde x=(1,x_1,\dots,x_d)'
\in\R^{d+1}.
\]

Throughout this section, we impose the following regularity conditions.

\begin{assumption}[Baseline regularity]\label{as:main}
The random variables satisfy
\[
\E\!\left[\|\tilde x\|^2\right]<\infty,
\qquad
\E\!\left[
\|\tilde x\|\bigl(|y_L|+|y_U|\bigr)
\right]<\infty,
\]
and the second-moment matrix
\[
Q:=\E[\tilde x\tilde x']
\]
is nonsingular.
\end{assumption}

Because $Q$ is a symmetric second-moment matrix, nonsingularity is equivalent
to positive definiteness.

Define the set of integrable measurable selections of $Y$ by
\[
\Sel^1(Y)
=
\left\{
y\in L^1(\Omega,\mathcal F,\PP):
y_L\leq y\leq y_U
\quad\PP\text{-a.s.}
\right\}.
\]

Absent additional information, every $y\in\Sel^1(Y)$ is a candidate for the latent outcome $y^*$. For a given selection $y$, the population best linear
predictor coefficient is defined by the normal equations
\[
\E\!\left[
\tilde x\bigl(y-\tilde x'\theta\bigr)
\right]
=
\mathbf 0.
\]
Because $Q$ is nonsingular, the coefficient generated by $y$ is uniquely
given by
\[
\theta(y)
=
Q^{-1}\E[\tilde x y].
\]

The unconstrained sharp identification region is therefore
\begin{equation}\label{eq:condexp_base_id}
\Theta^I
=
\left\{
Q^{-1}\E[\tilde x y]:
y\in\Sel^1(Y)
\right\}.
\end{equation}
Equivalently,
\[
\Theta^I
=
\left\{
\theta\in\R^{d+1}:
\E\!\left[
\tilde x\bigl(y-\tilde x'\theta\bigr)
\right]
=
\mathbf 0
\text{ for some }y\in\Sel^1(Y)
\right\}.
\]

It is useful to define the attainable cross-moment set
\[
\mathcal M
=
\left\{
\E[\tilde x y]:
y\in\Sel^1(Y)
\right\}.
\]
Then
\[
\Theta^I=Q^{-1}\mathcal M.
\]

\begin{proposition}[Basic properties of the unconstrained set]
\label{prop:basic}
Under Assumption \ref{as:main}, the cross-moment set $\mathcal M$ and the
identification region $\Theta^I=Q^{-1}\mathcal M$ are nonempty, compact, and
convex.
\end{proposition}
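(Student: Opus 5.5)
I prove the three properties for $\mathcal M$ directly and then transfer them to $\Theta^I$. The map $m\mapsto Q^{-1}m$ is a linear bijection of $\R^{d+1}$ by \Cref{as:main}, so it preserves nonemptiness, convexity and compactness. It therefore suffices to treat $\mathcal M$.

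\emph{Nonemptiness and well-definedness.} The selection $y=y_L$ belongs to $\Sel^1(Y)$. For any $y\in\Sel^1(Y)$ we have $|y|\le |y_L|+|y_U|$ a.s., so $\E[\|\tilde x\|\,|y|]\le \E[\|\tilde x\|(|y_L|+|y_U|)]<\infty$. Hence $\E[\tilde x y]$ is finite for every selection, and $\E[\tilde x y_L]\in\mathcal M$.

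\emph{Convexity.} If $y,y'\in\Sel^1(Y)$ and $\lambda\in[0,1]$, then $\lambda y+(1-\lambda)y'$ lies in $[y_L,y_U]$ a.s. and is integrable, so it is again a selection. Linearity of $y\mapsto \E[\tilde x y]$ then shows that $\mathcal M$ is convex.

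\emph{Boundedness.} The uniform bound $\|\E[\tilde x y]\|\le \E[\|\tilde x\|(|y_L|+|y_U|)]$ holds for every selection, so $\mathcal M$ is bounded.

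\emph{Closedness.} This is the main step. I would avoid abstract Aumann-integral results and use the support function of $\mathcal M$, together with the fact that a convex set is closed if it contains its exposed points in every direction. Fix a direction $u\in\R^{d+1}$ and write $s=u'\tilde x$. Then
\[
\sup_{y\in\Sel^1(Y)} u'\E[\tilde x y]=\sup_{y} \E[s y].
\]
This supremum is attained by the measurable selection
\[
y_u=y_U\1\{s>0\}+y_L\1\{s\le 0\},
\]
because $sy\le s y_u$ pointwise for every selection $y$. So the support function $h_{\mathcal M}(u)$ is attained by a point of $\mathcal M$ in every direction. Attainment alone does not yet give closedness, since boundary points that are not exposed could be missing.

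To handle those points I would argue via weak compactness instead. Take a sequence $m_n=\E[\tilde x y_n]\to m$. The functions $\tilde x y_n$ are dominated by $\|\tilde x\|(|y_L|+|y_U|)\in L^1$, so the family $\{y_n\}$ is bounded in the weighted space $L^1(\mu)$ with $d\mu=(1+\|\tilde x\|)\,d\PP$. It is also uniformly integrable there, being dominated by $|y_L|+|y_U|$. By the Dunford--Pettis theorem, a subsequence converges weakly in $L^1(\mu)$ to some $y$. The set $\{y: y_L\le y\le y_U\}$ is convex and strongly closed, hence weakly closed, so the limit $y$ is a selection. Finally, the components of $\tilde x/(1+\|\tilde x\|)$ are bounded, so they lie in $L^\infty(\mu)$. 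Testing against them gives $\E[\tilde x y_n]\to \E[\tilde x y]$, and therefore $m=\E[\tilde x y]\in\mathcal M$.

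This shows $\mathcal M$ is closed; combined with boundedness it is compact. The weak-compactness step is the only delicate point. Alternatively, one can cite the standard result that the Aumann expectation of an integrably bounded random compact convex set in $\R^{d+1}$ is compact and convex. That result applies to the random set $\tilde x Y$, since $\{\E[\tilde x y]:y\in\Sel^1(Y)\}=\E[\tilde x Y]$ (\citealp{Molchanov2005}).
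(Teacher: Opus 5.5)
Your argument is correct, but it proves compactness by a different route from the paper. The paper identifies $\mathcal M$ with the Aumann integral of $\Gamma(\omega)=\{\tilde x(\omega)t:t\in[y_L(\omega),y_U(\omega)]\}$. It checks that $\Gamma$ is measurable, compact- and convex-valued, and integrably bounded by $\|\tilde x\|(|y_L|+|y_U|)$, and then cites the general compactness and convexity theorem for such integrals; this is your closing alternative. You instead get convexity by pointwise convex combination of selections, which is the same argument as the paper's proof of Theorem~\ref{thm:aumann}(ii). You get boundedness by domination, and closedness by Dunford--Pettis in $L^1(\mu)$ with $d\mu=(1+\|\tilde x\|)\,d\PP$. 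The reweighting is what makes the test functions $\tilde x_k/(1+\|\tilde x\|)$ bounded, and Mazur's theorem makes the selection set weakly closed.

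The two routes buy different things. Yours is self-contained and visibly needs no atomlessness. That is worth having, since Assumption~\ref{as:main} does not impose atomlessness, while the justification the paper gives for Theorem~\ref{thm:aumann}(i) is phrased for atomless spaces. The paper's route is shorter, and it delivers the support-function identity \eqref{eq:support_commute}, which it reuses for the closed-form directional bounds.

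Two small repairs are needed. First, say explicitly that $\E[|y_L|+|y_U|]\le\E[\|\tilde x\|(|y_L|+|y_U|)]<\infty$ because the first coordinate of $\tilde x$ is one. You need this both for $y_L\in\Sel^1(Y)$ and for $|y_L|+|y_U|\in L^1(\mu)$. Second, drop the support-function paragraph, because the ``fact'' it invokes is false. A stadium with one of the four points where an arc meets a flat edge removed is convex and attains its support function in every direction, yet it is not closed. Your next sentence effectively concedes this, and nothing in the weak-compactness argument depends on that paragraph.
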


The proof represents $\mathcal M$ as the Aumann integral of an integrably
bounded, measurable, compact-valued, and convex-valued correspondence. The
relevant background results are collected in \cref{app:background} and all proofs for this section are in \cref{app:proofs}.

\subsection{Information on \texorpdfstring{$\E[y^*]$}{E[y*]}}\label{sec:knownExpectation}

We first consider the environment in which the researcher knows the
unconditional population expectation of the latent outcome.

\begin{assumption}[Known expected value]\label{assm:fixed_mean}
The unconditional expectation of the latent outcome is known:
\[
\E[y^*]=\kappa,
\]
where
\[
\kappa\in[\E(y_L),\E(y_U)].
\]
\end{assumption}

Under Assumption \ref{assm:fixed_mean}, candidate outcomes are restricted to
\[
\Sel^1(Y\mid\kappa)
=
\left\{
y\in\Sel^1(Y):
\E[y]=\kappa
\right\}.
\]

The compatibility condition
\[
\E[y_L]\leq\kappa\leq\E[y_U]
\]
guarantees that $\Sel^1(Y\mid\kappa)$ is nonempty. To see this, assume that
$\E[y_U-y_L]>0$, and define
\[
\lambda_\kappa
=
\frac{\kappa-\E[y_L]}{\E[y_U-y_L]}
\in[0,1].
\]
Then
\[
y_\kappa
=
y_L+\lambda_\kappa(y_U-y_L)
\]
belongs to $\Sel^1(Y\mid\kappa)$ and $\E[y_{\kappa}]=\kappa$. If $\E[y_U-y_L]=0$, then $P(y_L<y_U)=0$ implies that
$y_L=y_U$ almost surely, and Assumption \ref{assm:fixed_mean} implies
$\kappa=\E[y_L]=\E[y_U]$.

Given Assumption \ref{assm:fixed_mean}, we use the constrained selection set $\Sel^1(Y|\kappa)$ to define the identification region as
\begin{equation}\label{eq:condexp_mean_id}
\Theta^I_\kappa
=
\left\{
Q^{-1}\E[\tilde x y]:
y\in\Sel^1(Y\mid\kappa)
\right\}.
\end{equation}
The identification region in equation (\ref{eq:condexp_mean_id}) is not easy to compute directly by going over all selections in $\Sel^1(Y|\kappa)$. The following Proposition shows how to characterize the constrained identification set as an intersection of two sets that can be computed.
\begin{proposition}[Hyperplane characterization]
\label{prop:mean_hyperplane}
Under Assumptions \ref{as:main} and \ref{assm:fixed_mean},
\begin{equation}\label{eq:mean_hyperplane_intersection}
\Theta^I_\kappa
=
\Theta^I\cap\Theta_\kappa,
\end{equation}
where
\[
\Theta_\kappa
=
\left\{
\theta\in\R^{d+1}:
\E[\tilde x]'\theta=\kappa
\right\}.
\]
Consequently, $\Theta^I_\kappa$ is a nonempty compact convex set whose affine
dimension is at most $d$.

If $\Theta^I$ is a full-dimensional subset of $\R^{d+1}$ and
\[
\operatorname{ri}(\Theta^I)\cap\Theta_\kappa\neq\varnothing,
\]
then $\Theta^I_\kappa$ has affine dimension exactly $d$.\footnote{For a set $A \subset \R^d$, $ri(A)$ is the relative interior of $A$.}
\end{proposition}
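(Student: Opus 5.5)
The key observation is that, because $\tilde x$ has first coordinate equal to one, the mean of any selection is a linear function of its BLP coefficient. For any $y\in\Sel^1(Y)$ with $\theta=\theta(y)=Q^{-1}\E[\tilde x y]$, the normal equations $\E[\tilde x(y-\tilde x'\theta)]=\mathbf 0$ hold. Their first component (the intercept row) reads $\E[y]-\E[\tilde x]'\theta=0$. Hence
\[
\E[y]=\E[\tilde x]'\theta(y)\qquad\text{for every } y\in\Sel^1(Y).
\]
I will prove the set equality \eqref{eq:mean_hyperplane_intersection} directly from this identity.

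For "$\subseteq$": if $\theta\in\Theta^I_\kappa$, then $\theta=\theta(y)$ for some $y\in\Sel^1(Y\mid\kappa)\subseteq\Sel^1(Y)$. So $\theta\in\Theta^I$, and by the identity $\E[\tilde x]'\theta=\E[y]=\kappa$, giving $\theta\in\Theta_\kappa$.

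For "$\supseteq$": if $\theta\in\Theta^I\cap\Theta_\kappa$, pick any $y\in\Sel^1(Y)$ with $\theta(y)=\theta$. The identity gives $\E[y]=\E[\tilde x]'\theta=\kappa$, so $y\in\Sel^1(Y\mid\kappa)$ and $\theta\in\Theta^I_\kappa$.

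This direction is the only place where something could seem subtle: one might worry that some selections generating $\theta$ have the wrong mean. The identity shows the mean is determined by $\theta$ alone, so no selection choice is needed.

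For the consequences, I argue as follows.
\begin{itemize}
\item Nonemptiness follows because $\Sel^1(Y\mid\kappa)\neq\varnothing$, as shown above using $y_\kappa$ (or $y_L$ in the degenerate case).
\item Compactness and convexity follow from \cref{prop:basic}, since $\Theta^I$ is compact and convex and $\Theta_\kappa$ is a closed affine set; their intersection is therefore compact and convex.
\item The affine dimension is at most $d$ because $\E[\tilde x]\neq\mathbf 0$ (its first entry is $1$). Thus $\Theta_\kappa$ is a genuine hyperplane of dimension $d$, and $\Theta^I_\kappa$ lies inside it.
\end{itemize}

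For the exact-dimension claim, suppose $\Theta^I$ is full-dimensional, so that $\operatorname{ri}(\Theta^I)=\operatorname{int}(\Theta^I)$, and take $\theta_0\in\operatorname{int}(\Theta^I)\cap\Theta_\kappa$. There is a ball $B(\theta_0,r)\subseteq\Theta^I$. Then $B(\theta_0,r)\cap\Theta_\kappa$ is a $d$-dimensional ball within the hyperplane $\Theta_\kappa$ and is contained in $\Theta^I_\kappa$. Its affine hull is all of $\Theta_\kappa$, so the dimension is at least $d$, and combined with the upper bound it is exactly $d$. Equivalently, one can invoke the standard fact that $\operatorname{ri}(C)\cap L\neq\varnothing$ implies $\operatorname{aff}(C\cap L)=\operatorname{aff}(C)\cap L$ for convex $C$ and affine $L$ (Rockafellar, Cor.~6.5.1), collected in \cref{app:background}.

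I expect no real obstacle. The one point to state carefully is the use of the intercept row of the normal equations, which relies on the first coordinate of $\tilde x$ being the constant $1$.
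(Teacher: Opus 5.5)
Your proof is correct and matches the paper's argument essentially line for line: the paper reads off the first row of $Q\theta=\E[\tilde x y]$ (written as $e_0'Q=\E[\tilde x]'$) to get $\E[y]=\E[\tilde x]'\theta$, uses this identity for both inclusions, gets compactness and convexity from intersecting with a closed hyperplane, and obtains exact dimension $d$ from a relatively open piece of $\Theta_\kappa$ inside $\Theta^I$. One small correction is that the Rockafellar fact you cite is not actually collected in the paper's background appendix, but your direct ball argument already makes it unnecessary.
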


\begin{remark}
The equality of the affine dimension with $d$ requires a nondegeneracy condition. Without it, the hyperplane may intersect $\Theta^I$ in a lower-dimensional face or at a single point.
\end{remark}

\subsection{Directional and Coordinate Bounds}\label{subsec:directional_bounds}

The set $\Theta^I_\kappa$ is obtained by intersecting $\Theta^I$ with the affine hyperplane, $\Theta_{\kappa}$, induced by the known mean. Thus, knowing the mean of $y^*$ can only reduce the identified set, and under nondegeneracy it lowers the affine dimension by one. Before we can measure how much the known mean tightens the identification region, we first characterize the boundary of the unconstrained set $\Theta^I$ itself. Researchers are often interested in the projection of an identification
region onto a specific linear combination of coordinates: given $\Theta^I$, what is the maximal or minimal value that $\theta_j$ can admit, for $j=0,1,\dots,d$? More generally, for $r\in\R^{d+1}$ representing a linear
combination of the parameters in $\theta$, we would like to find the maximal or minimal value that $r'\theta$ can admit over $\Theta^I$. This
apparatus is reused in \Cref{subsec:value_of_mean} below to quantify the
impact of the auxiliary information $\E[y^*]=\kappa$.

To characterize the boundary of the identified set, fix a direction
$r\in\R^{d+1}$ and define
\[
s_r
=
r'Q^{-1}\tilde x.
\]
For a coordinate direction $r=e_j$, write
\[
s_j=e_j'Q^{-1}\tilde x.
\]

For every selection $y\in\Sel^1(Y)$,
\[
r'\theta(y)
=
r'Q^{-1}\E[\tilde x y]
=
\E[s_r y].
\]

Let
\[
\Delta:=y_U-y_L\geq0.
\]
Every $y\in\Sel^1(Y)$ can be represented as
\[
y=y_L+\tau\Delta,
\]
where $\tau:\Omega\to[0,1]$ is measurable. 

\subsubsection*{Unconstrained bounds}

In the unconstrained problem, where $\E[y^*]$ is not given,
\[
\sup_{y\in\Sel^1(Y)}\E[s_r y]
=
\E[s_r y_L]
+
\sup_{0\leq\tau\leq1}\E[s_r\tau\Delta].
\]
Because there is no aggregate restriction on $\tau$,
an unconstrained maximizing allocation of $\tau$ satisfies $\tau^{r}(\omega)=1
\quad\text{on }\{\omega : s_r>0\}$, $
\tau^{r}(\omega)=0
\quad\text{on }\{\omega : s_r<0\}$, and with arbitrary values on $\{\omega : s_r=0\}$.

Therefore,
\begin{equation}\label{eq:unconstrained_upper}
\sup_{\theta\in\Theta^I}r'\theta
=
\E[s_r y_L]+\E[(s_r)^+\Delta].
\end{equation}

Similarly,
\begin{equation}\label{eq:unconstrained_lower}
\inf_{\theta\in\Theta^I}r'\theta
=
\E[s_r y_L]-\E[(s_r)^-\Delta].
\end{equation}

For bounds on $\theta_j$ we set $r=e_j$. The maximal and minimal values are
\[
\theta_j^{\max}
=
\E[s_jy_L]+\E[s_j^+\Delta]
\]
and
\[
\theta_j^{\min}
=
\E[s_jy_L]-\E[s_j^-\Delta].
\]

Equations (\ref{eq:unconstrained_upper}) and (\ref{eq:unconstrained_lower}) require the calculation of two expectations that include stochastic weights $s_r$. We next give a useful representation of the resulting directional and coordinate breadths.
\begin{definition}
The breadth of a nonempty compact set $C$ in direction $r$ is
\[
w_C(r)
=
\sup_{\theta\in C}r'\theta
-
\inf_{\theta\in C}r'\theta.
\]

\end{definition}

\begin{proposition}[Partial-regression representation of coordinate breadth]
\label{prop:width}
Under Assumption \ref{as:main}, for every $r\in\R^{d+1}$,
\begin{equation}\label{eq:dir_width}
w_{\Theta^I}(r)
=
\E[|s_r|\Delta].
\end{equation}

Moreover, for $j\in\{1,\dots,d\}$ let $\tilde x_j$ be a nonconstant coordinate and let $\tilde x_{-j}$ denote all remaining components of $\tilde x$. Since $Q$ is positive definite, its principal block $\E[\tilde x_{-j}\tilde x_{-j}']$ is also positive definite, hence nonsingular. Let $\tilde x_j^*$ be the residual from
the population linear projection of $\tilde x_j$ on $\tilde x_{-j}$:
\[
\tilde x_j^*
=
\tilde x_j
-
\E[\tilde x_j\tilde x_{-j}']
\E[\tilde x_{-j}\tilde x_{-j}']^{-1}
\tilde x_{-j}.
\]
Let
\[
\sigma_j^2
=
\E[(\tilde x_j^*)^2]>0.
\]
Then
\begin{equation}\label{eq:coord_score}
s_j
=
e_j'Q^{-1}\tilde x
=
\frac{\tilde x_j^*}{\sigma_j^2}
\qquad
\PP\text{-a.s.},
\end{equation}
and
\begin{equation}\label{eq:coord_width}
w_{\Theta^I}(e_j)=\theta_j^{\max}-\theta_j^{\min}
=
\frac{\E[|\tilde x_j^*|\Delta]}
{\E[(\tilde x_j^*)^2]}.
\end{equation}
\end{proposition}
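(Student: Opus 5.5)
The plan has two parts: the breadth formula \eqref{eq:dir_width}, which follows by subtracting the two directional bounds, and a population Frisch--Waugh--Lovell identity for the score $s_j$.

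\textbf{Step 1: the directional breadth.} First I make the unconstrained bounds \eqref{eq:unconstrained_upper}--\eqref{eq:unconstrained_lower} rigorous.

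For any measurable $\tau\in[0,1]$ we have $\E[s_r\tau\Delta]\le \E[s_r^+\Delta]$, since $\Delta\ge0$. Equality holds for $\tau=\1\{s_r>0\}$, which is measurable. The corresponding $y=y_L+\tau\Delta$ lies between $y_L$ and $y_U$, and it is integrable because $|y|\le|y_L|+|y_U|$.

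All expectations involved are finite. Indeed $|s_r|\le \|Q^{-1}r\|\,\|\tilde x\|$, so Assumption \ref{as:main} gives $\E[|s_r|(|y_L|+|y_U|)]<\infty$ and hence $\E[|s_r|\Delta]<\infty$.

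Since $r'\theta(y)=\E[s_ry]$ and $\Theta^I=\{\theta(y):y\in\Sel^1(Y)\}$, the supremum and infimum of $r'\theta$ over $\Theta^I$ equal those of $\E[s_ry]$ over selections. The infimum is obtained symmetrically with $\tau=\1\{s_r\ge 0\}$ replaced by $\tau=\1\{s_r>0\}^c$-type choices, i.e.\ $\tau=1$ where $s_r<0$. Subtracting the two bounds gives
\[
w_{\Theta^I}(r)=\E[(s_r^++s_r^-)\Delta]=\E[|s_r|\Delta].
\]
Compactness from Proposition \ref{prop:basic} ensures the breadth is well defined, although the explicit maximizers already show the bounds are attained.

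\textbf{Step 2: the score identity \eqref{eq:coord_score}.} This is the main step. Order coordinates so that $\tilde x=(\tilde x_j,\tilde x_{-j}')'$, and write $\gamma=\E[\tilde x_{-j}\tilde x_{-j}']^{-1}\E[\tilde x_{-j}\tilde x_j]$, so that $\tilde x_j^*=\tilde x_j-\gamma'\tilde x_{-j}$.

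I then verify directly that the random variable $u:=\tilde x_j^*/\sigma_j^2$ satisfies $\E[\tilde x u]=e_j$.

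\begin{itemize}
\item The projection normal equations give $\E[\tilde x_{-j}\tilde x_j^*]=0$.
\item Orthogonality then gives $\E[\tilde x_j\tilde x_j^*]=\E[(\tilde x_j^*+\gamma'\tilde x_{-j})\tilde x_j^*]=\sigma_j^2$.
\end{itemize}

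Hence $\E[\tilde x u]=e_j$. Because $u$ is linear in $\tilde x$, we can write $u=c'\tilde x$ with $c=(1,-\gamma')'/\sigma_j^2$. Then $Qc=\E[\tilde x\tilde x'c]=e_j$, so $c=Q^{-1}e_j$ and $u=e_j'Q^{-1}\tilde x=s_j$. This holds identically, and a fortiori $\PP$-a.s.

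It remains to justify $\sigma_j^2>0$. If $\sigma_j^2=0$, then $\tilde x_j=\gamma'\tilde x_{-j}$ a.s. The vector $a=(1,-\gamma')'\neq0$ would then satisfy $a'Qa=\E[(a'\tilde x)^2]=0$, contradicting positive definiteness of $Q$. Positive definiteness of the principal block follows the same way, by restricting $a'Qa>0$ to vectors supported on the $-j$ coordinates.

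\textbf{Step 3: combine.} Substituting $s_j=\tilde x_j^*/\sigma_j^2$ into Step 1 with $r=e_j$ gives
\[
w_{\Theta^I}(e_j)=\frac{\E[|\tilde x_j^*|\Delta]}{\sigma_j^2},
\]
which is \eqref{eq:coord_width}. The identification $w_{\Theta^I}(e_j)=\theta_j^{\max}-\theta_j^{\min}$ is simply the definition of breadth.

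I expect no real obstacle. The only points needing care are the measurability and integrability of the maximizing selection in Step 1, and keeping the block bookkeeping in Step 2 consistent, which is why I verify $Qc=e_j$ directly rather than invoking block-inverse formulas.
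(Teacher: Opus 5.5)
Your proof is correct and follows essentially the same route as the paper. You subtract the closed-form directional bounds to get $\E[|s_r|\Delta]$, then check that $\tilde x_j^*/\sigma_j^2$ is linear in $\tilde x$ with $\E[\tilde x\,\tilde x_j^*/\sigma_j^2]=e_j$, so its coefficient vector must be $Q^{-1}e_j$. Your additions fill in details the paper takes as given: explicit attainment and integrability of the bounds, and the derivation of $\sigma_j^2>0$ from positive definiteness of $Q$. One small fix: your sentence on the infimum is garbled, and the minimizing allocation you intend is $\tau=\1\{s_r<0\}$.
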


The proof for Proposition \ref{prop:width} is in the Appendix. The coordinate formula in Proposition \ref{prop:width} is a Frisch--Waugh--Lovell type representation. In a point-identified linear projection, the coefficient on $\tilde x_j$ can be computed using the residualized regressor $\tilde x_j^*$. Here the same residualization determines the width of the identified interval for $\theta_j$. The only additional object is the interval width $\Delta=y_U-y_L$, which measures how much freedom the analyst has in choosing a selection from the observed interval. Expression \eqref{eq:coord_width} shows that partial identification of $\theta_j$ is not determined by the average interval width alone. It depends on the joint distribution of $\Delta$ and the magnitude of the partialled-out regressor $|\tilde x_j^*|$. The quantity in \eqref{eq:coord_width} can be consistently estimated using sample analogs of the expectations.

\subsection{The Value of a Known Mean}\label{subsec:value_of_mean}

We now use the apparatus of \Cref{subsec:directional_bounds} to quantify the
impact of knowing $\E[y^*]=\kappa$ on the identification region, by comparing
the directional bounds of $\Theta^I_\kappa$ with those of $\Theta^I$.

\subsubsection*{Mean-constrained bounds}

Under Assumption \ref{assm:fixed_mean}, the allocation $\tau$ must satisfy
\[
\E[y_L+\tau\Delta]=\kappa,
\]
or equivalently
\[
\E[\tau\Delta]
=
\kappa-\E[y_L]
=:
\alpha.
\]
Since $\kappa \in \left[\E[y_L],\E[y_U]\right]$,
$0\leq\alpha\leq\E[\Delta]$.

The sharp directional upper bound is therefore obtained from
\begin{equation}\label{eq:mean_knapsack}
\sup_{0\leq\tau\leq1}
\E[s_r\tau\Delta]
\qquad
\text{subject to}
\qquad
\E[\tau\Delta]=\alpha.
\end{equation}

The problem in (\ref{eq:mean_knapsack}) has a useful allocation interpretation. The variable $\tau(\omega)$ determines what fraction of the available interval width $\Delta(\omega)$ is assigned to the upper endpoint in state $\omega$. The mean restriction fixes the total amount of width that must be allocated:
$$
\E[\tau\Delta]=\alpha.
$$
The objective assigns value $s_r(\omega)$ to one unit of allocated width in state $\omega$. Thus, states with larger $s_r$ are more valuable for maximizing the directional bound $r'\theta$.

It is therefore natural to measure the size of a state not by its probability alone, but by how much interval width it contributes. Define the finite nonnegative measure
$$
\mu(A)=\E[\Delta 1_A], \qquad A\in\mathcal F.
$$
Under $\mu$, a set of states receives a weight equal to the expected interval width available on that set. With this notation, the constraint $\E[\tau\Delta]=\alpha$ becomes
$$
\int_\Omega \tau\,d\mu=\alpha,
$$
and the objective becomes
$$
\E[s_r\tau\Delta]=\int_\Omega s_r\tau\,d\mu.
$$
Hence \eqref{eq:mean_knapsack} can be written as
$$
\sup_{\tau}\int_\Omega s_r\tau\,d\mu
\quad\text{subject to}\quad
0\leq \tau\leq 1,\qquad
\int_\Omega \tau\,d\mu=\alpha.
$$
This is a continuous fractional-knapsack problem: allocate exactly $\alpha$ units of width-weighted mass to the states with the largest values of $s_r$, allowing fractional allocation at the cutoff if necessary.

\begin{proposition}[Directional contraction from a known mean]
\label{prop:directional_contraction_mean}
Let \Cref{as:main} and \Cref{assm:fixed_mean} hold, and suppose $\Lambda=E[\Delta]>0$. Fix $r\in\R^{d+1}$ and let
$$
s_r=r'Q^{-1}\tilde{x}, \qquad
p_\kappa=\frac{\kappa-E[y_L]}{E[\Delta]}, \qquad
u_r=\bar\mu(s_r>0),
$$
where $\bar\mu(A)=E[\Delta 1_A]/E[\Delta]$. Let $F_r$ denote the distribution function of $s_r$ under the
width-weighted measure $\bar\mu$:
$$
F_r(z)=\bar\mu(s_r\leq z).
$$
Let $q_r$ denote the corresponding generalized quantile function:
$$
q_r(t)=\inf\{z\in\mathbb R:F_r(z)\geq t\}, \qquad t\in(0,1).
$$
The values of $q_r$ at $0$ and $1$ are immaterial for the integrals below.

The constrained upper support value is
$$
\sup_{\theta\in\Theta^I_\kappa} r'\theta
=
E[s_r y_L]
+
\Lambda\int_{1-p_\kappa}^{1} q_r(t)\,dt.
$$
The unconstrained upper support value is
$$
\sup_{\theta\in\Theta^I} r'\theta
=
E[s_r y_L]
+
\Lambda\int_{1-u_r}^{1} q_r(t)\,dt.
$$

Define the loss in the upper support value from imposing the known mean by
$$
\delta_\kappa(r)
=
\sup_{\theta\in\Theta^I} r'\theta
-
\sup_{\theta\in\Theta^I_\kappa} r'\theta .
$$
Then
\begin{equation}\label{eq:mean_band_integral}
\delta_\kappa(r)
=
\Lambda
\int_{\min\{1-p_\kappa,\,1-u_r\}}^{\max\{1-p_\kappa,\,1-u_r\}}
|q_r(t)|\,dt
\geq 0.
\end{equation}
Consequently, the breadth of the identified set contracts in direction $r$ by
\begin{equation}\label{eq:breadth_contraction_mean}
w_{\Theta^I}(r)-w_{\Theta^I_\kappa}(r)
=
\delta_\kappa(r)+\delta_\kappa(-r).
\end{equation}
\end{proposition}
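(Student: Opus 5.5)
The plan is to reduce everything to the width-weighted fractional knapsack \eqref{eq:mean_knapsack} and solve it with a quantile (bathtub) argument under $\bar\mu$. Write $y=y_L+\tau\Delta$, so that $r'\theta(y)=\E[s_r y_L]+\Lambda\int_\Omega s_r\tau\,d\bar\mu$. The mean constraint becomes $\int\tau\,d\bar\mu=p_\kappa$, with $p_\kappa\in[0,1]$. We may take $\tau=0$ on $\{\Delta=0\}$ without loss, since those states contribute to neither the objective nor the constraint.

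\textbf{Step 1 (bathtub lemma).} For any measurable $0\le\tau\le1$ with $\int\tau\,d\bar\mu=p$, show that
\[
\int s_r\tau\,d\bar\mu\le\int_{1-p}^1 q_r(t)\,dt .
\]
Show also that equality is attained by the threshold allocation $\tau^*=1\{s_r>c\}+\gamma 1\{s_r=c\}$, where $c=q_r(1-p)$ and $\gamma\in[0,1]$ is chosen so that the total mass is $p$. The inequality follows from the pointwise bound $(s_r-c)(\tau-\tau^*)\le0$: integrate it and use equal total mass. The value of $\tau^*$ is then identified with the upper-tail quantile integral through the standard representation $\int_{1-p}^1 q_r=\E_{\bar\mu}[s_r 1\{s_r>c\}]+c\,(p-\bar\mu(s_r>c))$. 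The edge cases $p\in\{0,1\}$ are checked directly. Combining this with $\Theta^I_\kappa=\{Q^{-1}\E[\tilde x y]\}$ gives the constrained formula.

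\textbf{Step 2 (unconstrained formula).} From \eqref{eq:unconstrained_upper}, the unconstrained supremum is $\E[s_ry_L]+\Lambda\int s_r^+\,d\bar\mu$. Apply Step 1 with $p=u_r$: the threshold is $c=0$, and the ties at $0$ contribute nothing. This gives $\Lambda\int s_r^+\,d\bar\mu=\Lambda\int_{1-u_r}^1 q_r$. Equivalently, note that $q_r(t)>0$ exactly for $t>1-u_r$, up to null sets, which also gives the alternative form $\int_0^1 q_r^+$.

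\textbf{Step 3 (the loss formula).} Subtract the two expressions:
\[
\delta_\kappa(r)=\Lambda\Bigl(\int_{1-u_r}^1 q_r-\int_{1-p_\kappa}^1 q_r\Bigr).
\]
If $p_\kappa<u_r$, this equals $\Lambda\int_{1-u_r}^{1-p_\kappa}q_r$, and on that range $t\ge 1-u_r$ forces $q_r(t)\ge0$, so the integrand equals $|q_r|$. If $p_\kappa>u_r$, it equals $-\Lambda\int_{1-p_\kappa}^{1-u_r}q_r$, and there $t\le1-u_r$ gives $q_r\le0$, so again the integrand is $|q_r|$. The sign claim is that $q_r(t)>0$ iff $F_r(0)<t$, i.e.\ iff $t>1-u_r$; this holds because $1-u_r=\bar\mu(s_r\le0)=F_r(0)$. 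This yields \eqref{eq:mean_band_integral} and nonnegativity.

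\textbf{Step 4 (breadth).} Use $\inf_C r'\theta=-\sup_C(-r)'\theta$. Then $w_C(r)=\sup_C r'\theta+\sup_C(-r)'\theta$ for $C=\Theta^I$ and $C=\Theta^I_\kappa$, and subtracting gives \eqref{eq:breadth_contraction_mean}. Nonemptiness of $\Theta^I_\kappa$ (Proposition \ref{prop:mean_hyperplane}) ensures that the suprema are finite.

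The main obstacle is Step 1 in the presence of atoms of $s_r$ under $\bar\mu$. The identity relating the threshold allocation to $\int_{1-p}^1 q_r$ needs care there: I would prove it via the change of variables $s_r\overset{d}{=}q_r(U)$ with $U$ uniform under $\bar\mu$, which is valid since $\bar\mu$ is a probability measure. Integrability of $s_r\Delta$ follows from Assumption \ref{as:main}, because $|s_r|\le C\|\tilde x\|$.
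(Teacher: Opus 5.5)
Your proposal is correct and follows essentially the same route as the paper. Both reduce to the width-weighted fractional knapsack, evaluate the constrained budget $p_\kappa$ and the unconstrained budget $u_r$ as upper-tail quantile integrals, subtract using the sign change of $q_r$ at $F_r(0)=1-u_r$, and get the breadth identity from $\inf_C r'\theta=-\sup_C(-r)'\theta$. The only real difference is that you prove the quantile form of the bathtub lemma directly, via the exchange inequality and $s_r\overset{d}{=}q_r(U)$, whereas the paper cites Lemma~\ref{lem:knapsack}(iii), whose proof only sketches that step. In Step 2 the threshold $q_r(1-u_r)$ need not literally equal $0$, but your exchange argument works for any $c$ with $\bar\mu(s_r>c)\le p\le\bar\mu(s_r\ge c)$, so using $c=0$ there is valid.
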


The proposition expresses the tightening from the known mean as an area under the width-weighted quantile function of the score $s_r$. The unrestricted optimizer corresponds to the quantile level $1-u_r$, i.e. the zero cutoff $s_r=0$. The point $1-p_\kappa$ is the cutoff required by the known mean. The contraction is the value lost when moving from the unconstrained cutoff to the mean-constrained cutoff. Hence the restriction has no first-order effect when the two cutoffs are close and the distribution of $s_r$ is smooth around zero.

The formula has a direct plug-in analogue. In a sample, one computes
$\hat s_{ri}=r'\hat Q^{-1}\tilde x_i$, weights observations by
$\Delta_i/\sum_j\Delta_j$, sorts the scores, and evaluates the weighted
area under the empirical quantile function between
$1-\hat p_\kappa$ and $1-\hat u_r$. The full sample formula is given in
Appendix \ref{app:numerical}.

We now show that small deviations in the width budget have only second order effect on the directional support value.
\begin{corollary}[Local cost of the mean restriction]
\label{cor:local_mean_contraction}
Suppose that the distribution of $s_r$ under $\bar\mu$ has a density $f_r$
in a neighborhood of zero, and that $f_r$ is continuous with $f_r(0)>0$.
Then, as $p_\kappa\to u_r$,
$$
\delta_\kappa(r)
=
\frac{\Lambda}{2f_r(0)}
(p_\kappa-u_r)^2
+
o\left((p_\kappa-u_r)^2\right).
$$
\end{corollary}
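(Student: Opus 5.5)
We start from the integral representation in \eqref{eq:mean_band_integral} of Proposition \ref{prop:directional_contraction_mean}:
\[
\delta_\kappa(r)=\Lambda\int_{I_\kappa}|q_r(t)|\,dt,\qquad I_\kappa=\bigl[\min\{1-p_\kappa,1-u_r\},\,\max\{1-p_\kappa,1-u_r\}\bigr].
\]
The interval $I_\kappa$ has length $|p_\kappa-u_r|$ and endpoint $t_0:=1-u_r$. The goal is a local expansion of the quantile function $q_r$ around $t_0$.

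\textbf{Step 1 (behaviour of $q_r$ near $t_0$).} By hypothesis $F_r$ has a continuous density $f_r$ on a neighbourhood $(-\eta,\eta)$ of zero with $f_r(0)>0$. Shrinking $\eta$, we may take $f_r\ge c>0$ there, so $F_r$ is continuous and strictly increasing on that neighbourhood. In particular $\bar\mu(s_r=0)=0$, so $F_r(0)=\bar\mu(s_r\le 0)=1-u_r=t_0$. By the inverse function theorem for monotone $C^1$ maps, for $t$ in a neighbourhood $(t_0-\epsilon,t_0+\epsilon)$ we have $q_r(t)=F_r^{-1}(t)$, with $q_r(t_0)=0$ and $q_r$ differentiable with derivative $1/f_r(q_r(t))$. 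This derivative is continuous at $t_0$ with value $1/f_r(0)$. Hence
\[
q_r(t)=\frac{t-t_0}{f_r(0)}+o(|t-t_0|).
\]
The one point needing care is to confirm that the generalized inverse coincides with the true inverse near $t_0$. This holds because $F_r$ maps $(-\eta,\eta)$ homeomorphically onto an open interval containing $t_0$.

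\textbf{Step 2 (integrate).} Write $h=p_\kappa-u_r\to 0$. The interval $I_\kappa$ runs between $t_0$ and $t_0-h$, so for $|h|<\epsilon$ every $t\in I_\kappa$ satisfies $|t-t_0|\le|h|$. Step 1 then gives
\[
|q_r(t)|=\frac{|t-t_0|}{f_r(0)}+o(|h|)
\]
uniformly on $I_\kappa$. The sign of $q_r$ is constant on each side of $t_0$, but taking absolute values removes any need to track it. Integrating over an interval of length $|h|$ gives
\[
\int_{I_\kappa}|q_r(t)|\,dt=\frac{1}{f_r(0)}\int_0^{|h|}u\,du+o(h^2)=\frac{h^2}{2f_r(0)}+o(h^2).
\]
Multiplying by $\Lambda$ yields the claim.

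\textbf{Main obstacle.} The main obstacle is the uniformity of the $o(\cdot)$ term in Step 1, which is what allows us to integrate it. We obtain it by applying the mean value theorem to $q_r$ on $[t_0,t]$:
\[
q_r(t)=\frac{t-t_0}{f_r(q_r(\xi))}
\]
for some $\xi$ between $t_0$ and $t$. Continuity of $f_r$ at $0$ together with continuity of $q_r$ at $t_0$ gives
\[
\sup_{|t-t_0|\le|h|}\left|\frac{1}{f_r(q_r(\xi))}-\frac{1}{f_r(0)}\right|\to 0,
\]
which is exactly the required uniform control.
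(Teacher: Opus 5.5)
Your proof is correct and follows the same route as the paper: it starts from the band-integral representation \eqref{eq:mean_band_integral}, linearizes $q_r$ at $1-u_r$ with slope $1/f_r(0)$, and integrates over a band of length $|p_\kappa-u_r|$. You also supply details the paper leaves implicit, namely that there is no atom at zero (so $q_r(1-u_r)=0$) and that the generalized and true inverses agree locally. The uniformity you flag as the main obstacle already follows from the little-$o$ at the single point $1-u_r$, because $|t-(1-u_r)|\le|p_\kappa-u_r|$ on the band, so the mean value theorem step is harmless but not needed.
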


\subsection{Identification with Transformations}\label{subsec:transformations}

Up to this point we have studied the best linear predictor of the latent outcome $y^*$. In many applications, however, the parameter of interest is instead the best linear predictor of a transformation $f(y)$, or a known population moment of $f(y^*)$ is used as auxiliary information about $y^*$ itself. Examples include logarithms of income, indicator functions, or nonlinear utility transformations. We consider both uses of a transformation $f$ in this subsection, since they share the same regularity conditions and rely on the same splicing argument, but lead to identification regions with markedly different geometry.

Let
\[
f:\R\to\R
\]
be Borel measurable.

\begin{assumption}[Uniform integrability of the transformation] \label{as:int_f}
There exists a nonnegative measurable random variable $F$ such that
$|f(y)|\leq F$ for every $y\in \Sel^1(Y)$ $P$-a.s., and
$$
E[F^2]<\infty.
$$
\end{assumption}
Since Assumption \ref{as:main} already gives $E[|\tilde x|^2]<\infty$, Assumption \ref{as:int_f} also implies $E[|\tilde x|F]<\infty$ by Cauchy-Schwarz.

\begin{assumption}[Atomlessness]\label{as:atomless}
The probability space $(\Omega,\mathcal F,\PP)$ is atomless.
\end{assumption}

\subsubsection*{Retargeting to the transformed outcome}

We first retarget the object of interest to the best linear predictor of $f(y^*)$ itself,
\[
\theta_f=Q^{-1}\E[\tilde x f(y^*)],
\]
maintaining the restriction
\[
\E[y^*]=\kappa.
\]
The key difference from the previous section is that the objective now depends on $f(y^*)$, while the auxiliary information continues to constrain the untransformed variable through $E[y^*]=\kappa$. Unless $f$ is affine, the objective is nonlinear in $\tau$, so the fractional-knapsack representation of Section \ref{subsec:value_of_mean} no longer applies. Proposition \ref{prop:transformed_convexity} below shows that convexity nevertheless survives under nonatomlessness. 

Define,
\begin{equation}\label{eq:transformed_moment_set}
\mathcal M_f(\kappa)
=
\left\{
\E[\tilde x f(y)]:
y\in\Sel^1(Y\mid\kappa)
\right\}
\end{equation}
and
\begin{equation}\label{eq:transformed_theta_set}
\Theta^I_{f\mid\kappa}
=
Q^{-1}\mathcal M_f(\kappa).
\end{equation}

Although the restriction on admissible selections is unchanged, the optimization problem is generally no longer linear because the objective depends nonlinearily on the allocation variable through
\[
f(y_L+\tau\Delta).
\]
Unless $f$ is affine, the optimization is no longer linear in $\tau$. Consequently, the fractional-knapsack representation of  \Cref{subsec:value_of_mean}. The following Proposition shows that despite this loss of linearity, convexity of the identified region is preserved.

\begin{proposition}[Convexity under arbitrary Borel transformations]
\label{prop:transformed_convexity}
Under Assumptions \ref{as:main}, \ref{assm:fixed_mean},
\ref{as:int_f}, and \ref{as:atomless}, the sets
\[
\mathcal M_f(\kappa)
\quad\text{and}\quad
\Theta^I_{f\mid\kappa}
\]
are convex.
\end{proposition}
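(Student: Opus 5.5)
Since $\Theta^I_{f\mid\kappa}=Q^{-1}\mathcal M_f(\kappa)$ is a linear image of $\mathcal M_f(\kappa)$, it suffices to show that $\mathcal M_f(\kappa)$ is convex. The plan is a Lyapunov-type splicing argument; note that pointwise convex combinations of selections do not work. Fix $y_1,y_2\in\Sel^1(Y\mid\kappa)$ and $\lambda\in(0,1)$. The mixture $\lambda y_1+(1-\lambda)y_2$ is again a selection with mean $\kappa$, but $f$ of it does not average $f(y_1)$ and $f(y_2)$. Instead I would look for a measurable set $A\in\mathcal F$ and splice
\[
y_A = y_1\1_A + y_2\1_{A^c}.
\]
Since $y_L\le y_A\le y_U$ a.s., $y_A$ is integrable and lies in $\Sel^1(Y)$. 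The aim is to choose $A$ so that $\E[y_A]=\kappa$ and $\E[\tilde x f(y_A)]=\lambda\E[\tilde x f(y_1)]+(1-\lambda)\E[\tilde x f(y_2)]$ hold simultaneously.

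The key step is to apply Lyapunov's convexity theorem to the finite vector measure
\[
\nu(B)=\Bigl(\E[y_1\1_B],\ \E[y_2\1_B],\ \E[\tilde x f(y_1)\1_B],\ \E[\tilde x f(y_2)\1_B]\Bigr)\in\R^{2+2(d+1)},\qquad B\in\mathcal F.
\]
I first check that $\nu$ is well defined and countably additive. The first two coordinates are finite because $y_1,y_2\in L^1$. The remaining coordinates are finite because $|f(y_i)|\le F$ a.s. by Assumption~\ref{as:int_f}, and $\E[\|\tilde x\|F]<\infty$ by Cauchy--Schwarz with Assumption~\ref{as:main}. Each coordinate is absolutely continuous with respect to $\PP$, which is atomless by Assumption~\ref{as:atomless}, so $\nu$ is nonatomic. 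Lyapunov's theorem then says the range of $\nu$ is convex and compact. Because $\nu(\varnothing)=0$ and $\nu(\Omega)$ both lie in this range, there is an $A$ with $\nu(A)=\lambda\nu(\Omega)$ and hence $\nu(A^c)=(1-\lambda)\nu(\Omega)$.

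It then remains to read off the two required identities. For the mean,
\[
\E[y_A]=\E[y_1\1_A]+\E[y_2\1_{A^c}]=\lambda\kappa+(1-\lambda)\kappa=\kappa,
\]
so $y_A\in\Sel^1(Y\mid\kappa)$. For the cross moments, $f(y_A)=f(y_1)\1_A+f(y_2)\1_{A^c}$ holds pointwise, so
\[
\E[\tilde x f(y_A)]=\lambda\E[\tilde x f(y_1)]+(1-\lambda)\E[\tilde x f(y_2)].
\]
This point therefore belongs to $\mathcal M_f(\kappa)$, which proves convexity. Nonemptiness is not needed for convexity, though it follows from the $y_\kappa$ construction given earlier.

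I expect the main obstacle to be the justification of Lyapunov's theorem rather than the splicing itself. Its standard form requires a finite-dimensional vector measure that is nonatomic in every coordinate. I would verify this coordinatewise: a signed measure with density with respect to an atomless $\PP$ has no atoms. Integrability is also needed for $f(y_1)$ and $f(y_2)$ specifically, which is exactly what the uniform bound $F$ of Assumption~\ref{as:int_f} supplies. Measurability of $f(y_i)$ follows because $f$ is Borel.
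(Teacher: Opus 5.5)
Your proposal is correct and is essentially the paper's proof: a Lyapunov splice $y_1\1_A+y_2\1_{A^c}$, with $A$ chosen so that $\nu(A)=\lambda\nu(\Omega)$ for a nonatomic vector measure that encodes both the mean constraint and the cross moments $\tilde x f(\cdot)$. The only difference is cosmetic: you track $y_1$, $y_2$, $\tilde x f(y_1)$ and $\tilde x f(y_2)$ separately in a $2(d+2)$-dimensional measure, whereas the paper uses the $(d+2)$-dimensional measure with density $\bigl(y_1-y_0,\ \tilde x\{f(y_1)-f(y_0)\}\bigr)$, and both give the same conclusion.
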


The proposition shows that the geometric properties established earlier survive the introduction of arbitrary measurable transformations. Although the optimization problem becomes nonlinear, the attainable moment set remains convex because convexity is generated by the atomless probability space through Lyapunov's theorem rather than by linearity of $f$.

For every $r\in\R^{d+1}$, let
\begin{equation}\label{eq:transformed_support}
h_{\Theta^I_{f\mid\kappa}}(r)
=
\sup_{\theta\in\Theta^I_{f\mid\kappa}}r'\theta
=
\sup_{y\in\Sel^1(Y\mid\kappa)}
\E[s_r f(y)].
\end{equation}

Convexity of the transformation itself yields an additional implication. Since every admissible selection satisfies $E[y]=\kappa$, Jensen's inequality implies
\begin{equation}\label{eq:jensen_restriction}
\E[\tilde x]'\theta_f
=
\E[f(y)]
\geq
f(\E[y])
=
f(\kappa).
\end{equation}
If $f$ is concave, the inequality is reversed. Thus, even though the restriction is imposed only on the mean of the latent outcome, convexity (concavity) of $f$ automatically generates a lower (upper) bound on the mean of every admissible transformed outcome.

The preceding discussion assumes that only $E[y^*]$ is known. If the researcher additionally knows the population mean of the transformed outcome itself,
\[
\E[f(y^*)]=\kappa_f,
\]
then the transformed problem reduces to the same hyperplane characterization developed in \Cref{sec:knownExpectation}, namely
\[
\E[\tilde x]'\theta_f=\kappa_f.
\]
Thus, auxiliary information about the transformed outcome can be incorporated exactly as before.
\subsubsection*{Auxiliary moment restrictions without retargeting}

We now consider the complementary case: the target remains the original
$\theta=Q^{-1}\E[\tilde x y^*]$, and a known moment of the transformation,
$\E[f(y^*)]=\kappa_f$, is used purely as auxiliary information narrowing
the selections admissible for $y^*$. No restriction on $\E[y^*]$ itself
is imposed.

Define the achievable moment set
\[
\mathcal K_f=\{\E[f(y)]:y\in\Sel^1(Y)\},
\]
and, for $\kappa_f\in\mathcal K_f$,
\[
\Sel^1(Y\mid f,\kappa_f)
=
\left\{
y\in\Sel^1(Y):\E[f(y)]=\kappa_f
\right\}.
\]
The compatibility condition, $\kappa_f\in\mathcal K_f$, guarantees this set is nonempty; unlike the linear case of Assumption~\ref{assm:fixed_mean}, $\mathcal K_f$ need not be an interval with known endpoints in closed form, so compatibility is stated directly as membership in
$\mathcal K_f$ rather than via an explicit interval.

The resulting sharp identification region for the \emph{original}
target is
\begin{equation}\label{eq:moment_aux_set}
\Theta^I_{\kappa_f}
=
\left\{
Q^{-1}\E[\tilde x y]:
y\in\Sel^1(Y\mid f,\kappa_f)
\right\}.
\end{equation}

\begin{proposition}[Convexity under an auxiliary moment restriction]
\label{prop:moment_aux_convexity}
Under Assumptions~\ref{as:main}, \ref{as:int_f}, and
\ref{as:atomless}, $\mathcal K_f$ is convex, and for every
$\kappa_f\in\mathcal K_f$, $\Theta^I_{\kappa_f}$ is nonempty and convex.
\end{proposition}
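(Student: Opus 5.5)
The plan is a splicing argument: mix two selections on a set chosen by Lyapunov's convexity theorem (available under Assumption~\ref{as:atomless}), so that the mixture reproduces every relevant expectation of the convex combination simultaneously. Fix $y^1,y^2\in\Sel^1(Y)$ and $\lambda\in(0,1)$. Consider the finite vector measure on $\mathcal F$
\[
\nu(A)=\Bigl(\E[f(y^1)\1_A],\ \E[f(y^2)\1_A],\ \E[\tilde x y^1\1_A],\ \E[\tilde x y^2\1_A]\Bigr)\in\R^{2+2(d+1)}.
\]
Each component is a finite signed measure. The first two are finite by Assumption~\ref{as:int_f}, since $|f(y^i)|\le F$ and $\E[F]\le \E[F^2]^{1/2}<\infty$. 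The last two are finite because $|\tilde x y^i|\le\|\tilde x\|(|y_L|+|y_U|)$, which is integrable by Assumption~\ref{as:main}.

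Since $(\Omega,\mathcal F,\PP)$ is atomless, every component of $\nu$ is absolutely continuous with respect to $\PP$ and hence nonatomic. Lyapunov's theorem then says the range of $\nu$ is convex, and it contains $\nu(\varnothing)=0$ and $\nu(\Omega)$. So there is $A\in\mathcal F$ with $\nu(A)=\lambda\nu(\Omega)$. Define the spliced selection
\[
y^\lambda=y^1\1_A+y^2\1_{A^c}.
\]
It is measurable, integrable, and lies in $[y_L,y_U]$ almost surely, so $y^\lambda\in\Sel^1(Y)$.

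The key identity is $f(y^\lambda)=f(y^1)\1_A+f(y^2)\1_{A^c}$ pointwise. This holds because $f$ is applied statewise, and it is exactly what removes any need for linearity of $f$. From it,
\[
\E[f(y^\lambda)]=\lambda\E[f(y^1)]+(1-\lambda)\E[f(y^2)],
\]
and the analogous identity holds for $\E[\tilde x y^\lambda]$. To get the factor $(1-\lambda)$ on the $A^c$ part, use $\E[g\1_{A^c}]=\E[g]-\E[g\1_A]=(1-\lambda)\E[g]$ for $g=f(y^2)$ and $g=\tilde x y^2$.

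Convexity of $\mathcal K_f$ follows immediately from the first identity. For $\Theta^I_{\kappa_f}$, take $y^1,y^2\in\Sel^1(Y\mid f,\kappa_f)$. Then $\E[f(y^\lambda)]=\kappa_f$, so $y^\lambda$ is again admissible, and
\[
Q^{-1}\E[\tilde x y^\lambda]=\lambda\,\theta(y^1)+(1-\lambda)\,\theta(y^2).
\]
Hence $\Theta^I_{\kappa_f}$ is convex, being the image under the linear map $Q^{-1}$ of a convex set of cross moments. Nonemptiness is immediate: $\kappa_f\in\mathcal K_f$ supplies some $y$ with $\E[f(y)]=\kappa_f$, and $Q^{-1}\E[\tilde x y]$ exists because $Q$ is nonsingular.

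The main obstacle is the careful application of Lyapunov's theorem. All components must be genuinely finite, nonatomic signed measures; vector-valued components are handled by splitting into coordinates, and signed measures by splitting into their positive and negative parts, each of which is nonatomic. The same set $A$ must also serve every moment at once, both the constraint moment and the target cross-moments, which is why all of them are stacked into the single vector measure $\nu$. This is the same splicing device used for Proposition~\ref{prop:transformed_convexity}, so that argument could be cited directly with the roles of $f$ and the identity map exchanged between the constraint and the target.
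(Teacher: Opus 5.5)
Your proof is correct and follows the paper's argument: you splice two selections on an event supplied by Lyapunov's theorem, chosen so that the constraint moment $\E[f(\cdot)]$ and the cross-moments $\E[\tilde x\,\cdot\,]$ are interpolated simultaneously, and the pointwise identity $f(y^\lambda)=f(y^1)\1_A+f(y^2)\1_{A^c}$ replaces any need for linearity of $f$. The only cosmetic difference is that you stack the moments of $y^1$ and $y^2$ separately, whereas the paper stacks their differences $f(y_1)-f(y_0)$ and $\tilde x(y_1-y_0)$; this changes nothing of substance.
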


The proof applies Lyapunov's theorem for vector measures to the pair $(f(y_1)-f(y_0),\,\tilde x(y_1-y_0))$ for any two selections $y_0,y_1\in\Sel^1(Y\mid f,\kappa_f)$: since both satisfy $\E[f(y_0)]=\E[f(y_1)]=\kappa_f$, the first component of this pair has mean zero, so splicing $y_0$ and $y_1$ along the sets furnished by Lyapunov's theorem preserves the moment restriction exactly while tracing out the line segment between $Q^{-1}\E[\tilde x y_0]$ and $Q^{-1}\E[\tilde x y_1]$ in $\theta$-space; see \cref{app:background} for the background result and  \cref{app:proofs} for the full argument.

The geometric distinction is important. Unlike Proposition~\ref{prop:mean_hyperplane}, $\Theta^I_{\kappa_f}$ need not collapse to a lower-dimensional set. The restriction $\E[y^*]=\kappa$ is linear in $\theta$ because $\tilde x$'s first
coordinate is $1$, so $\E[\tilde x]'\theta=\E[y]$ directly, pinning $\theta$ to an affine hyperplane. A nonlinear moment restriction $\E[f(y^*)]=\kappa_f$ has no such direct algebraic counterpart in $\theta$-space: it removes some selections from $\Sel^1(Y)$ without
confining the resulting $\theta$ values to any fixed hyperplane, so $\Theta^I_{\kappa_f}$ is generically full-dimensional -- a narrowed region rather than a segment. This is the same regularity condition and the same splicing argument as Proposition~\ref{prop:transformed_convexity}, applied to a different target; the two propositions differ only in which functional of $y$ the objective retains and which the constraint restricts, and it is exactly this difference that separates a lower-dimensional segment from a full-dimensional band.

\begin{remark}[Quantile information]\label{rmk:quantile-information}
Quantile information is a special case of a transformation restriction, with $f(y)=\1\{y\leq m\}$ or a closely related one-sided indicator. Because such indicator transformations are discontinuous, their support problem requires separate treatment.  We therefore state the general transformation and quantile extensions, including the corresponding Lagrangian representation, in \cref{app:extensions}.
\end{remark}
\subsection{Information on \texorpdfstring{$\E[y^*\mid x_1]$}{E[y*|x1]}}\label{subsec:conditional_x1}
We now consider a stronger form of auxiliary information in which the researcher knows the conditional mean of the latent outcome given a subset of the covariates, and we show how this restriction reduces the full identification problem to the coefficient block associated with the remaining covariates.

Partition
\[
x=(x_1',x_2')',
\]
where
\[
x_1\in\R^{d_1},
\qquad
x_2\in\R^{d_2},
\qquad
d_1+d_2=d.
\]
Define
\[
w
=
\begin{pmatrix}
1\\x_1
\end{pmatrix}.
\]
Then
\[
\tilde x
=
\begin{pmatrix}
w\\x_2
\end{pmatrix}.
\]

\begin{assumption}[Known conditional sub-moment]
\label{assm:conditional_x1}
The conditional mean of the latent outcome given $x_1$ is known:
\[
\E[y^*\mid x_1]=\kappa(x_1)
\qquad
\PP\text{-a.s.},
\]
where
\[
\E[y_L\mid x_1]
\leq
\kappa(x_1)
\leq
\E[y_U\mid x_1]
\qquad
\PP\text{-a.s.}
\]
\end{assumption}

Define
\[
\Sel^1(Y\mid\kappa(x_1))
=
\left\{
y\in\Sel^1(Y):
\E[y\mid x_1]=\kappa(x_1)
\quad\PP\text{-a.s.}
\right\}.
\]

This set is nonempty. Let
\[
D(x_1)=\E[\Delta\mid x_1]
\]
and define
\[
\lambda(x_1)
=
\begin{cases}
\dfrac{
\kappa(x_1)-\E[y_L\mid x_1]
}{
D(x_1)
},
&
D(x_1)>0,
\\[3mm]
0,
&
D(x_1)=0.
\end{cases}
\]
On the event $D(x_1)=0$, nonnegativity of $\Delta$ implies $\Delta=0$ conditionally almost surely, and compatability therefore gives $\kappa(x_1)=E[y_L|x_1]=E[y_U|x_1]$. Then from the definition above, $0\leq\lambda(x_1)\leq1$ almost surely and
\[
y_\kappa
=
y_L+\lambda(x_1)\Delta
\]
satisfies
\[
\E[y_\kappa\mid x_1]=\kappa(x_1).
\]

The sharp identified set is
\begin{equation}\label{eq:conditional_x1_sharp}
\Theta^I_{\kappa(x_1)}
=
\left\{
Q^{-1}\E[\tilde x y]:
y\in\Sel^1(Y\mid\kappa(x_1))
\right\}.
\end{equation}

Partition
\[
\theta
=
\begin{pmatrix}
\theta_1\\
\theta_2
\end{pmatrix},
\qquad
\theta_1\in\R^{d_1+1},
\quad
\theta_2\in\R^{d_2},
\]
and define
\[
\Sigma_{11}=\E[ww'],
\qquad
\Sigma_{12}=\E[wx_2'],
\]
\[
\Sigma_{21}=\Sigma_{12}',
\qquad
\Sigma_{22}=\E[x_2x_2'].
\]
Then
\[
Q
=
\begin{pmatrix}
\Sigma_{11}&\Sigma_{12}\\
\Sigma_{21}&\Sigma_{22}
\end{pmatrix}.
\]

Define
\[
g_1
=
\E[w\kappa(x_1)]
\]
and the linear-projection residual
\[
x_2^*
=
x_2-\Sigma_{21}\Sigma_{11}^{-1}w.
\]
Then
\[
\E[x_2^*w']=0
\]
and
\[
\Sigma_{2\cdot1}
=
\E[x_2^*x_2^{*\prime}]
=
\Sigma_{22}
-
\Sigma_{21}\Sigma_{11}^{-1}\Sigma_{12}
\]
is positive definite.

Define
\begin{equation}\label{eq:theta2_set}
\Theta_2^I
=
\left\{
\Sigma_{2\cdot1}^{-1}\E[x_2^*y]:
y\in\Sel^1(Y\mid\kappa(x_1))
\right\}.
\end{equation}

\begin{proposition}[Sharp FWL representation]
\label{prop:conditional_fwl}
Under Assumptions \ref{as:main} and \ref{assm:conditional_x1},
\begin{equation}\label{eq:fwl_lifted_set}
\Theta^I_{\kappa(x_1)}
=
\left\{
(\theta_1',\theta_2')':
\begin{array}{l}
\theta_2\in\Theta_2^I,\\[1mm]
\theta_1
=
\Sigma_{11}^{-1}
(g_1-\Sigma_{12}\theta_2)
\end{array}
\right\}.
\end{equation}
This representation holds whether $x_1$ has finite, countable, or continuous
support.
\end{proposition}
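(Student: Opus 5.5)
The argument is a double inclusion built on the population Frisch--Waugh--Lovell algebra for the block partition of $Q$. Fix any $y\in\Sel^1(Y\mid\kappa(x_1))$ and let $\theta(y)=Q^{-1}\E[\tilde x y]$. I will write the normal equations blockwise:
\[
\Sigma_{11}\theta_1+\Sigma_{12}\theta_2=\E[wy],\qquad
\Sigma_{21}\theta_1+\Sigma_{22}\theta_2=\E[x_2y].
\]
The first key observation is that $\E[wy]$ does not depend on which admissible selection is chosen. Since $w=(1,x_1')'$ is $\sigma(x_1)$-measurable and $\E[\|w\|\,|y|]<\infty$ by Assumption \ref{as:main}, the tower property gives $\E[wy]=\E[w\,\E[y\mid x_1]]=\E[w\kappa(x_1)]=g_1$. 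This step uses only conditional expectations, so it is indifferent to whether $x_1$ has finite, countable, or continuous support. That is where the last sentence of the proposition comes from.

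Next I solve the first block for $\theta_1$, which gives $\theta_1=\Sigma_{11}^{-1}(g_1-\Sigma_{12}\theta_2)$. Here $\Sigma_{11}$ is a principal block of the positive definite $Q$, so it is invertible. Substituting into the second block yields
\[
\Sigma_{2\cdot1}\theta_2=\E[x_2y]-\Sigma_{21}\Sigma_{11}^{-1}\E[wy].
\]
The right-hand side equals $\E[x_2^*y]$ by the definition $x_2^*=x_2-\Sigma_{21}\Sigma_{11}^{-1}w$. Hence $\theta_2(y)=\Sigma_{2\cdot1}^{-1}\E[x_2^*y]$. For this I need $\Sigma_{2\cdot1}$ invertible, which holds because it is the Schur complement of $\Sigma_{11}$ in $Q$. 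I also note that $\E[x_2^*y]$ is finite since $x_2^*$ is a linear combination of components of $\tilde x$.

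The two inclusions then follow. For $\subseteq$: every $\theta\in\Theta^I_{\kappa(x_1)}$ equals $\theta(y)$ for some admissible $y$. By the computation above, $\theta_2\in\Theta_2^I$ and $\theta_1$ satisfies the stated affine relation. For $\supseteq$: take $\theta_2\in\Theta_2^I$, choose $y\in\Sel^1(Y\mid\kappa(x_1))$ with $\theta_2=\Sigma_{2\cdot1}^{-1}\E[x_2^*y]$, and set $\theta_1$ by the affine formula. The same $y$ generates $\theta(y)$. By the block computation its second block is $\theta_2$, and its first block is $\Sigma_{11}^{-1}(g_1-\Sigma_{12}\theta_2)=\theta_1$, so $(\theta_1',\theta_2')'\in\Theta^I_{\kappa(x_1)}$.

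I do not expect a serious obstacle. The only delicate point is justifying the tower-property step $\E[wy]=\E[w\kappa(x_1)]$ under mere integrability, i.e.\ that $w\,\E[y\mid x_1]$ is integrable and equals $\E[wy\mid x_1]$ in expectation. I would handle it componentwise. Each component $x_{1k}$ is $\sigma(x_1)$-measurable, and $x_{1k}y$ is integrable because $\E[|x_{1k}|\,|y|]\le\E[\|\tilde x\|(|y_L|+|y_U|)]<\infty$. Conditional expectation then permits pulling out $x_{1k}$, since the product is integrable. Nonemptiness of $\Sel^1(Y\mid\kappa(x_1))$ was shown before the statement, so both sides of \eqref{eq:fwl_lifted_set} are nonempty.
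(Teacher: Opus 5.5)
Your proof is correct and follows essentially the same route as the paper. The tower property pins $\E[wy]=g_1$ for every admissible selection, you solve the first block of the normal equations for $\theta_1$, substitution gives $\theta_2=\Sigma_{2\cdot1}^{-1}\E[x_2^*y]$, and the reverse inclusion reuses the same selection $y$; your componentwise justification for pulling $w$ out of the conditional expectation is a small addition that the paper leaves implicit.
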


Consequently, $\Theta^I_{\kappa(x_1)}$ is the image of $\Theta_2^I$ under an injective affine map and therefore has affine dimension at most $d_2$. In particular, the conditional mean restriction imposes $d_1 +1$ exact linear restrictions on the full coefficient vector.

If $x_1$ has finite support, the conditional restriction is a finite collection of cell-specific moment restrictions. If $x_1$ is continuously distributed, it is an infinite-dimensional conditional restriction. The block normal-equation decomposition is the same in both cases.

An alternative residual is
\[
\bar x_2=x_2-\E[x_2\mid x_1].
\]
It satisfies
\[
\E[\bar x_2\mid x_1]=0
\]
and therefore
\[
\E[\bar x_2y]
=
\E[\bar x_2\{y-\kappa(x_1)\}]
\]
for every conditionally admissible selection, provided that these expectations exist. This conditional-mean
residualization corresponds to partialling out an unrestricted function of $x_1$ and should be distinguished from the finite-dimensional linear residual $x_2^*$.

\subsection{Information on \texorpdfstring{$\E[y^*\mid v]$}{E[y*|v]}}\label{subsec:conditional_v}

We next consider auxiliary conditional-mean information indexed by an external variable v that does not enter the best linear predictor. Unlike conditioning on a subvector of the regressors, this restriction does not generally impose a direct linear restriction on the coefficient vector. Instead, it sharpens identification by fixing how the total interval-width allocation must be distributed across values of $v$.

Let $v$ be an observed external variable that need not enter the BLP
specification.

\begin{assumption}[External conditional moment]
\label{assm:external}
The conditional mean of the latent outcome given $v$ is known:
\[
\E[y^*\mid v]=\kappa(v)
\qquad
\PP\text{-a.s.},
\]
where
\[
\E[y_L\mid v]
\leq
\kappa(v)
\leq
\E[y_U\mid v]
\qquad
\PP\text{-a.s.}
\]
\end{assumption}

Define
\[
\Sel^1(Y\mid\kappa(v))
=
\left\{
y\in\Sel^1(Y):
\E[y\mid v]=\kappa(v)
\quad\PP\text{-a.s.}
\right\}.
\]

As before, conditional compatibility implies nonemptiness by taking
\[
y_v
=
y_L+\lambda(v)\Delta,
\]
where
\[
\lambda(v)
=
\begin{cases}
\dfrac{
\kappa(v)-\E[y_L\mid v]
}{
\E[\Delta\mid v]
},
&
\E[\Delta\mid v]>0,
\\[3mm]
0,
&
\E[\Delta\mid v]=0.
\end{cases}
\]

On the event $E[\Delta|v]=0$, nonnegativity of $\Delta$ implies that $\Delta=0$ conditionally almost surely. Compatibility, therefore, gives $\kappa(v)=E[y_L|v]=E[y_U|v]$ a common conditional expectation for all selections. As a result, $E[y_v|v]=\kappa(v)$

The sharp identified set is
\begin{equation}\label{eq:external_sharp_set}
\Theta^I_{\kappa(v)}
=
\left\{
Q^{-1}\E[\tilde x y]:
y\in\Sel^1(Y\mid\kappa(v))
\right\}.
\end{equation}

\begin{proposition}[Sharpness under an external conditional mean]
\label{prop:external_sharp}
Under Assumptions \ref{as:main} and \ref{assm:external},
$\Theta^I_{\kappa(v)}$ is the sharp identification region and
\[
\Theta^I_{\kappa(v)}\subseteq\Theta^I_{\kappa}\subseteq\Theta^I.
\]
\end{proposition}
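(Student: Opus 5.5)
The argument splits into sharpness of $\Theta^I_{\kappa(v)}$ and the two inclusions; both reduce to statements about the selection sets.

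\emph{Sharpness.} Here I would show that the admissible latent outcomes are exactly $\Sel^1(Y\mid\kappa(v))$. Necessity: the latent $y^*$ satisfies $y_L\le y^*\le y_U$ a.s. It is integrable, because $|y^*|\le|y_L|+|y_U|$ and the first coordinate of $\tilde x$ is $1$, so Assumption \ref{as:main} gives $\E[|y_L|+|y_U|]<\infty$. By Assumption \ref{assm:external} it also satisfies $\E[y^*\mid v]=\kappa(v)$, so $y^*\in\Sel^1(Y\mid\kappa(v))$ and $\theta(y^*)\in\Theta^I_{\kappa(v)}$. Sufficiency: take any $y\in\Sel^1(Y\mid\kappa(v))$ and set $y^*:=y$. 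This completion is consistent with the observed joint law of $(y_L,y_U,x,v)$, with interval containment, and with the known conditional mean. Hence $Q^{-1}\E[\tilde x y]$ cannot be ruled out by the available information. The BLP coefficient is well defined for every such $y$, since $\E\|\tilde x y\|\le\E[\|\tilde x\|(|y_L|+|y_U|)]<\infty$. Nonemptiness follows from the selection $y_v=y_L+\lambda(v)\Delta$ constructed before the statement, once we check that $\lambda(v)\in[0,1]$ and that $\lambda(v)$ is $\sigma(v)$-measurable (a ratio of versions of conditional expectations). Then $\E[y_v\mid v]=\E[y_L\mid v]+\lambda(v)\E[\Delta\mid v]=\kappa(v)$.

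\emph{Inclusions.} Here I would set $\kappa:=\E[\kappa(v)]$, the mean implied by the external information. By the tower property, $\E[y]=\E[\E[y\mid v]]=\E[\kappa(v)]=\kappa$ for every $y\in\Sel^1(Y\mid\kappa(v))$. So $\Sel^1(Y\mid\kappa(v))\subseteq\Sel^1(Y\mid\kappa)\subseteq\Sel^1(Y)$, and applying the map $y\mapsto Q^{-1}\E[\tilde x y]$ preserves these inclusions, giving $\Theta^I_{\kappa(v)}\subseteq\Theta^I_\kappa\subseteq\Theta^I$. I also need $\kappa$ to satisfy Assumption \ref{assm:fixed_mean}, so that $\Theta^I_\kappa$ is the object of Proposition \ref{prop:mean_hyperplane}. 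Integrating the conditional compatibility bounds gives $\E[y_L]\le\kappa\le\E[y_U]$.

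The only delicate point, and the main obstacle, is the interpretation of $\Theta^I_\kappa$ in the statement. The inclusion is only meaningful when $\kappa$ is the pooled mean implied by $\kappa(v)$. If a different value of $\kappa$ were meant, the two sets would be disjoint, since the first lies in $\{\E[\tilde x]'\theta=\E[\kappa(v)]\}$. I will state explicitly that $\kappa=\E[\kappa(v)]$. The remaining technical care is integrability of $\kappa(v)$ and the a.s. qualifications: conditional expectations are defined only up to null sets, so the equality $\E[y\mid v]=\kappa(v)$ and the case $\E[\Delta\mid v]=0$ should be handled on versions. As in the text, on that event $\Delta=0$ a.s., which forces $\kappa(v)=\E[y_L\mid v]$.
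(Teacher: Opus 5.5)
Your proposal is correct and follows essentially the same route as the paper: sharpness comes from identifying the admissible completions with $\Sel^1(Y\mid\kappa(v))$ and using nonsingularity of $Q$, and the inclusions come from the tower property $\E[y]=\E[\kappa(v)]=\kappa$ applied to the selection sets. Your extra checks — nonemptiness via $y_v$, $\E[y_L]\le\kappa\le\E[y_U]$ so that $\Theta^I_\kappa$ is well defined, and stating explicitly that $\kappa=\E[\kappa(v)]$ — are details the paper leaves implicit, and they do not change the argument.
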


Unlike the restriction studied in \Cref{subsec:conditional_x1}, the external conditional mean generally has no direct finite-dimensional representation in coefficient space because v is excluded from $\Tilde{x}$. Its identifying content is instead revealed through the support-function comparison below.

Fix $r\in\R^{d+1}$. Then
\begin{equation}\label{eq:external_support}
h_{\Theta^I_{\kappa(v)}}(r)
=
\sup_{y\in\Sel^1(Y\mid\kappa(v))}
\E[s_ry].
\end{equation}

Writing $y=y_L+\tau\Delta$, the conditional restriction becomes
\[
\E[\tau\Delta\mid v]
=
\kappa(v)-\E[y_L\mid v]
=:
\alpha(v).
\]

\subsubsection*{Discrete external variable}

Suppose
\[
v\in\{v_1,\dots,v_M\},
\qquad
p_m=\PP(v=v_m)>0.
\]
Let $\tau_m$ denote the allocation rule within event $\{v=v_m\}$.  Then
\begin{align}
h_{\Theta^I_{\kappa(v)}}(r)
&=
\sum_{m=1}^M
p_m
\Bigg[
\E[s_ry_L\mid v=v_m]
\nonumber\\
&\qquad
+
\sup_{\substack{0\leq\tau_m\leq1\\
\E[\tau_m\Delta\mid v=v_m]=\alpha(v_m)}}
\E[s_r\tau_m\Delta\mid v=v_m]
\Bigg].
\label{eq:discrete_v_support}
\end{align}

Define the cell measures
\[
\mu_m(A)
=
\E[\Delta\1_{A\cap\{v=v_m\}}],
\]
and the cell budgets
\[
A_m
=
\E[
\{\kappa(v_m)-\E[y_L\mid v=v_m]\}
\1\{v=v_m\}
].
\]
Let
\[
\kappa:=\E[\kappa(v)]=\E[y^*],
\qquad
\alpha=\kappa-\E[y_L].
\]
Then
\[
\sum_{m=1}^M A_m=\alpha.
\]

For a finite measure $\nu$, define the upper-tail functional
\[
\mathcal T_a^\nu(Z)
=
\sup\left\{
\int\tau Z\,d\nu:
0\leq\tau\leq1,\
\int\tau\,d\nu=a
\right\},
\qquad
0\leq a\leq\nu(\Omega).
\]

Let
\[
\mu(A)=\E[\Delta\1_A].
\]
Then
\[
\mu(\Omega)=\E[\Delta]=\sum_{m=1}^M\mu_m(\Omega).
\]

\begin{proposition}[Additional contraction from conditioning]
\label{prop:external_contraction}
Under Assumptions \ref{as:main} and \ref{assm:external},
\begin{equation}\label{eq:external_split_conditional}
h_{\Theta^I_{\kappa(v)}}(r)
=
\E[s_ry_L]
+
\sum_{m=1}^M
\mathcal T_{A_m}^{\mu_m}(s_r),
\end{equation}
whereas
\begin{equation}\label{eq:external_split_pooled}
h_{\Theta^I_\kappa}(r)
=
\E[s_ry_L]
+
\max_{\substack{
0\leq a_m\leq\mu_m(\Omega)\\
\sum_{m=1}^M a_m=\alpha
}}
\sum_{m=1}^M
\mathcal T_{a_m}^{\mu_m}(s_r).
\end{equation}
Consequently,
\begin{align}
h_{\Theta^I_\kappa}(r)
-
h_{\Theta^I_{\kappa(v)}}(r)
&=
\max_{\substack{
0\leq a_m\leq\mu_m(\Omega)\\
\sum_m a_m=\alpha
}}
\sum_{m=1}^M\mathcal T_{a_m}^{\mu_m}(s_r)
-
\sum_{m=1}^M\mathcal T_{A_m}^{\mu_m}(s_r)
\nonumber\\
&\geq0.
\label{eq:external_extra}
\end{align}

Equality holds if and only if the imposed split $(A_1,\dots,A_M)$ is value-maximizing in \eqref{eq:external_split_pooled}. If $0<\alpha<\mu(\Omega)$, there is
a common cutoff $c^*$ and tie fractions $\gamma_m\in[0,1]$ such that
\[
A_m
=
\mu_m(s_r>c^*)
+
\gamma_m\mu_m(s_r=c^*)
\]
for every $m$. If $\alpha=0$ or $\alpha=\mu(\Omega)$, equality holds trivially because the allocation is uniquely fixed, up to states with zero interval width.

If $s_r$ is independent of $v$ under the normalized width measure $\Bar{\mu}$ and the imposed cell budgets are proportional to their width masses,
\[
A_m
=
\frac{\mu_m(\Omega)}{\mu(\Omega)}\alpha,
\]
then the additional contraction is zero.
\end{proposition}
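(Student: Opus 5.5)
The plan is to reduce both support functions to allocation problems over the width-weighted measure $\mu$, split the pooled problem by $v$-cells, and then read off the comparison and the equality cases from the fractional-knapsack structure.

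\textbf{Conditional support (\eqref{eq:external_split_conditional}).} Start from \eqref{eq:discrete_v_support}. Because $v$ is discrete with $p_m>0$, the constraint $\E[\tau\Delta\mid v]=\alpha(v)$ is equivalent to the $M$ separate equalities $\E[\tau\Delta\1\{v=v_m\}]=A_m$. Any measurable $\tau$ with $0\le\tau\le1$ decomposes as $\tau=\sum_m\tau\1\{v=v_m\}$, and the pieces can be chosen independently across cells. The supremum therefore separates into a sum over cells. Multiplying the conditional expectations by $p_m$ turns each cell problem into
\[
\sup\Big\{\textstyle\int\tau s_r\,d\mu_m:\ 0\le\tau\le1,\ \int\tau\,d\mu_m=A_m\Big\}=\mathcal T^{\mu_m}_{A_m}(s_r).
\]
Summing the $p_m\E[s_ry_L\mid v=v_m]$ terms gives $\E[s_ry_L]$. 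Compatibility guarantees $0\le A_m\le\mu_m(\Omega)$, so each cell problem is feasible.

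\textbf{Pooled support (\eqref{eq:external_split_pooled}).} For the pooled problem \eqref{eq:mean_knapsack}, write $\mu=\sum_m\mu_m$ and introduce the cell budgets $a_m=\int\tau\,d\mu_m$. Every feasible $\tau$ induces a split $(a_m)$ with $0\le a_m\le\mu_m(\Omega)$ and $\sum_m a_m=\alpha$. Conversely, any such split together with cellwise feasible $\tau$'s pastes into a feasible $\tau$ for the pooled problem. So the supremum equals a supremum over splits of $\sum_m\mathcal T^{\mu_m}_{a_m}(s_r)$.

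The supremum over splits is attained. Each map $a\mapsto\mathcal T^{\nu}_a(Z)$ is concave and continuous on $[0,\nu(\Omega)]$: concavity follows by mixing feasible $\tau$'s, and continuity follows because it is an integral of the $\nu$-quantile function, as in Proposition~\ref{prop:directional_contraction_mean}. A continuous function on the compact simplex-type feasible set attains its maximum.

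\textbf{Difference, nonnegativity and equality.} The imposed $(A_m)$ is itself feasible for the outer maximization, so the difference \eqref{eq:external_extra} is nonnegative, and it is zero exactly when $(A_m)$ attains the maximum.

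The main work is characterizing the maximizers. I would apply Lagrangian/KKT conditions for the concave separable problem. The superdifferential of $a\mapsto\mathcal T^{\mu_m}_a$ at $a$ is the interval of $\mu_m$-quantile values of $s_r$ at the cutoff level $a$. Optimality with $0<\alpha<\mu(\Omega)$ requires a common multiplier $c^*$ lying in every cell's superdifferential, with the usual adjustments at the boundary cells where $a_m\in\{0,\mu_m(\Omega)\}$.

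A cleaner route is direct: a pooled maximizer of the fractional knapsack is a threshold rule $\tau=\1\{s_r>c^*\}+\gamma\1\{s_r=c^*\}$. The split $(A_m)$ is optimal if and only if some pooled optimal $\tau$ induces it. Pooled optimal allocations are exactly the threshold rules, with any tie-fill on $\{s_r=c^*\}$, so the tie fraction may vary across cells, which gives $\gamma_m$. This yields the stated form
\[
A_m=\mu_m(s_r>c^*)+\gamma_m\mu_m(s_r=c^*).
\]
The subtle point, and the likely obstacle, is the "only if" direction when $s_r$ has a flat region between atoms, i.e.\ when the cutoff is not unique. There I would use the fact that any optimal $\tau$ must equal $1$ $\mu$-a.e.\ above every valid cutoff and $0$ below it, via a standard exchange argument.

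\textbf{Boundary cases.} When $\alpha=0$ or $\alpha=\mu(\Omega)$, the only feasible $\tau$ is $0$ or $1$ $\mu$-a.e., so both support functions coincide.

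\textbf{Independence claim.} If $s_r$ is independent of $v$ under $\bar\mu$, then each $\mu_m$ equals $\mu_m(\Omega)/\mu(\Omega)$ times the same distribution of $s_r$. Take the pooled threshold rule at budget $\alpha$ (cutoff $c^*$, tie fraction $\gamma$ on $\{s_r=c^*\}$). It allocates to cell $m$ exactly $\frac{\mu_m(\Omega)}{\mu(\Omega)}\alpha=A_m$, with every $\gamma_m$ equal to that common $\gamma$. So the imposed split is of the optimal threshold form, and the additional contraction is zero.
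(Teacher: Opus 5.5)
Your proposal is correct and follows the paper's own argument: the conditional knapsack separates by cell, the pooled problem is a maximum over budget splits $(a_m)$, feasibility of $(A_m)$ gives nonnegativity, and the common-threshold structure of pooled optimizers gives the equality and independence claims. Your additions fill in steps the paper only asserts: attainment of the outer maximum via concavity and continuity of $a\mapsto\mathcal T^{\nu}_a(s_r)$, and the exchange (equivalently, complementary-slackness) argument that every pooled optimizer equals $1$ above and $0$ below every valid cutoff, which also removes your non-unique-cutoff worry.
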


\begin{corollary}[Local cutoff-dispersion approximation]
\label{cor:between_cell}
Consider a sequence of problems for which
\[
\max_{1\leq m\leq M}|c_m^*-c^*|\to0,
\]
where $c_m^*$ solves the $m$-th cell problem at budget $A_m$ and $c^*$ is the
pooled cutoff. Assume additionally that the densities are uniformly continuous and uniformly bounded in neighborhoods containing $c^*$ and $c^*_m$ along the sequence.

Suppose $0<A_m<\mu_m(\Omega)$ for every $m$. Suppose also that under the normalized measure
\[
\frac{\mu_m}{\mu_m(\Omega)},
\]
the score $s_r$ has a density $f_m$ that is continuous and strictly positive
in a neighborhood of $c^*$. Then
\begin{align}
h_{\Theta^I_\kappa}(r)
-
h_{\Theta^I_{\kappa(v)}}(r)
&=
\frac12
\sum_{m=1}^M
\mu_m(\Omega)f_m(c^*)
(c_m^*-c^*)^2
\nonumber\\
&\quad
+
o\!\left(
\sum_{m=1}^M
\mu_m(\Omega)(c_m^*-c^*)^2
\right).
\label{eq:between_cell_expansion}
\end{align}
\end{corollary}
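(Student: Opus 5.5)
Working through the separable structure of \eqref{eq:external_split_pooled} and \eqref{eq:external_split_conditional}, I would reduce the comparison to the concave value functions $V_m(a):=\mathcal T^{\mu_m}_a(s_r)$. By Proposition~\ref{prop:external_contraction}, the difference equals $\max_{\sum a_m=\alpha}\sum_m V_m(a_m)-\sum_m V_m(A_m)$. By the fractional-knapsack structure, each $V_m$ is concave on $[0,\mu_m(\Omega)]$. Its right derivative at $a$ equals the cutoff $c_m(a)$, the $\mu_m$-upper quantile of $s_r$ at mass $a$, defined by $\mu_m(s_r>c_m(a))\le a\le \mu_m(s_r\ge c_m(a))$. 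Explicitly, $V_m(a)=\int_0^a c_m(b)\,db$, in the same way that Proposition~\ref{prop:directional_contraction_mean} writes the support value as an integral of the quantile function. The pooled optimum $(a_m^*)$ is characterized by a common cutoff: $c_m(a_m^*)=c^*$ for all $m$, using interiority. The imposed budgets satisfy $c_m(A_m)=c_m^*$.

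The core step is a second-order expansion of each $V_m$ around $a_m^*$. The density assumption implies that, for $c$ near $c^*$, $a\mapsto c_m(a)$ is the inverse of $c\mapsto \mu_m(s_r>c)$, whose derivative is $-\mu_m(\Omega)f_m(c)$. Hence
\[
V_m(A_m)-V_m(a_m^*)=\int_{a_m^*}^{A_m} c_m(b)\,db .
\]
Changing variables $b=\mu_m(s_r>c)$ gives $db=-\mu_m(\Omega)f_m(c)\,dc$, so
\[
V_m(A_m)-V_m(a_m^*)=-\int_{c^*}^{c_m^*} c\,\mu_m(\Omega)f_m(c)\,dc .
\]
I then subtract the linear term $c^*(A_m-a_m^*)=-c^*\int_{c^*}^{c_m^*}\mu_m(\Omega)f_m(c)\,dc$. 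This yields
\[
V_m(a_m^*)+c^*(A_m-a_m^*)-V_m(A_m)=\mu_m(\Omega)\int_{c^*}^{c_m^*}(c-c^*)f_m(c)\,dc .
\]
Summing over $m$, the linear terms cancel because $\sum_m A_m=\sum_m a_m^*=\alpha$. The total difference is therefore exactly $\sum_m \mu_m(\Omega)\int_{c^*}^{c_m^*}(c-c^*)f_m(c)\,dc$, which is a clean identity needing no Taylor remainder in $a$.

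Finally, I would write $f_m(c)=f_m(c^*)+\big(f_m(c)-f_m(c^*)\big)$. The leading part integrates to $\tfrac12\mu_m(\Omega)f_m(c^*)(c_m^*-c^*)^2$. The remainder is bounded by $\tfrac12\mu_m(\Omega)(c_m^*-c^*)^2\,\omega_m(|c_m^*-c^*|)$, where $\omega_m$ is the modulus of continuity of $f_m$. Uniform continuity along the sequence, together with $\max_m|c_m^*-c^*|\to0$, sends $\max_m\omega_m(\cdot)\to0$. This gives the $o\big(\sum_m\mu_m(\Omega)(c_m^*-c^*)^2\big)$ term, with $M$ fixed.

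I expect the main obstacle to be the justification of the change of variables and of the cutoff characterization along the sequence. Two things must be verified. First, both $c_m^*$ and $c^*$ must eventually lie in the neighborhood where $f_m$ exists and is positive, so that $c_m(\cdot)$ is continuous and strictly monotone there with no atoms or flat pieces. Second, the pooled cutoff must be common with interior $a_m^*$. For the latter I would use the KKT conditions for maximizing a sum of concave functions under one linear constraint. I would then invoke uniform boundedness of the densities to ensure that $a_m^*$ stays interior once $A_m$ is interior and the cutoffs are close.
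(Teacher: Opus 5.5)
Your proof is correct. It works on the primal side, whereas the paper works on the dual side.

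\textbf{The paper's route.} The paper writes each cell value as $\mathcal T^{\mu_m}_{A_m}(s_r)=\min_c\phi_m(c)$ with $\phi_m(c)=cA_m+\int(s_r-c)^+\,d\mu_m$. It then notes that $\sum_m A_m=\alpha$ makes the pooled dual objective exactly $\Phi=\sum_m\phi_m$. Hence the gap is $\sum_m\{\phi_m(c^*)-\phi_m(c_m^*)\}$, which it expands by Taylor's theorem with Lagrange remainder, using $\phi_m'(c_m^*)=0$ and $\phi_m''=\mu_m(\Omega)f_m$.

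\textbf{The relation to your route.} Your per-cell terms are the same numbers:
\begin{itemize}
\item $V_m(A_m)=\phi_m(c_m^*)$ by strong duality;
\item $V_m(a_m^*)+c^*(A_m-a_m^*)=c^*A_m+\int(s_r-c^*)^+\,d\mu_m=\phi_m(c^*)$, because $c^*$ is the optimal cutoff at budget $a_m^*$.
\end{itemize}
So your exact identity $\mu_m(\Omega)\int_{c^*}^{c_m^*}(c-c^*)f_m(c)\,dc$ is the integral-remainder form of the paper's expansion of $\phi_m$ around $c_m^*$.

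\textbf{What each route buys.} The dual route never has to identify the pooled split $(a_m^*)$. The cancellation of the linear terms is built into $\Phi=\sum_m\phi_m$, so no KKT or interiority argument is needed. Your primal route gives an exact, visibly nonnegative identity, and it makes the budget-reallocation interpretation explicit. That identity holds as soon as $f_m$ exists between $c^*$ and $c_m^*$; continuity is used only for the asymptotics.

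\textbf{A simplification.} Your planned KKT step, and the appeal to uniform boundedness for interiority, can be dropped. Set $a_m^*:=\mu_m(s_r>c^*)$ directly. Because each $\mu_m$ has a density near $c^*$, the pooled threshold allocation of Lemma~\ref{lem:knapsack}(ii) has no ties at $c^*$. This split therefore sums to $\alpha$ and is optimal. It is interior simply because $f_m>0$ on both sides of $c^*$.

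\textbf{A shared caveat.} Like the paper, you need a modulus of continuity for the $f_m$ that is uniform along the sequence, so that $\max_m\omega_m(|c_m^*-c^*|)\to0$. That is the natural reading of the uniform-continuity hypothesis.
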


Thus, locally, the additional contraction is governed by dispersion of the cell-specific optimal cutoffs around the pooled cutoff. Cells contribute more to the additional contraction when their imposed budgets force their optimal cutoffs farther from the pooled cutoff, with the local cost weighted by the amount of interval width in the cell and by the width-weighted density of the score at the pooled cutoff. This is a statement about cutoff dispersion, not a general identity with between-group variance
of the score.

\subsubsection*{Continuous external variable}

When v is continuously distributed, the same logic applies pointwise in $v$: the conditional restriction fixes a separate width budget at almost every value of $v$, and the support value is obtained by integrating the corresponding conditional knapsack values. 

Suppose $v$ takes values in a standard Borel space, a regular conditional distribution of $(y_L,y_U,x)$ given $v$ exists, and the conditional optimization admits a measurable selection of optimizers. Then
\begin{align}
h_{\Theta^I_{\kappa(v)}}(r)
&=
\E\Bigg[
\E[s_ry_L\mid v]
\nonumber\\
&\qquad
+
\sup_{\substack{0\leq\tau\leq1\\
\E[\tau\Delta\mid v]=\alpha(v)}}
\E[s_r\tau\Delta\mid v]
\Bigg].
\label{eq:continuous_v_support}
\end{align}

For almost every $t$, define
\[
K_r(t)
=
\sup_{\substack{0\leq\tau\leq1\\
\E[\tau\Delta\mid v=t]=\alpha(t)}}
\E[s_r\tau\Delta\mid v=t].
\]
If $t\mapsto K_r(t)$ is measurable, then
\[
h_{\Theta^I_{\kappa(v)}}(r)
=
\E\!\left[
\E[s_ry_L\mid v]+K_r(v)
\right].
\]
If $v$ has density $f_v$, this becomes
\[
h_{\Theta^I_{\kappa(v)}}(r)
=
\int
\left[
\E[s_ry_L\mid v=t]
+
K_r(t)
\right]
f_v(t)\,dt.
\]

\section{Numerical Illustration}\label{sec:numerical}

This section provides a numerical illustration of the identification results in \Cref{sec:condexp}. The exercise is not intended as a substantive analysis of the returns to education. Instead, we use Current Population Survey (CPS) Annual Social and Economic Supplement (ASEC) observed income to construct interval-valued outcomes and then compare the identified regions obtained under different forms of auxiliary information. Throughout this section, expectations and identified regions are implemented using their empirical analogs. To avoid excessive notation, we retain the population notation from \Cref{sec:condexp}. For simplicity, the numerical illustration treats the retained sample as an equally weighted empirical distribution. The exercise is intended to illustrate the geometry of the identification results rather than to provide population-representative estimates.

We restrict the sample to respondents with positive wage and salary income (\texttt{WSAL\_VAL}) who report being employed, rescale income to units of \$1{,}000, and drop the top percentile of the income distribution as a cosmetic trim against extreme outliers. The variable measuring years of completed education (\texttt{educ\_numeric}) enters the baseline covariate vector $\tilde x=(1,\mathrm{educ})$. Two further variables are retained for the extensions taken up later in this section: a categorical race indicator constructed from \texttt{PRDTRACE} (equal to $1$ for white respondents and $2$ for non-white), used in Section~\ref{subsec:conditional_x1_num} as $x_1$; and age (\texttt{A\_AGE}), used in  section~\ref{subsec:conditional_v_num} as the external variable $v$. After these restrictions the working sample contains $n=22{,}397$ observations, with mean income of \$63{,}990 and mean educational attainment of $14.2$ years.

Because $y^*$ (income) is observed as a point value in the underlying CPS extract, we generate the interval $Y=[y_L,y_U]$ used throughout this section using a stylized interval-privacy mechanism based on \citet{DingDing2022}, who introduce \emph{interval privacy} as a privacy criterion distinct from differential privacy: rather than perturbing a respondent's value with additive noise, the mechanism narrows it to a random range that provably contains the truth, constructed so that the range's conditional distribution given the true value is uninformative beyond the range itself. For each observation $i$, we draw two quantile indices independently of $y^*_i$, convert them into income anchors using the empirical income distribution, and order the resulting values. Together with the lower and upper endpoints of the empirical support, these anchors partition the income domain into three intervals. The released interval $[y_{L,i},y_{U,i}]$ is the unique partition cell containing $y^*_i$. Because the anchors are constructed without reference to the respondent's own value, the resulting interval is not centered on $y_i^*$, which is what makes it a nontrivial input for the identification results below. A construction that released an interval centered exactly at $y^*_i$ would reveal the latent outcome through the interval midpoint, defeating the purpose of treating the outcome as interval-valued at all.

\subsection{The unconstrained region}\label{subsec:unconstrained_region_num}

Figure~\ref{fig:unconstrained} plots $\Theta^I$ for
$\tilde x=(1,\mathrm{educ})$, computed via the closed-form directional bounds \eqref{eq:unconstrained_upper}--\eqref{eq:unconstrained_lower} implied by Proposition~\ref{prop:basic}. The region is an elongated and negatively sloped polygon, reflecting a tradeoff between the intercept and education coefficient induced by the joint distribution of education and the interval endpoints: selections that generate higher fitted levels at low education values tend to require lower education slopes, and conversely.

\begin{figure}[htbp]
\centering
\includegraphics[width=0.55\textwidth]{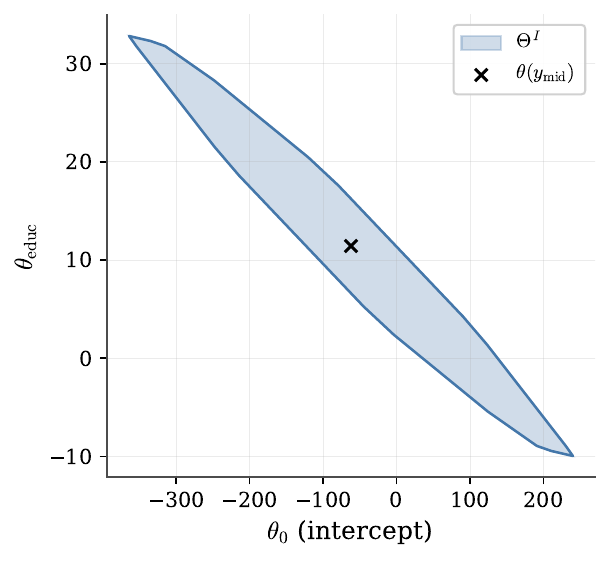}
\caption{Unconstrained sharp identification region $\Theta^I$.}
\label{fig:unconstrained}
\end{figure}

\subsection{The value of a known mean}

Imposing $\E[y^*]=\kappa$ at the empirical mean of the underlying uncoarsened and unweighted income collapses $\Theta^I_\kappa$ to a one-dimensional segment (Figure~\ref{fig:mean_known}), exactly as Proposition~\ref{prop:mean_hyperplane} predicts: $\Theta_\kappa$ is a single linear equation in a two-dimensional $\theta$, so its intersection with $\Theta^I$ has area exactly zero, not merely small. Because Proposition~\ref{prop:mean_hyperplane} already establishes that this intersection is exactly one-dimensional, we compute its endpoints directly from the support function evaluated at the single direction spanning $\Theta_\kappa$ and its reverse, rather than by sampling many directions and intersecting the resulting halfspaces as in Figure~\ref{fig:unconstrained}. This generic method presumes a two-dimensional interior point to anchor the construction, which a segment does not have, and can return a spuriously nonzero area from sampling noise alone.

\begin{figure}[htbp]
\centering
\includegraphics[width=0.55\textwidth]{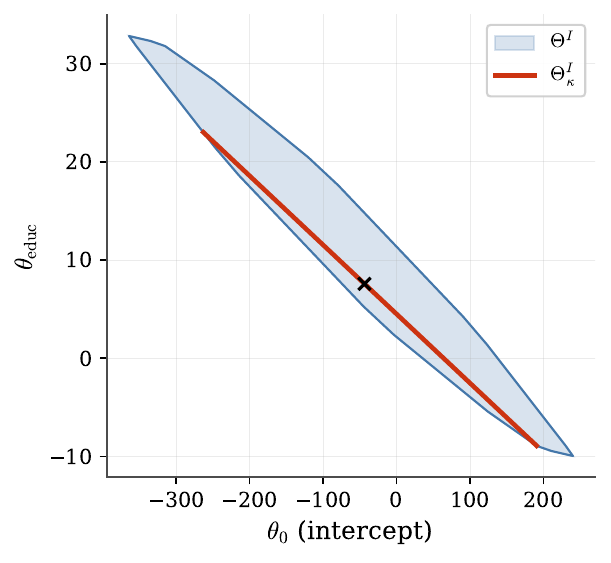}
\caption{Mean-constrained region $\Theta^I_\kappa$ inside $\Theta^I$.}
\label{fig:mean_known}
\end{figure}

\subsection{Identification with transformations}

We retarget the object of interest to $\theta_f=Q^{-1}\E[\tilde x f(y^*)]$, $f(y)=y^2$, and distinguish two cases according to which population moment is taken as known, since they have markedly different consequences for the geometry of the identified region.

If only $\E[y^*]=\kappa$ is known, Proposition~\ref{prop:transformed_convexity} guarantees that the corresponding population region is convex under atomlessness. The restriction does not impose a hyperplane on $\theta_f$ because there is no direct algebraic link between $\E[y^*]$ and $\E[\tilde x f(y^*)]$ once $f$ is nonlinear, so the known mean prunes admissible selections without confining $\theta_f$ to a lower-dimensional set. The resulting region remains a genuine two-dimensional band, narrower than the unconstrained $\Theta^I_f$. We do not plot the finite-sample analogue of this case separately. 

If instead $\E[f(y^*)]=\kappa_f$ is known, $\Theta^I_{f\mid\kappa_f}$ collapses to a one-dimensional segment (Figure~\ref{fig:transformed}), by exactly the hyperplane argument of Proposition~\ref{prop:mean_hyperplane}, now applied to $g=f(y)$ in place of $y$: since $\theta_f$ is itself built from $\E[\tilde x f(y)]$, knowing $\E[f(y^*)]=\kappa_f$ pins $\E[\tilde x]'\theta_f=\kappa_f$ directly. This is the informative case, and it holds regardless of whether $\E[y^*]=\kappa$ is imposed in addition.

\begin{figure}[htbp]
\centering
\includegraphics[width=0.55\textwidth]{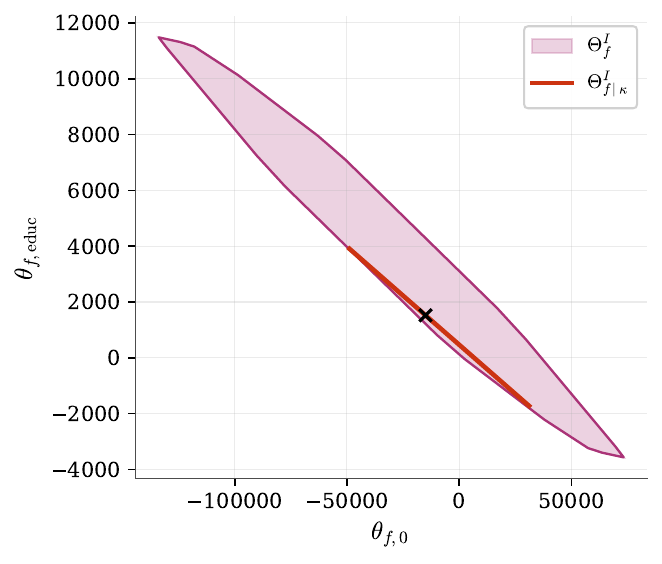}
\caption{Unconstrained $\Theta^I_f$ and $\kappa_f$-constrained $\Theta^I_{f\mid\kappa_f}$, $f(y)=y^2$.}
\label{fig:transformed}
\end{figure}

\subsection{An auxiliary moment restriction without retargeting}

Figure~\ref{fig:moment_only} imposes only $\E[y^{*2}]=\kappa_f$, with no restriction on $\E[y^*]$ and the target left at the original $\theta$.
Unlike the mean restriction, the $\E[y^{*2}]=\kappa_f$ restriction does not collapse the dimension of the identification region (see  Proposition~\ref{prop:moment_aux_convexity}).\footnote{Areas are computed by sampling the support function over a fine grid of directions and constructing the convex hull of the resulting supporting halfplanes. Proposition~\ref{prop:moment_aux_convexity} establishes convexity of the population object $\Theta^I_{\kappa_f}$ via Lyapunov's theorem, which requires an atomless probability space; the empirical distribution used here is finite and atomic, so this convexity is not automatically inherited by the finite-sample analog. The reported region should therefore be read as an outer bound on the finite-sample identified set obtained by directional optimization, rather than as a verified reconstruction of that set; each additional sampled direction can only tighten this bound toward the true region.}
The restricted region remains two-dimensional but retains only $16.4\%$ of the area of the unrestricted identification region. The restricted identification region is not a line segment since a nonlinear moment has no direct algebraic counterpart in $\theta$-space. This is the same mechanism as the mean-only case of \Cref{subsec:transformations} above, with the roles of the original and transformed outcome reversed.

\begin{figure}[htbp]
\centering
\includegraphics[width=0.55\textwidth]{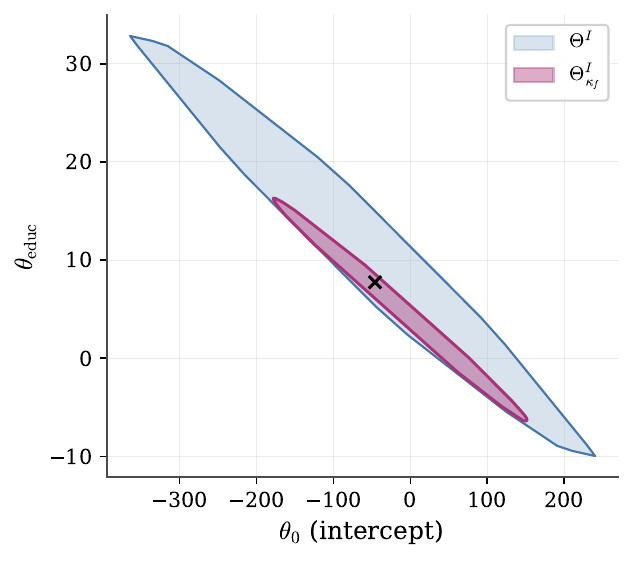}
\caption{$\Theta^I$ versus $\Theta^I_{\kappa_f}$, second moment known, mean not known.}
\label{fig:moment_only}
\end{figure}

\subsection{Information on \texorpdfstring{$\E[y^*\mid x_1]$}{E[y*|x1]}}\label{subsec:conditional_x1_num}

Extending the specification to $\tilde x=(1,\mathrm{race},\mathrm{educ})$ and imposing $\E[y^*\mid\mathrm{race}]=\kappa(\mathrm{race})$
(see Proposition~\ref{prop:conditional_fwl}) reduces the identified set to a one-dimensional affine segment parametrized by the return to education. For each admissible value of $\theta_{educ}$, Proposition \ref{prop:conditional_fwl} uniquely determines the corresponding intercept and race coefficient through the affine FWL relation.
The unconstrained interval for $\theta_{\mathrm{educ}}$,
$[-9.79,\ 32.89]$, narrows to $[-8.52,\ 22.90]$ once race is conditioned
on --- $73.6\%$ of the original width (Figure~\ref{fig:conditional_race}).
The full three-dimensional region, and the exact projection confirming this width, are reported in Appendix~\ref{app:numerical}. Note that the unconstrained bounds for $\theta_{educ}$ reported here differ slightly from those in Sections \ref{subsec:unconstrained_region_num} and \ref{subsec:conditional_v_num} because the specification here additionally includes race as a regressor.

\begin{figure}[htbp]
\centering
\includegraphics[width=0.6\textwidth]{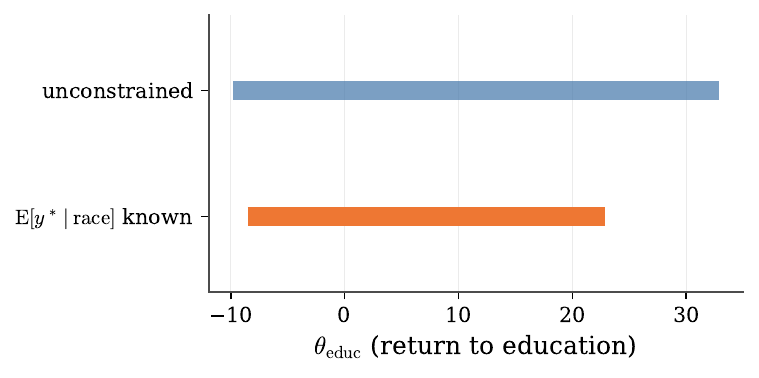}
\caption{Return-to-education interval, unconstrained versus conditional on race.}
\label{fig:conditional_race}
\end{figure}
\subsection{Information on \texorpdfstring{$\E[y^*\mid v]$}{E[y*|v]}}\label{subsec:conditional_v_num}

Age, binned into five cells, serves as the external variable $v$; it does
not enter $\tilde x$. Figure~\ref{fig:external_v} compares three nested
intervals for $\theta_{\mathrm{educ}}$: unconstrained,
$[-9.96,\ 32.82]$; known pooled mean, $[-8.91,\ 23.02]$; and conditional
on age, $[-8.53,\ 23.02]$ (see Proposition~\ref{prop:external_contraction}),
a further $1.2\%$ narrower than the pooled interval. The conditional
interval is contained in the pooled one, with the contraction
concentrated entirely at the lower endpoint (from $-8.91$ to $-8.53$);
the upper endpoint is identical. This is not a numerical coincidence:
Proposition~\ref{prop:external_contraction}'s equality condition holds
exactly at this endpoint's direction, since every age cell shares the
same optimal cutoff as the pooled problem
(\Cref{app:numerical}, Table~\ref{tab:cutoffs}), so the two support
values coincide by construction rather than by approximation.

\begin{figure}[htbp]
\centering
\includegraphics[width=0.6\textwidth]{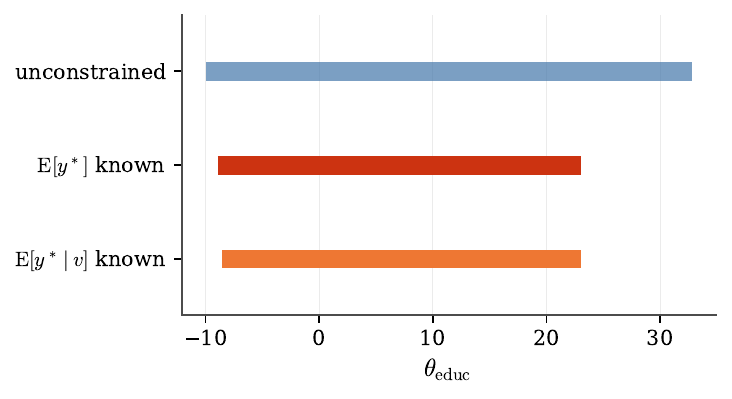}
\caption{Return-to-education interval: unconstrained, pooled mean known, conditional on age known.}
\label{fig:external_v}
\end{figure}

\section{Conclusion}\label{sec:conclusion}

This paper studies how auxiliary population moments sharpen identification of best linear predictors when the outcome is observed only through an interval. Using restricted selection sets, we show that the identifying content of auxiliary information varies sharply with the form of the restriction. A known unconditional mean intersects the original identified region with an affine hyperplane. A known conditional mean given a subvector of the regressors imposes $d_1+1$ exact linear restrictions and reduces the full region to an affine image of the remaining coefficient block. A conditional mean given an external variable further tightens the pooled-mean region by restricting how the available interval width may be allocated across values of that variable. Transformation moments have different effects depending on the target: a known mean of the transformed outcome imposes a hyperplane restriction on its own BLP, whereas the same moment used as auxiliary information about the original outcome generally narrows the region without reducing its dimension.

Across these cases, the value of auxiliary information is determined not merely by how many moments are known, but by how those moments restrict the allocation of latent outcomes within the observed intervals. For the mean and conditional-mean restrictions, we quantify this value through constrained allocation problems for interval width; for transformation restrictions, we provide geometric, support-function, and finite-sample characterizations. The CPS illustration shows that the resulting gains can be quantitatively important: conditioning the latent mean on race removes roughly one quarter of the identified width for the return to education, although the magnitude varies substantially with the information supplied.

Relative to \citet{BM2008}, whose sharp identification region for the interval-outcome BLP is the starting point here, the contribution is to show that a specific and commonly available kind of information --- population aggregates that already accompany coarsened data for confidentiality reasons --- has exploitable identifying content, and to give closed-form expressions for exactly how much. Coarsening is what creates the identification problem in the first place, but the same disclosure regime that requires coarsening typically also requires publishing exact aggregate invariants, and it is from those invariants, not from any external dataset, that the identifying power studied here is recovered. That said, this is a statement about the specific restrictions we study, not a general theory of how privacy regimes interact with partial identification; the results say what a mean, a conditional mean, or a transformation moment buys, not what an arbitrary disclosed statistic would buy.

The framework also has real limits, and they are not incidental to the results. Every closed form in \cref{sec:condexp} relies on $\tilde x$ being exactly observed, so that the identification problem reduces to a fixed Aumann integral rather than a $\theta$-dependent containment check; if the covariates are themselves interval-valued, as in \citet{BMM2011}, the machinery here does not directly apply. The quantification in \cref{subsec:value_of_mean} is tied to squared loss --- it is the knapsack representation, not the restricted-selection idea itself, that depends on this --- so the value of a restriction under quantile or other asymmetric losses is not covered by the formulas derived here. We also treat $\Theta^I$ and its restricted analogs purely as population objects; this paper does not establish how their sample analogs behave, so the CPS numbers should be read as an illustration of the geometry rather than as a fitted or standard-error-attached estimate. Restrictions are studied one at a time, and it is not obvious from anything shown here whether the contraction from two simultaneously imposed restrictions is additive, larger, or smaller, since the restrictions can overlap in which selections they each rule out. Finally, throughout the paper the disclosed aggregate is taken as exogenously given; nothing here says whether it was chosen well, or what a custodian trying to balance disclosure risk against downstream identifying power should release instead. Addressing any of these would change the analysis rather than extend it as a footnote, and we have deliberately left them open rather than sketch results we have not derived.

\noindent\textbf{Data Availability:} The data used in this paper are drawn from the Current Population Survey
Annual Social and Economic Supplement (CPS ASEC), a public-use survey
administered by the U.S. Census Bureau and the Bureau of Labor Statistics.

\noindent\textbf{Code:}
Code implementing every result below, together with the data extract and every figure (main text and appendix), is available at
\url{https://github.com/BehroozMoosavi/Codes/tree/main/PI_with_%20Auxiliary_restriction}.

\noindent\textbf{Funding:} This research received no specific grant from any funding agency in the
public, commercial, or not-for-profit sectors.

\bibliography{ref}

\appendix
\section{Background Results}\label{app:background}

This appendix collects the results from the theory of random sets and from
convex analysis that are used in the proofs of \cref{sec:condexp}: the Aumann
integral of a random set, Lyapunov's convexity theorem, the continuous fractional-knapsack (``bathtub'') problem, the population Frisch--Waugh--Lovell identity, and the segment representation of \citet{BM2008} that ties them together. Throughout, $(\Omega,\mathcal F,\PP)$
is a complete probability space and $\|\cdot\|$ is the Euclidean norm on
$\R^k$.

\subsection{Random sets, selections, and the Aumann integral}

\begin{definition}[Measurable correspondence and selections]\label{def:randomset}
A \emph{correspondence} (random set) $\Gamma:\Omega\rightrightarrows\R^k$
assigns to each $\omega$ a nonempty closed set
$\Gamma(\omega)\subseteq\R^k$. It is \emph{measurable} if for every open
$G\subseteq\R^k$,
\[
\{\omega:\Gamma(\omega)\cap G\neq\varnothing\}\in\mathcal F .
\]
A measurable map $f:\Omega\to\R^k$ with
$f(\omega)\in\Gamma(\omega)$ for $\PP$-a.e.\ $\omega$ is a
\emph{measurable selection}, and
\[
\Sel^1(\Gamma)
=
\left\{
f\in L^1(\Omega,\mathcal F,\PP;\R^k):
f(\omega)\in\Gamma(\omega)\ \PP\text{-a.s.}
\right\}
\]
is the set of \emph{integrable selections}. The correspondence is
\emph{integrably bounded} if there is $g\in L^1(\PP)$ with
\[
\sup_{z\in\Gamma(\omega)}\|z\|\leq g(\omega)
\qquad
\PP\text{-a.s.}
\]
\end{definition}

\begin{definition}[Aumann integral]\label{def:aumann}
The \emph{Aumann (selection) integral} of a measurable correspondence
$\Gamma$ is
\[
\int_\Omega\Gamma\,d\PP
=
\left\{
\E[f]:
f\in\Sel^1(\Gamma)
\right\}
\subseteq\R^k .
\]
\end{definition}

\begin{theorem}[Properties of the Aumann integral]\label{thm:aumann}
Let $\Gamma:\Omega\rightrightarrows\R^k$ be measurable, nonempty
compact-valued, and integrably bounded. Then:
\begin{enumerate}[label=\textup{(\roman*)}]
\item $\Sel^1(\Gamma)\neq\varnothing$, and
      $\int_\Omega\Gamma\,d\PP$ is a nonempty compact subset of $\R^k$;
\item if $\Gamma$ is convex-valued, then $\int_\Omega\Gamma\,d\PP$ is convex
      on any probability space;
\item if $(\Omega,\mathcal F,\PP)$ is atomless, then
      $\int_\Omega\Gamma\,d\PP$ is convex
      \textup{(}without any convexity of the values\textup{)}, and
      \[
      \int_\Omega\Gamma\,d\PP
      =
      \int_\Omega \overline{\mathrm{co}}\,\Gamma\,d\PP ;
      \]
\item the support function commutes with the integral: for every
      $u\in\R^k$,
      \begin{equation}\label{eq:support_commute}
      h_{\int_\Omega\Gamma\,d\PP}(u)
      =
      \int_\Omega h_{\Gamma(\omega)}(u)\,d\PP(\omega),
      \qquad
      h_{\Gamma(\omega)}(u)=\sup_{z\in\Gamma(\omega)}u'z .
      \end{equation}
\end{enumerate}
\end{theorem}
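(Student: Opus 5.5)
The plan is to prove the four parts in the order (i), (iv), (ii), (iii), since later parts reuse the earlier ones.

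For \textbf{(i)}, the Kuratowski--Ryll-Nardzewski measurable selection theorem gives a measurable $f$ with $f(\omega)\in\Gamma(\omega)$ a.s. Integrable boundedness gives $\|f\|\le g$, so $f\in L^1$ and $\Sel^1(\Gamma)\neq\varnothing$. Boundedness of the integral is immediate, since $\|\E f\|\le\E g$ for every selection.

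\textbf{Closedness} is where I expect the real work. Take selections $f_n$ with $\E f_n\to z$. Because $\|f_n\|\le g$, the family is uniformly integrable, so Dunford--Pettis yields a weakly convergent subsequence $f_n\rightharpoonup f$ in $L^1$, and in particular $\E f_n\to\E f=z$. The difficulty is that $f$ need not be a selection of $\Gamma$ itself when $\Gamma$ is nonconvex-valued; it is only a selection of $\overline{\mathrm{co}}\,\Gamma$. I would handle this as follows:
\begin{itemize}
\item By Mazur's lemma, convex combinations of the $f_n$ converge strongly, hence a.e. along a subsequence, so $f(\omega)\in\overline{\mathrm{co}}\,\Gamma(\omega)$ a.s.
\item In the convex-valued case this already closes the argument.
\item In general, decompose $\Omega$ into its atomic and atomless parts. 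On the atomless part, apply (iii), proved independently below via Lyapunov, to replace $f$ by a $\Gamma$-selection with the same integral.
\item On the countably many atoms, $f$ is constant on each atom. There one uses compactness of $\Gamma$ on each atom together with a diagonal subsequence argument on the $f_n$ directly, rather than on the weak limit, so the limit lies in $\Gamma$.
\end{itemize}
An alternative route to closedness is to cite Aumann's theorem or Hiai--Umegaki directly, which I would do if the hand-made decomposition gets messy.

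For \textbf{(iv)}, the inequality $\le$ is trivial: for any selection $f$, $u'\E f=\E[u'f]\le\E h_{\Gamma}(u)$. For $\ge$, the argmax correspondence $\omega\mapsto\{z\in\Gamma(\omega):u'z=h_{\Gamma(\omega)}(u)\}$ is measurable, being the intersection of $\Gamma$ with a measurable closed-valued map, and it is nonempty by compactness. A measurable selection of it is integrable by the bound $g$ and attains $\E h_\Gamma(u)$.

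For \textbf{(ii)}, if $z_i=\E f_i$ with selections $f_i$, then $\lambda f_1+(1-\lambda)f_2$ is again a selection because the values are convex, and it integrates to $\lambda z_1+(1-\lambda)z_2$.

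For \textbf{(iii)}, take selections $f_0,f_1$. The vector measure $A\mapsto(\E[f_0\1_A],\E[f_1\1_A])$ is nonatomic, so Lyapunov's theorem gives a set $A$ with $\E[f_1\1_A]=\lambda\E f_1$ and $\E[f_0\1_A]=\lambda\E f_0$. The spliced selection $f_1\1_A+f_0\1_{A^c}$ then integrates to $\lambda\E f_1+(1-\lambda)\E f_0$, which proves convexity. For the equality with the integral of $\overline{\mathrm{co}}\,\Gamma$:
\begin{itemize}
\item The inclusion $\subseteq$ is obvious.
\item Both sets are compact and convex: the left by the convexity just shown plus (i), the right by (ii).
\item By (iv) their support functions agree, because $h_{\Gamma(\omega)}=h_{\overline{\mathrm{co}}\,\Gamma(\omega)}$ pointwise.
\item Two compact convex sets with the same support function coincide, so the sets are equal.
\end{itemize}
Closedness in (i) is needed here, so to keep the logic non-circular I would establish closedness independently before invoking it, citing Aumann's theorem if necessary.
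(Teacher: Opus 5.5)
Your proofs of nonemptiness of $\Sel^1(\Gamma)$, of (ii), and of (iv) match the paper's. For compactness in (i) and the identity in (iii), the paper only cites the Richter--Aumann theorems. That is also your stated fallback, so with the citation your proposal is complete.

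The self-contained closedness argument, however, is circular as written, and the problem is in the logic, not just the order of presentation. On the atomless part you replace the weak limit $f$, which is a selection of $\overline{\mathrm{co}}\,\Gamma$, by a $\Gamma$-selection with the same integral. You justify this by (iii), which you describe as ``proved independently below via Lyapunov.'' Only the convexity half of (iii) uses Lyapunov. Your equality $\int\Gamma\,d\PP=\int\overline{\mathrm{co}}\,\Gamma\,d\PP$ comes from (iv) together with compactness of $\int\Gamma\,d\PP$, which is the closedness you are trying to prove. Without closedness, Lyapunov convexity and (iv) give only that $\int\Gamma\,d\PP$ is a convex, dense subset of $\int\overline{\mathrm{co}}\,\Gamma\,d\PP$ that attains every support value. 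That does not force equality. For example, the interior of a planar stadium, together with its open circular arcs and the midpoints of its two flat edges, is convex, dense, and attains every support value, yet misses the four non-exposed corners.

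To make your route self-contained, you need a direct proof of $\int\overline{\mathrm{co}}\,\Gamma\,d\PP\subseteq\int\Gamma\,d\PP$ on an atomless space; this is the real content of Aumann's theorem.
\begin{enumerate}
\item Since the values are compact in $\R^k$, $\overline{\mathrm{co}}\,\Gamma=\mathrm{co}\,\Gamma$. A measurable Carath\'eodory decomposition (a measurable selection of $\omega\mapsto\{(\lambda,z_0,\dots,z_k):\lambda\in\Delta_k,\ z_i\in\Gamma(\omega),\ \sum_i\lambda_iz_i=f(\omega)\}$) writes any $f\in\Sel^1(\mathrm{co}\,\Gamma)$ as $\sum_{i=0}^k\lambda_if_i$. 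Here $f_i\in\Sel^1(\Gamma)$ and the weights $\lambda_i\geq0$ are measurable and sum to one.
\item Remove the weights one at a time. For integrable $H$, let $R=\{\E[H\1_A]:A\in\mathcal F\}$, which is compact and convex by Lyapunov. For any measurable $0\leq\varphi\leq1$, the point $z=\E[\varphi H]$ satisfies $u'z\leq\E[(u'H)^+]=u'\E[H\1\{u'H>0\}]\leq h_R(u)$ for every $u$, so $z\in R$.
\item Write $f=\lambda_0f_0+(1-\lambda_0)\tilde f$, where $\tilde f$ is the renormalized combination of $f_1,\dots,f_k$. Applying the previous step with $H=f_0-\tilde f$ and $\varphi=\lambda_0$ gives a set $A$ with $\E[f]=\E[f_0\1_A+\tilde f\1_{A^c}]$.
\item Recurse on $A^c$, which is still atomless.
\end{enumerate}
This proves the equality in (iii) with no closedness input. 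After that, your Dunford--Pettis/Mazur step, combined with your atom-by-atom diagonal argument (which is fine), does prove closedness in (i), and the support-function argument for (iii) is no longer needed.
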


\begin{proof}
Nonemptiness of $\Sel^1(\Gamma)$ is the
Kuratowski--Ryll-Nardzewski measurable selection theorem: a measurable,
nonempty compact-valued correspondence admits a measurable selection, which is
integrable by integrable boundedness. The compactness in (i) and the
convexity in (iii), together with the identity
$\int\Gamma\,d\PP=\int\overline{\mathrm{co}}\,\Gamma\,d\PP$, are the
Richter--Aumann theorems for integrals of correspondences on atomless spaces; see \citet[Theorem~1.21 and Theorem~1.24]{Molchanov2005},
\citet{Richter1963}, and the econometric exposition in \citet{BM2008}.

For (ii), let $m_0,m_1\in\int\Gamma\,d\PP$ with selections
$f_0,f_1\in\Sel^1(\Gamma)$, and let $\lambda\in[0,1]$. Convex values give
\[
\lambda f_1(\omega)+(1-\lambda)f_0(\omega)\in\Gamma(\omega)
\]
a.s., so
\[
f_\lambda:=\lambda f_1+(1-\lambda)f_0
\]
belongs to $\Sel^1(\Gamma)$ and
\[
\E[f_\lambda]
=
\lambda m_1+(1-\lambda)m_0
\]
lies in the integral.

For (iv), the inequality ``$\leq$'' is immediate: for any selection $f$ and
any $u$,
\[
u'\E[f]=\E[u'f]\leq\E[h_{\Gamma}(u)].
\]
For ``$\geq$'', the map
\[
\omega\mapsto\operatorname{arg\,max}_{z\in\Gamma(\omega)}u'z
\]
admits a measurable selection $f_u$ by measurable selection applied to the
closed, integrably bounded face correspondence, and
\[
u'\E[f_u]=\E[h_\Gamma(u)].
\]
Hence the support function of the integral attains the right-hand side.
Finiteness follows from integrable boundedness.
\end{proof}

In every application in \cref{sec:condexp} the relevant correspondence has
the form
\[
\Gamma(\omega)
=
\{\tilde x(\omega)t:t\in[y_L(\omega),y_U(\omega)]\}
\]
or an augmentation thereof; it is automatically measurable, compact-valued,
and integrably bounded by
\[
\|\tilde x\|(|y_L|+|y_U|)
\]
under \cref{as:main}. Section~\ref{app:bm_construction} below records the
closed-form support function this segment structure implies.

\subsection{Lyapunov's convexity theorem}

\begin{theorem}[Lyapunov]\label{thm:lyapunov}
Let $\nu=(\nu_1,\dots,\nu_k)$ be a vector of finite signed measures on
$(\Omega,\mathcal F)$, each of which is atomless. Then the range
\[
R(\nu)
=
\left\{
\nu(A):
A\in\mathcal F
\right\}
\subseteq\R^k
\]
is compact and convex.
\end{theorem}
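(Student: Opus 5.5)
The plan is the classical Lindenstrauss functional-analytic proof. First reduce to a single nonnegative reference measure. Let $\lambda=\sum_{i=1}^k|\nu_i|$, where $|\nu_i|$ is the total variation of $\nu_i$. Then $\lambda$ is a finite nonnegative measure, and it is atomless: if $\lambda(A)>0$, some $|\nu_i|(A)>0$. Atomlessness of $\nu_i$ (in the sense of its variation) then yields $B\subseteq A$ with $0<\lambda(B)<\lambda(A)$. This step needs a short argument, since $\lambda$ is a finite sum of atomless measures. Each $\nu_i\ll\lambda$, so by Radon--Nikodym there are densities $g_i$ with $|g_i|\le 1$ and $\nu_i(A)=\int_A g_i\,d\lambda$. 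Write $g=(g_1,\dots,g_k)'\in L^1(\lambda;\R^k)$.

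Next, embed the range in a compact convex set. Let
\[
\Phi=\{\varphi\in L^\infty(\lambda):0\le\varphi\le1\ \lambda\text{-a.e.}\}.
\]
This set is convex and, by Banach--Alaoglu, weak$^*$ compact as a closed subset of the unit ball of $L^\infty=(L^1)^*$. Define $T(\varphi)=\int\varphi g\,d\lambda\in\R^k$. Because each $g_i\in L^1(\lambda)$, $T$ is linear and weak$^*$ continuous. Hence $K:=T(\Phi)$ is compact and convex. Since $T(\1_A)=\nu(A)$, we have $R(\nu)\subseteq K$. The theorem therefore follows once we show $K\subseteq R(\nu)$: every value of $T$ on $\Phi$ is attained by an indicator.

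Fix $p\in K$. The fiber $\Phi_p=\Phi\cap T^{-1}(p)$ is nonempty, convex, and weak$^*$ compact, so by Krein--Milman it has an extreme point $\varphi^*$. The key step, which I expect to be the main obstacle, is to show that $\varphi^*$ is an indicator a.e.

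Suppose not. Then for some $\varepsilon>0$ the set $E=\{\varepsilon\le\varphi^*\le1-\varepsilon\}$ has $\lambda(E)>0$. Using atomlessness of $\lambda$ repeatedly, I would split $E$ into $k+1$ disjoint subsets $E_0,\dots,E_k$, each of positive $\lambda$-measure. The $k+1$ vectors $T(\1_{E_j})\in\R^k$ are linearly dependent, so some nontrivial combination $h=\sum_j c_j\1_{E_j}$ satisfies $T(h)=0$. Here $h\neq0$ because the $E_j$ are disjoint with positive measure. Scale $h$ so that $\|h\|_\infty\le\varepsilon$. Then $\varphi^*\pm h\in\Phi$, since $h$ is supported in $E$, and $T(\varphi^*\pm h)=p$. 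So $\varphi^*$ is the midpoint of two distinct points of $\Phi_p$, contradicting extremality.

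Hence $\varphi^*=\1_A$ for some $A\in\mathcal F$, which gives $p=\nu(A)\in R(\nu)$. Therefore $R(\nu)=K$ is compact and convex. The remaining care points are routine: that $\lambda$ inherits atomlessness, and that the measurable splitting of $E$ into $k+1$ positive-measure pieces is legitimate. The splitting follows by iterating the defining property of an atomless measure.
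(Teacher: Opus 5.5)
Your proposal is correct and follows the same route the paper relies on. The paper gives no argument of its own: it cites Lyapunov (1940) and points to the extreme-point proof over the weak$^*$-compact set of densities $\{\tau: 0\le\tau\le 1\}$ in Dunford--Schwartz, which is your argument of an atomless reference measure $\lambda=\sum_i|\nu_i|$, Banach--Alaoglu and Krein--Milman on the fibre $\Phi_p$, and a $(k+1)$-piece splitting showing that extreme points are indicators; the step you flagged also goes through, since if $|\nu_i|(A)>0$ then any $B\subseteq A$ with $0<|\nu_i|(B)<|\nu_i|(A)$ already gives $\lambda(B)>0$ and $\lambda(A\setminus B)>0$.
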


\begin{proof}
This is the classical theorem of \citet{Lyapunov1940}; a self-contained proof
via the extreme points of the weak-$*$ compact set of densities
$\{\tau:0\leq\tau\leq1\}$ is given in
\citet[Theorem~IV.10.5]{DunfordSchwartz1958}.
\end{proof}

The form used repeatedly in \cref{app:proofs} is the following measurable
``splitting'' lemma, which lets a single event simultaneously interpolate
finitely many integrals.

\begin{lemma}[Lyapunov splitting]
\label{lem:lyapunov_split}
Suppose $(\Omega,\mathcal F,\PP)$ is atomless.  If
$H\in L^1(\PP;\R^k)$, then for every $\lambda\in[0,1]$ there exists
$A_\lambda\in\mathcal F$ such that
\[
\int_{A_\lambda}H\,d\PP
=
\lambda\E[H].
\]
\end{lemma}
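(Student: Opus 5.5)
We reduce the claim to Lyapunov's theorem (Theorem~\ref{thm:lyapunov}) applied to the vector of signed measures
\[
\nu(A)=\int_A H\,d\PP,\qquad A\in\mathcal F .
\]
Write $H=(H_1,\dots,H_k)$. Each coordinate $\nu_i(A)=\E[H_i\1_A]$ is a finite signed measure, since $H\in L^1(\PP;\R^k)$ gives $|\nu_i|(\Omega)\le\E|H_i|<\infty$.

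\textbf{Step 1: each $\nu_i$ is atomless.} Suppose $B$ were an atom of $|\nu_i|$. Since $|\nu_i|\ll\PP$, we have $\PP(B)>0$. Because $\PP$ is atomless, we can split $B$ into pieces of arbitrarily small $\PP$-measure. A single split is not enough, since $|\nu_i|$ could a priori vanish on one piece, so we use the cleaner route through absolute continuity. By absolute continuity of the integral, for every $\varepsilon>0$ there is $\eta>0$ such that $\PP(C)<\eta$ implies $|\nu_i|(C)<\varepsilon$. Using atomlessness of $\PP$ repeatedly, partition $B$ into finitely many sets $C_1,\dots,C_N$ with $\PP(C_j)<\eta$. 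Then each piece has $|\nu_i|(C_j)<\varepsilon$. But an atom must place its full mass $|\nu_i|(B)$ on some piece and zero on the others. Taking $\varepsilon<|\nu_i|(B)$ gives a contradiction.

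\textbf{Step 2: apply Lyapunov.} By Theorem~\ref{thm:lyapunov}, the range $R(\nu)=\{\nu(A):A\in\mathcal F\}$ is convex. It contains $\nu(\varnothing)=0$ and $\nu(\Omega)=\E[H]$. Convexity therefore gives $\lambda\E[H]+(1-\lambda)\cdot 0\in R(\nu)$ for every $\lambda\in[0,1]$. This means there is $A_\lambda\in\mathcal F$ with $\int_{A_\lambda}H\,d\PP=\lambda\E[H]$, which is the claim.

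\textbf{Main obstacle.} The only nonroutine point is Step 1, namely that atomlessness of $\PP$ transfers to the signed measures $\nu_i$. The partition argument above handles it. The one ingredient it needs is the standard fact that an atomless probability measure takes every value in $[0,\PP(B)]$ on subsets of $B$ (Sierpiński's theorem), which supplies the fine partitions. Alternatively, one could cite the version of Lyapunov's theorem stated directly for measures of the form $A\mapsto\int_A H\,d\PP$ on an atomless space, as in \citet{DunfordSchwartz1958}. In that case Step 1 is absorbed into the cited result.
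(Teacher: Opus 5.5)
Your proof is correct and follows the paper's argument. Both define $\nu(A)=\int_A H\,d\PP$, invoke Lyapunov's theorem to get a convex range, and note that the range contains $\nu(\varnothing)=0$ and $\nu(\Omega)=\E[H]$. Your Step 1 only spells out (correctly, via absolute continuity and Sierpi\'nski partitions) the transfer of atomlessness from $\PP$ to each $\nu_i$, which the paper asserts in one line from $\nu\ll\PP$.
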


\begin{proof}
Define the finite vector measure
\[
\nu(A)=\int_AH\,d\PP.
\]
Since $\nu\ll\PP$ and $\PP$ is atomless, every component of $\nu$ is
nonatomic.  Lyapunov's convexity theorem implies that the range
\[
R(\nu)=\{\nu(A):A\in\mathcal F\}
\]
is compact and convex.  The range contains
\[
\nu(\varnothing)=0
\qquad\text{and}\qquad
\nu(\Omega)=\E[H].
\]
Therefore it contains the whole segment
$\{\lambda\E[H]:\lambda\in[0,1]\}$.  Hence for every $\lambda$ there is
$A_\lambda\in\mathcal F$ satisfying the desired equality.
\end{proof}

The lemma is used to splice two selections while interpolating finitely many
moments exactly.  If
\[
y_\lambda
=
y_1\mathbf 1_{A_\lambda}
+
y_0\mathbf 1_{A_\lambda^c},
\]
and $H=\Phi(y_1)-\Phi(y_0)$, then the choice of $A_\lambda$ gives
\[
\E[\Phi(y_\lambda)]
=
(1-\lambda)\E[\Phi(y_0)]
+
\lambda\E[\Phi(y_1)].
\]

\subsection{The continuous fractional knapsack}

\begin{lemma}[Continuous fractional knapsack and its dual]\label{lem:knapsack}
Let $\mu$ be a finite nonnegative measure on $(\Omega,\mathcal F)$, let
$Z\in L^1(\mu)$, and let $0\leq a\leq\mu(\Omega)$. Define the upper-tail
functional
\[
\mathcal T_a^\mu(Z)
=
\sup\left\{
\int_\Omega \tau Z\,d\mu:
\tau:\Omega\to[0,1]\text{ measurable},\
\int_\Omega \tau\,d\mu=a
\right\}.
\]
Then the following hold.
\begin{enumerate}[label=\textup{(\roman*)}]
\item \textup{(Duality)}
\begin{equation}\label{eq:knapsack_duality}
\mathcal T_a^\mu(Z)
=
\inf_{c\in\R}
\left\{
ca+\int_\Omega (Z-c)^+\,d\mu
\right\}.
\end{equation}

\item \textup{(Threshold solution)}
If $0<a<\mu(\Omega)$, there is a cutoff $c^\ast\in\R$ with
\[
\mu(Z>c^\ast)\leq a\leq\mu(Z\geq c^\ast),
\]
and the allocation
\[
\tau^\ast
=
\1\{Z>c^\ast\}
+
\gamma\,\1\{Z=c^\ast\},
\qquad
\gamma
=
\frac{a-\mu(Z>c^\ast)}{\mu(Z=c^\ast)}
\in[0,1],
\]
\textup{(}with any $\gamma\in[0,1]$ if $\mu(Z=c^\ast)=0$\textup{)}
is optimal. If $a=0$ the optimum is $\tau^\ast=0$, and if
$a=\mu(\Omega)$ it is $\tau^\ast=1$.

\item \textup{(Quantile form)}
Write $\Lambda=\mu(\Omega)>0$, $p=a/\Lambda$,
$\bar\mu=\mu/\Lambda$, and let $q_Z^{\bar\mu}$ be the generalized quantile of
$Z$ under $\bar\mu$. Then
\begin{equation}\label{eq:knapsack_quantile}
\mathcal T_a^\mu(Z)
=
\Lambda\int_{1-p}^{1} q_Z^{\bar\mu}(t)\,dt .
\end{equation}
\end{enumerate}
\end{lemma}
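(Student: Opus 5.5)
The plan is to prove (i) by weak duality plus attainment, and to obtain (ii) and (iii) from an explicit threshold allocation. The argument rests on the pointwise inequality $\tau Z\leq c\tau+(Z-c)^+$, valid for every $c\in\R$ and $\tau\in[0,1]$. It holds because $\tau(Z-c)\leq\tau(Z-c)^+\leq(Z-c)^+$.

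\textbf{Weak duality.} Integrate this inequality against $\mu$ and use $\int\tau\,d\mu=a$. This gives
\[
\int\tau Z\,d\mu\leq ca+\int(Z-c)^+\,d\mu
\]
for every feasible $\tau$ and every $c$. Hence $\mathcal T_a^\mu(Z)$ is at most the infimum in \eqref{eq:knapsack_duality}. Finiteness follows from $Z\in L^1(\mu)$ and $\mu(\Omega)<\infty$.

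\textbf{Attainment for $0<a<\mu(\Omega)$.} Let $G(c)=\mu(Z>c)$. This function is nonincreasing and right-continuous, with $G(c)\to\mu(\Omega)$ as $c\to-\infty$ and $G(c)\to0$ as $c\to\infty$, since $\mu$ is finite and $Z$ is real-valued. Set
\[
c^*=\inf\{c:G(c)\leq a\}.
\]
This is finite because $0<a<\mu(\Omega)$. Right-continuity gives $G(c^*)\leq a$, and left limits give $\mu(Z\geq c^*)=\lim_{c\uparrow c^*}G(c)\geq a$. So the stated $\gamma$ lies in $[0,1]$. If $\mu(Z=c^*)=0$, the two inequalities force $G(c^*)=a$, so any $\gamma$ works.

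For this $\tau^*$, the pointwise inequality holds with equality $\mu$-a.e. at $c=c^*$:
\begin{itemize}
\item On $\{Z>c^*\}$ we have $\tau^*=1$, and both sides equal $Z$.
\item On $\{Z<c^*\}$ we have $\tau^*=0$, and both sides equal $0$.
\item On $\{Z=c^*\}$ we have $(Z-c^*)^+=0$, and both sides equal $\gamma c^*$.
\end{itemize}
Also $\int\tau^*\,d\mu=a$ by construction. So $\tau^*$ is feasible and attains the dual value at $c^*$. This proves (ii) and equality in (i).

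\textbf{Boundary cases.} When $a=0$, the only feasible $\tau$ vanishes $\mu$-a.e., so the value is $0$. The dual attains $0$ by letting $c\to\infty$, using dominated convergence since $(Z-c)^+\leq|Z|+|c|$ and $\mu$ is finite. When $a=\mu(\Omega)$, we must have $\tau=1$ a.e., so the value is $\int Z\,d\mu$. The dual attains this value by letting $c\to-\infty$, since $ca+\int(Z-c)^+\,d\mu=\int Z\,d\mu+\int(c-Z)^+\,d\mu$.

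\textbf{Quantile form (iii).} Here I would use the standard representation: if $U\sim\mathrm{Unif}(0,1)$, then $q_Z^{\bar\mu}(U)$ has the law of $Z$ under $\bar\mu$. The upper-tail optimum in (ii) has a quantile counterpart: taking the level set $\{U>1-p\}$ gives
\[
\int\tau^*Z\,d\mu=\Lambda\int_{1-p}^1 q_Z^{\bar\mu}(t)\,dt.
\]
To see this, note that $q_Z^{\bar\mu}(t)>c^*$ for $t>1-\bar\mu(Z\leq c^*)$ and $q_Z^{\bar\mu}(t)=c^*$ on the tie band. The mass on $\{Z>c^*\}$ then transfers to $\int_{1-G(c^*)/\Lambda}^1 q\,dt$. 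The tie portion contributes $\gamma\mu(Z=c^*)c^*=\Lambda c^*\,(p-G(c^*)/\Lambda)$, which is the integral of the constant $c^*$ over the remaining length.

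\textbf{Main obstacle.} The only delicate step is the careful handling of atoms at $c^*$, both in choosing $\gamma$ and in matching the quantile band. Everything else is routine.
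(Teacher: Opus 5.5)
Your proof is correct and follows the paper's argument essentially step for step: weak duality from the pointwise bound $\tau(Z-c)\le(Z-c)^+$, the same cutoff $c^\ast=\inf\{c:\mu(Z>c)\le a\}$ with equality $\mu$-a.e.\ at $c^\ast$, and limits $c\to\pm\infty$ for the boundary cases; your quantile step is more explicit than the paper's appeal to the bathtub principle. Two cosmetic fixes: first, for $c\to\infty$ dominate $(Z-c)^+$ by $Z^+$ (valid for $c\ge0$) rather than by the $c$-dependent $|Z|+|c|$; second, the strict inequality $q_Z^{\bar\mu}(t)>c^\ast$ holds for $t>\bar\mu(Z\le c^\ast)$, not $t>1-\bar\mu(Z\le c^\ast)$, which is the threshold your next line already uses.
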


\begin{proof}
\emph{(i) Weak duality.} For any feasible $\tau$ and any $c\in\R$,
\[
\int \tau Z\,d\mu
=
c\int\tau\,d\mu+\int\tau(Z-c)\,d\mu
=
ca+\int\tau(Z-c)\,d\mu
\leq
ca+\int(Z-c)^+\,d\mu,
\]
because $0\leq\tau\leq1$ implies
\[
\tau(Z-c)\leq (Z-c)^+
\]
pointwise. Taking the supremum over feasible $\tau$ and then the infimum over
$c$ gives ``$\leq$''.

\emph{Attainment.} For $0<a<\Lambda$, the function
\[
c\mapsto\mu(Z>c)
\]
is nonincreasing and right-continuous with limits $\Lambda$ and $0$, so
\[
c^\ast=\inf\{c:\mu(Z>c)\leq a\}
\]
satisfies
\[
\mu(Z>c^\ast)\leq a\leq\mu(Z\geq c^\ast).
\]
The tie fraction $\gamma$ is chosen so that $\tau^\ast$ is feasible. Since
\[
\tau^\ast=1 \quad\text{on }\{Z>c^\ast\},
\qquad
\tau^\ast=0 \quad\text{on }\{Z<c^\ast\},
\]
we have
\[
(Z-c^\ast)\tau^\ast=(Z-c^\ast)^+
\qquad
\mu\text{-a.e.}
\]
Hence
\[
\int\tau^\ast Z\,d\mu
=
c^\ast a+\int(Z-c^\ast)^+\,d\mu,
\]
which matches the dual value at $c^\ast$. Together with weak duality, this
proves both (i) and the optimality of $\tau^\ast$ in (ii). The boundary cases
$a\in\{0,\Lambda\}$ are immediate, with the dual infimum attained in the
limit $c\to+\infty$ and $c\to-\infty$, respectively.

\emph{(iii) Quantile form.} Optimality of the threshold rule means the
supremum is obtained by assigning unit weight to the largest values of $Z$
until the budget $a$ is exhausted. Equivalently, under the normalized law
$\bar\mu$, this integrates the top $p=a/\Lambda$ fraction of the quantile
function, giving \eqref{eq:knapsack_quantile}. See
\citet{CormenLeisersonRivest89} for the finite/algorithmic version; the
measure-theoretic statement is the standard bathtub principle.
\end{proof}

\subsection{The Frisch--Waugh--Lovell identity}

\begin{lemma}[Population Frisch--Waugh--Lovell]\label{lem:fwl}
Partition the augmented regressor as
\[
\tilde x=(w',x_2')'
\]
with $w\in\R^{d_1+1}$ and $x_2\in\R^{d_2}$, and suppose
\[
Q=\E[\tilde x\tilde x']
\]
is positive definite. Write
\[
\Sigma_{11}=\E[ww'],
\quad
\Sigma_{12}=\E[wx_2']=\Sigma_{21}',
\quad
\Sigma_{22}=\E[x_2x_2'],
\]
and define the linear-projection residual and Schur complement
\[
x_2^{*}=x_2-\Sigma_{21}\Sigma_{11}^{-1}w,
\qquad
\Sigma_{2\cdot1}
=
\Sigma_{22}-\Sigma_{21}\Sigma_{11}^{-1}\Sigma_{12}.
\]
Then $\Sigma_{2\cdot1}$ is positive definite,
\[
\E[x_2^{*}w']=0,
\]
and for any random variable $m$ satisfying $\E[\|\tilde x\|\,|m|]<\infty$,
the solution
$\theta=(\theta_1',\theta_2')'$ of the population normal equations
\[
Q\theta=\E[\tilde x m]
\]
is given by
\begin{equation}\label{eq:fwl_identity}
\theta_2=\Sigma_{2\cdot1}^{-1}\,\E[x_2^{*}m],
\qquad
\theta_1=\Sigma_{11}^{-1}\bigl(\E[wm]-\Sigma_{12}\theta_2\bigr).
\end{equation}
\end{lemma}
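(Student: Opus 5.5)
The plan is to work directly with the block form of the normal equations $Q\theta=\E[\tilde x m]$, where $Q$ is positive definite by assumption. There are three steps, taken in order.

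\textbf{Step 1: $\Sigma_{2\cdot1}$ is positive definite.} Since $Q\succ0$, its principal block $\Sigma_{11}$ is positive definite, hence invertible. The Schur complement $\Sigma_{2\cdot1}$ of $\Sigma_{11}$ in $Q$ is then positive definite by the standard block factorization
\[
Q=\begin{pmatrix}I&0\\ \Sigma_{21}\Sigma_{11}^{-1}&I\end{pmatrix}
\begin{pmatrix}\Sigma_{11}&0\\0&\Sigma_{2\cdot1}\end{pmatrix}
\begin{pmatrix}I&\Sigma_{11}^{-1}\Sigma_{12}\\0&I\end{pmatrix},
\]
which is a congruence, so it preserves positive definiteness. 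Alternatively, for $b\neq0$ one can write $b'\Sigma_{2\cdot1}b=\E[(b'x_2^*)^2]=(a',b')Q(a',b')'$ with $a=-\Sigma_{11}^{-1}\Sigma_{12}b$, and this is strictly positive.

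\textbf{Step 2: orthogonality and the Gram matrix of the residual.} Compute
\[
\E[x_2^*w']=\Sigma_{21}-\Sigma_{21}\Sigma_{11}^{-1}\Sigma_{11}=0 .
\]
Expanding gives $\E[x_2^*x_2^{*\prime}]=\Sigma_{2\cdot1}$. All of these moments exist because $\E\|\tilde x\|^2<\infty$.

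\textbf{Step 3: solving the block normal equations.} Write them as
\[
\Sigma_{11}\theta_1+\Sigma_{12}\theta_2=\E[wm],\qquad
\Sigma_{21}\theta_1+\Sigma_{22}\theta_2=\E[x_2m].
\]
Solving the first equation for $\theta_1$ gives the stated formula for $\theta_1$. Substituting into the second equation gives
\[
\Sigma_{2\cdot1}\theta_2=\E[x_2m]-\Sigma_{21}\Sigma_{11}^{-1}\E[wm]=\E[x_2^*m].
\]
This uses linearity of expectation. The expectations are finite because $\E[\|\tilde x\|\,|m|]<\infty$ and $x_2^*$ is a fixed linear combination of the components of $\tilde x$. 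Inverting $\Sigma_{2\cdot1}$, which is allowed by Step 1, gives $\theta_2$. Finally, $\theta$ is unique because $Q$ is nonsingular, so these formulas describe the unique solution.

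The main obstacle, such as it is, is the positive definiteness of $\Sigma_{2\cdot1}$. I will handle it with the quadratic-form argument from Step 1, which ties it directly to $Q\succ0$. Everything else is routine block algebra.
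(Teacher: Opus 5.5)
Your proposal is correct and follows the paper's proof essentially step for step: positive definiteness of $\Sigma_{2\cdot1}$, the direct computation of $\E[x_2^*w']=0$, and solving the block normal equations by substituting the first block into the second. Your congruence factorization $Q=LDL'$ (or the quadratic-form argument) simply spells out the ``standard block-matrix algebra'' that the paper cites without detail.
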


\begin{proof}
Positive definiteness of $Q$ implies that of its principal block
$\Sigma_{11}$ and of the Schur complement $\Sigma_{2\cdot1}$ by standard
block-matrix algebra. Orthogonality is direct:
\[
\E[x_2^{*}w']
=
\Sigma_{21}
-
\Sigma_{21}\Sigma_{11}^{-1}\Sigma_{11}
=
0 .
\]

Since $\|w\|,\|x_2\|\leq\|\tilde x\|$, the hypothesis
$\E[\|\tilde x\|\,|m|]<\infty$ gives
\[
\E[\|w\|\,|m|]<\infty,
\qquad
\E[\|x_2\|\,|m|]<\infty,
\]
so $\E[wm]$ and $\E[x_2m]$ are well defined and finite; this is the only
integrability used below.

The normal equations in block form read
\[
\Sigma_{11}\theta_1+\Sigma_{12}\theta_2=\E[wm],
\qquad
\Sigma_{21}\theta_1+\Sigma_{22}\theta_2=\E[x_2m].
\]
Solving the first block for $\theta_1$ gives
\[
\theta_1
=
\Sigma_{11}^{-1}(\E[wm]-\Sigma_{12}\theta_2).
\]
Substituting into the second block,
\[
\Sigma_{21}\Sigma_{11}^{-1}
\bigl(\E[wm]-\Sigma_{12}\theta_2\bigr)
+
\Sigma_{22}\theta_2
=
\E[x_2m],
\]
that is,
\[
\bigl(\Sigma_{22}-\Sigma_{21}\Sigma_{11}^{-1}\Sigma_{12}\bigr)\theta_2
=
\E[x_2m]-\Sigma_{21}\Sigma_{11}^{-1}\E[wm]
=
\E[x_2^{*}m],
\]
where the last equality uses
\[
x_2^{*}=x_2-\Sigma_{21}\Sigma_{11}^{-1}w.
\]
Since $\Sigma_{2\cdot1}$ is invertible,
\[
\theta_2=\Sigma_{2\cdot1}^{-1}\E[x_2^{*}m],
\]
which is \eqref{eq:fwl_identity}. See \citet{Davidson2004} for the textbook
treatment.
\end{proof}
\subsection{The Beresteanu--Molinari segment representation and its support function}
\label{app:bm_construction}

The construction used throughout \cref{sec:condexp} -- representing the interval-outcome identification problem through the pointwise segment
\[
\Gamma(\omega)=\{\tilde x(\omega)t:t\in[y_L(\omega),y_U(\omega)]\}
\]
and recovering $\Theta^I$ as $Q^{-1}$ by applying $Q^{-1}$ to its Aumann integral -- is not new to this paper: it is a direct restatement, in the notation used here, of \citet[Proposition~4.1]{BM2008}, the result that introduced random-set methods into the identification analysis of best linear predictors with interval outcome data. We record the construction in general form, derive the closed-form support function it implies, and use the latter to state the computational algorithm suggested by this representation.

\paragraph{The identification problem.}
Absent additional information, a coefficient $\theta\in\R^{d+1}$ is an
admissible best linear predictor coefficient if and only if it solves the
population normal equations
\[
\E[\tilde x(y-\tilde x'\theta)]=\mathbf 0
\]
for \emph{some} completion $\eta$ of the joint distribution of
$(y,\tilde x)$ compatible with the observed interval bounds, i.e., for some
$y\in\Sel^1(Y)$. Written this way, verifying whether a candidate $\theta$ is
admissible requires searching over completions $\eta$ of the joint law of
$(y,\tilde x)$ subject to a continuum of conditional moment inequalities of
the form
\[
\eta\bigl([t_0,t_1]\times B\bigr)
\geq
\PP\bigl(y_L\geq t_0,\,
y_U\leq t_1,\,
\tilde x\in B\bigr),
\qquad
\forall\, t_0\leq t_1,\ \forall\, \text{measurable } B,
\]
by the Artstein-type characterization underlying \cref{thm:aumann}. This
description is sharp but, on its face, not tractable to compute with.
\citet{BM2008} showed that when $\tilde x$ is exactly observed, this
description is equivalent to equation \eqref{eq:condexp_base_id}: the two routes --
optimizing over completions of the joint law subject to the moment
inequalities above, versus taking the Aumann integral of the segment
$\Gamma$ -- characterize the same set $\Theta^I$. Finally, Proposition~\ref{prop:basic} in the main text establishes the properties of the the identification set in equation \eqref{eq:condexp_base_id} using our notation and
setting.

\paragraph{The support function of $\Theta^I$ in closed form.}
Combining \citet[Proposition~4.1]{BM2008} with \cref{thm:aumann}(iv) yields a
one-line alternative derivation of the directional bounds
\eqref{eq:unconstrained_upper}--\eqref{eq:unconstrained_lower}, in place of
the $\tau$-allocation argument used in \cref{sec:condexp}. For fixed
$\omega$ and $u\in\R^{d+1}$,
\[
h_{\Gamma(\omega)}(u)
=
\sup_{t\in[y_L(\omega),y_U(\omega)]}
(u'\tilde x(\omega))\,t
=
\bigl(u'\tilde x(\omega)\bigr)^+y_U(\omega)
-
\bigl(u'\tilde x(\omega)\bigr)^-y_L(\omega),
\]
because the segment is maximized at $y_U$ when $u'\tilde x>0$ and at $y_L$
when $u'\tilde x<0$. Since $\Theta^I=Q^{-1}\mathcal M$ and $Q^{-1}$ is
symmetric,
\[
h_{\Theta^I}(r)=h_{\mathcal M}(Q^{-1}r)
\]
for every $r\in\R^{d+1}$, and \cref{thm:aumann}(iv) gives
\[
h_{\mathcal M}(Q^{-1}r)
=
\E[h_{\Gamma(\omega)}(Q^{-1}r)].
\]
Writing
\[
s_r=r'Q^{-1}\tilde x
\]
as in \cref{sec:condexp},
\begin{equation}\label{eq:bm_support}
h_{\Theta^I}(r)
=
\E\bigl[(s_r)^+y_U\bigr]
-
\E\bigl[(s_r)^-y_L\bigr].
\end{equation}
A direct algebraic check, adding and subtracting
$\E[(s_r)^+y_L]$ and using
\[
s_ry_L=(s_r)^+y_L-(s_r)^-y_L,
\]
shows that \eqref{eq:bm_support} coincides exactly with
\eqref{eq:unconstrained_upper}. The two derivations -- the
$\tau$-allocation argument of \cref{sec:condexp} and the direct
support-function computation here -- are therefore dual routes to the same
closed form.

\paragraph{Three equivalent derivations.}
This coincidence is not an accident specific to this paper. The same closed-form directional bound has been obtained by at least three routes in the literature, each of which reappears somewhere in this paper's own argument: \citet{BM2008} via the Aumann-expectation/support-function route just given; \citet{Stoye2007} via direct constrained optimization over the selection $y\in\Sel^1(Y)$, which is exactly the $\tau$-allocation argument used to derive \eqref{eq:unconstrained_upper}--\eqref{eq:unconstrained_lower} in \cref{sec:condexp}; and \citet{MagnacMaurin2008} via an argument built on the Frisch--Waugh--Lovell theorem, the same identity recorded in \cref{lem:fwl} above and used to obtain the coordinate-breadth formula in Proposition~\ref{prop:width}. The agreement among these distinct approaches provides a useful cross-check on the formulas used in Propositions~\ref{prop:basic}--\ref{prop:width}.

\paragraph{The resulting algorithm.}
For compact convex identified sets, the support function fully determines the
set. In particular, Proposition~\ref{prop:basic} implies
\[
\Theta^I
=
\bigcap_{r\in S^d}
\bigl\{
\theta\in\R^{d+1}: r'\theta\leq h_{\Theta^I}(r)
\bigr\},
\qquad
S^d=\{r\in\R^{d+1}:\|r\|=1\}.
\]
The same support-function method applies to the restricted problems whenever
the corresponding attainable moment set is compact and convex. This suggests
the following direct construction, used to compute or estimate the objects in
\cref{sec:condexp}: (1) compute $Q$, or its sample analog
\[
Q_n=n^{-1}\sum_i \tilde x_i\tilde x_i';
\]
(2) for a chosen direction $r$, form the pointwise score
\[
s_r=r'Q^{-1}\tilde x,
\]
or
\[
s_{r,i}=r'Q_n^{-1}\tilde x_i
\]
at each observation; (3) evaluate the relevant closed-form expectation --
\eqref{eq:unconstrained_upper}/\eqref{eq:bm_support} for the unconstrained
set, the dual knapsack formula in  Proposition~\ref{prop:directional_contraction_mean} for the mean-constrained set, and so on; (4) repeat over the $2(d+1)$ directions $\pm e_j$, $j=0,..,d$, for coordinate bounds, or over a finer grid on $S^d$ to trace an outer polytope approximation to the boundary of $\Theta^I$. Every step is a closed-form expectation rather than the solution to a nonlinear or convex program, which is what makes the restrictions studied in \cref{sec:condexp} directly estimable by their sample analogs. Formal statistical inference for the resulting estimator -- establishing that the sample support function converges to a Gaussian process and that this can be used to build confidence statements for $\Theta^I$ or for functionals of it -- is developed in \citet{BM2008} and, for settings in which the bounding functions themselves must be estimated, in \citet{ChandrasekharEtAl2019}; this lies outside the identification-only scope of the present paper.

\section{Proofs for \texorpdfstring{\cref{sec:condexp}}{Section~\ref{sec:condexp}}}\label{app:proofs}

Throughout, we use the Aumann-integral theorem, Lyapunov's convexity theorem,
the continuous fractional-knapsack theorem, and the population
Frisch--Waugh--Lovell identity collected in \cref{app:background}.

\subsection{Proof of Proposition \ref{prop:basic}}

\begin{proof}
Nonemptiness follows because $y_L\in\Sel^1(Y)$. Indeed, since the first
coordinate of $\tilde x$ is one, Assumption \ref{as:main} implies
\[
\E[|y_L|+|y_U|]<\infty .
\]

Define the random correspondence
\[
\Gamma(\omega)
=
\left\{
\tilde x(\omega)t:
t\in[y_L(\omega),y_U(\omega)]
\right\}
\subseteq\R^{d+1}.
\]
It is measurable, nonempty, compact-valued, and convex-valued. Moreover,
\[
\sup_{z\in\Gamma(\omega)}\|z\|
\leq
\|\tilde x(\omega)\|
\bigl(|y_L(\omega)|+|y_U(\omega)|\bigr),
\]
whose expectation is finite under Assumption \ref{as:main}. Thus $\Gamma$ is
integrably bounded.

Its integrable selections are exactly the vectors $\tilde x y$ with
$y\in\Sel^1(Y)$. Therefore,
\[
\int_\Omega\Gamma\,d\PP
=
\left\{
\E[\tilde x y]:
y\in\Sel^1(Y)
\right\}
=
\mathcal M.
\]
The Aumann integral of an integrably bounded measurable compact-valued
correspondence in a finite-dimensional space is compact. Since $\Gamma$ is
convex-valued, its integral is convex. Hence $\mathcal M$ is nonempty,
compact, and convex.

Because $Q^{-1}$ is an invertible linear map,
\[
\Theta^I=Q^{-1}\mathcal M
\]
is also nonempty, compact, and convex.
\end{proof}

\subsection{Proof of Proposition \ref{prop:mean_hyperplane}}

\begin{proof}
Suppose $\theta\in\Theta^I_\kappa$. Then there exists
$y\in\Sel^1(Y\mid\kappa)$ such that
\[
Q\theta=\E[\tilde x y].
\]
Hence $\theta\in\Theta^I$.

Let
\[
e_0=(1,0,\dots,0)'.
\]
Since the first coordinate of $\tilde x$ is one,
\[
e_0'Q
=
\E[\tilde x]'.
\]
Therefore,
\[
\E[\tilde x]'\theta
=
e_0'Q\theta
=
e_0'\E[\tilde x y]
=
\E[y]
=
\kappa.
\]
Thus $\theta\in\Theta_\kappa$, proving
\[
\Theta^I_\kappa
\subseteq
\Theta^I\cap\Theta_\kappa.
\]

Conversely, suppose
\[
\theta\in\Theta^I\cap\Theta_\kappa.
\]
There exists $y\in\Sel^1(Y)$ such that
\[
Q\theta=\E[\tilde x y].
\]
Taking first coordinates,
\[
\E[y]
=
e_0'Q\theta
=
\E[\tilde x]'\theta
=
\kappa.
\]
Hence $y\in\Sel^1(Y\mid\kappa)$ and
$\theta\in\Theta^I_\kappa$.

This proves
\[
\Theta^I_\kappa=\Theta^I\cap\Theta_\kappa.
\]

The set $\Theta^I$ is compact and convex, and $\Theta_\kappa$ is a closed
affine hyperplane. Their intersection is compact and convex. Nonemptiness
follows from compatibility.

Since $\Theta_\kappa$ has dimension $d$,
\[
\dim\operatorname{aff}(\Theta^I_\kappa)\leq d.
\]
If $\Theta^I$ is full-dimensional and its relative interior intersects
$\Theta_\kappa$, the intersection contains a relatively open subset of
$\Theta_\kappa$, and therefore its affine hull is $\Theta_\kappa$.
\end{proof}

\subsection{Proof of Proposition \ref{prop:width}}

\begin{proof}
Using equations \eqref{eq:unconstrained_upper} and
\eqref{eq:unconstrained_lower},
\begin{align*}
w_{\Theta^I}(r)
&=
\E[(s_r)^+\Delta]+\E[(s_r)^-\Delta]\\
&=
\E[|s_r|\Delta].
\end{align*}

For the coordinate identity, define
\[
\gamma_j'
=
\E[\tilde x_j\tilde x_{-j}']
\E[\tilde x_{-j}\tilde x_{-j}']^{-1},
\qquad
\tilde x_j^*
=
\tilde x_j-\gamma_j'\tilde x_{-j}.
\]
Then
\[
\E[\tilde x_{-j}\tilde x_j^*]=0.
\]
Let
\[
g
=
\frac{\tilde x_j^*}{\sigma_j^2}.
\]
For every $k\neq j$,
\[
\E[\tilde x_k g]=0,
\]
while
\[
\E[\tilde x_jg]
=
\frac{\E[\tilde x_j\tilde x_j^*]}{\sigma_j^2}.
\]
Since
\[
\tilde x_j=\tilde x_j^*+\gamma_j'\tilde x_{-j}
\]
and $\E[\tilde x_{-j}\tilde x_j^*]=0$, we have
\[
\E[\tilde x_j\tilde x_j^*]
=
\E[(\tilde x_j^*)^2]
=
\sigma_j^2.
\]
Therefore,
\[
\E[\tilde x_jg]=1.
\]
Hence
\[
\E[\tilde x g]=e_j.
\]

Since $g$ is linear in $\tilde x$, write $g=a'\tilde x$. Then
\[
\E[\tilde x g]
=
\E[\tilde x\tilde x']a
=
Qa.
\]
Thus
\[
Qa=e_j,
\]
so
\[
a=Q^{-1}e_j.
\]
Therefore,
\[
g=e_j'Q^{-1}\tilde x=s_j.
\]
Substitution into \eqref{eq:dir_width} gives
\[
\theta_j^{\max}-\theta_j^{\min}
=
\frac{\E[|\tilde x_j^*|\Delta]}{\sigma_j^2}.
\]
Since $\sigma_j^2=\E[(\tilde x_j^*)^2]$, this is exactly
\eqref{eq:coord_width}.
\end{proof}

\subsection{Proof of Proposition \ref{prop:directional_contraction_mean} and Corollary \ref{cor:local_mean_contraction}}

\begin{proof}
Every $y\in\Sel^1(Y\mid\kappa)$ can be written as
\[
y=y_L+\tau\Delta
\]
for some measurable $\tau:\Omega\to[0,1]$, and the mean restriction is
equivalent to
\[
\E[\tau\Delta]=\kappa-\E[y_L]=\alpha.
\]
Therefore,
\[
U_\kappa(r)
:=
\sup_{\theta\in\Theta^I_\kappa}r'\theta
=
\E[s_r y_L]
+
\sup_{\tau}
\left\{
\E[s_r\tau\Delta]:
0\leq\tau\leq1,\ \E[\tau\Delta]=\alpha
\right\}.
\]

Let $\Lambda=\E[\Delta]>0$ and
\[
\mu(A)=\E[\Delta\1_A],
\qquad
\bar\mu(A)=\frac{\mu(A)}{\Lambda},
\]
so that $\mu$ is the finite measure of \cref{lem:knapsack} with
$\mu(\Omega)=\Lambda$, and $\alpha\in[0,\Lambda]$ by Assumption
\ref{assm:fixed_mean}. Applying the quantile form of the continuous
fractional-knapsack lemma, \cref{lem:knapsack}(iii), with $Z=s_r$ and
$a=\alpha$, gives directly
\[
\sup_{\substack{0\leq\tau\leq1\\
\E[\tau\Delta]=\alpha}}
\E[s_r\tau\Delta]
=
\Lambda\int_{1-p_\kappa}^{1}q_r(t)\,dt,
\]
where $p_\kappa=\alpha/\Lambda$ and $q_r$ is the generalized quantile
function of $s_r$ under $\bar\mu$. Hence
\[
U_\kappa(r)
=
\E[s_ry_L]
+
\Lambda\int_{1-p_\kappa}^{1}q_r(t)\,dt,
\]
which is the constrained upper support value stated in the proposition.

The unconstrained upper support value admits the same representation. By
\eqref{eq:unconstrained_upper}, the unconstrained maximizer allocates
width to exactly the states with $s_r>0$, so, writing
$u_r=\bar\mu(s_r>0)$, applying \cref{lem:knapsack}(iii) at the budget
$a=u_r\Lambda=\E[\Delta\1\{s_r>0\}]$ -- for which the optimal threshold
in part (ii) of the lemma is exactly $c^\ast=0$ -- gives
\[
\E[(s_r)^+\Delta]
=
\Lambda\int_{1-u_r}^{1}q_r(t)\,dt,
\]
and therefore
\[
\sup_{\theta\in\Theta^I}r'\theta
=
\E[s_ry_L]+\E[(s_r)^+\Delta]
=
\E[s_ry_L]+\Lambda\int_{1-u_r}^{1}q_r(t)\,dt.
\]

Therefore,
\[
\delta_\kappa(r)
=
\Lambda
\left[
\int_{1-u_r}^{1}q_r(t)\,dt
-
\int_{1-p_\kappa}^{1}q_r(t)\,dt
\right].
\]
Equivalently,
\[
\delta_\kappa(r)
=
\Lambda
\int_{\min\{1-p_\kappa,\,1-u_r\}}^{\max\{1-p_\kappa,\,1-u_r\}}
|q_r(t)|\,dt.
\]
Indeed, the quantile is nonpositive to the left of the zero-quantile region
and nonnegative to its right. This proves equation \eqref{eq:mean_band_integral} in the proposition.

For the breadth identity, note that
\[
w_{\Theta^I}(r)
=
\sup_{\theta\in\Theta^I}r'\theta
-
\inf_{\theta\in\Theta^I}r'\theta
=
\sup_{\theta\in\Theta^I}r'\theta
+
\sup_{\theta\in\Theta^I}(-r)'\theta,
\]
and the same identity holds for $\Theta^I_\kappa$. Hence
\[
w_{\Theta^I}(r)-w_{\Theta^I_\kappa}(r)
=
\delta_\kappa(r)+\delta_\kappa(-r),
\]
which proves equation \eqref{eq:breadth_contraction_mean} in the proposition.

Now suppose the law of $s_r$ under $\bar\mu$ has a density
$f_r$ that is continuous and strictly positive at zero. Then
\[
q_r(1-u_r)=0
\]
and, locally around $1-u_r$,
\[
q_r(t)
=
\frac{t-(1-u_r)}{f_r(0)}
+
o\!\left(|t-(1-u_r)|\right).
\]
Integrating this expansion over an interval of length
$|p_\kappa-u_r|$ gives
\[
\delta_\kappa(r)
=
\frac{\Lambda}{2f_r(0)}
(p_\kappa-u_r)^2
+
o\!\left((p_\kappa-u_r)^2\right).
\]
This proves Corollary \ref{cor:local_mean_contraction}.
\end{proof}

\subsection{Proof of Proposition \ref{prop:transformed_convexity}}

\begin{proof}
Take $m_0,m_1\in\mathcal M_f(\kappa)$, generated by
$y_0,y_1\in\Sel^1(Y\mid\kappa)$. Define
\[
H
=
\begin{pmatrix}
y_1-y_0\\[2mm]
\tilde x\{f(y_1)-f(y_0)\}
\end{pmatrix}.
\]
The vector $H$ is integrable under Assumptions \ref{as:main} and
\ref{as:int_f}. Indeed, $y_0$ and $y_1$ lie between $y_L$ and $y_U$, and $|y_1-y_0| \leq 2|y_L|+|y_U|$, whose expectation is finite from Assumption \ref{as:main} because the intercept is included. Assumption \ref{as:int_f} gives
\[
|f(y_j)|\leq F,
\qquad j=0,1,
\]
with $\E[\|\tilde x\|F]<\infty$.

Define
\[
\nu(A)=\int_AH\,d\PP.
\]
Because $\nu\ll\PP$ and $\PP$ is atomless, Lyapunov's theorem implies that
the range of $\nu$ is compact and convex. Hence, for every
$\lambda\in[0,1]$, there exists $A_\lambda$ such that
\[
\nu(A_\lambda)=\lambda\nu(\Omega).
\]

Define
\[
y_\lambda
=
y_1\1_{A_\lambda}+y_0\1_{A_\lambda^c}.
\]
Then $y_\lambda\in\Sel^1(Y)$. The first coordinate of the vector-measure
identity gives
\[
\E[y_\lambda]
=
(1-\lambda)\E[y_0]+\lambda\E[y_1]
=
\kappa.
\]
Thus $y_\lambda\in\Sel^1(Y\mid\kappa)$.

The remaining coordinates give
\[
\E[\tilde x f(y_\lambda)]
=
(1-\lambda)m_0+\lambda m_1.
\]
Therefore $\mathcal M_f(\kappa)$ is convex. Its image under $Q^{-1}$ is also
convex.
\end{proof}
\subsection{Proof of Proposition \ref{prop:moment_aux_convexity}}

\begin{proof}

\textbf{Integrability.} For any $y_0,y_1\in\Sel^1(Y)$, both $f(y_0)$ and
$f(y_1)$ are integrable: Assumption~\ref{as:int_f} gives $|f(y_i)|\leq F$
with $\E[F^2]<\infty$, and $\E[F^2]<\infty\Rightarrow\E[F]<\infty$ since
$F\geq0$, so $f(y_i)\in L^1(\PP)$. Likewise $\tilde x y_0,\tilde x
y_1\in L^1(\PP;\R^{d+1})$: since $y_L\leq y_i\leq y_U$ a.s.,
$|y_i|\leq|y_L|+|y_U|$, and Assumption~\ref{as:main} gives
$\E[\|\tilde x\|(|y_L|+|y_U|)]<\infty$, so $\E[\|\tilde x\||y_i|]<\infty$
for $i=0,1$.

\smallskip
\noindent\textbf{Step 1: $\mathcal K_f$ is convex.}
Fix $\kappa_0,\kappa_1\in\mathcal K_f$ and $\lambda\in[0,1]$; we show
$(1-\lambda)\kappa_0+\lambda\kappa_1\in\mathcal K_f$. By definition there
exist $y_0,y_1\in\Sel^1(Y)$ with $\E[f(y_0)]=\kappa_0$ and
$\E[f(y_1)]=\kappa_1$. Define the scalar random variable
\[
H=f(y_1)-f(y_0)\in L^1(\PP)
\]
and the vector measure
\[
\nu(A)=\int_A H\,d\PP,
\qquad A\in\mathcal F.
\]
Since $(\Omega,\mathcal F,\PP)$ is atomless (Assumption~\ref{as:atomless})
and $\nu$ is absolutely continuous with respect to $\PP$ with integrable
density $H$, $\nu$ is a nonatomic, finite (one-dimensional) vector
measure. By the Lyapunov convexity theorem for vector measures
(\cref{app:background}), the range $\{\nu(A):A\in\mathcal F\}$ is convex
and compact. Since $\nu(\varnothing)=0$ and $\nu(\Omega)=\E[H]$ both lie
in this range, convexity of the range implies that for every
$\lambda\in[0,1]$ there exists $A_\lambda\in\mathcal F$ with
\[
\nu(A_\lambda)=\lambda\,\nu(\Omega)
=
\lambda\,\E[f(y_1)-f(y_0)].
\]
Define
\[
y_\lambda=y_1\mathbf 1_{A_\lambda}+y_0\mathbf 1_{A_\lambda^c}.
\]
Since $y_0,y_1\in\Sel^1(Y)$ and $y_L\leq y_\lambda\leq y_U$ pointwise
(each state simply follows whichever of $y_0,y_1$ it is assigned to,
both of which satisfy the interval constraint), $y_\lambda\in\Sel^1(Y)$.
Moreover,
\begin{align*}
\E[f(y_\lambda)]
&=
\E\bigl[f(y_1)\mathbf 1_{A_\lambda}\bigr]
+
\E\bigl[f(y_0)\mathbf 1_{A_\lambda^c}\bigr]
\\&=
\E[f(y_0)]
+
\E\bigl[(f(y_1)-f(y_0))\mathbf 1_{A_\lambda}\bigr]
\\&=
\kappa_0+\nu(A_\lambda)
\\&=
\kappa_0+\lambda(\kappa_1-\kappa_0)
=
(1-\lambda)\kappa_0+\lambda\kappa_1.
\end{align*}
Hence $(1-\lambda)\kappa_0+\lambda\kappa_1\in\mathcal K_f$, proving
$\mathcal K_f$ is convex.

\smallskip
\noindent\textbf{Step 2: $\Theta^I_{\kappa_f}$ is nonempty.} Since
$\kappa_f\in\mathcal K_f$ by hypothesis, there exists
$y\in\Sel^1(Y)$ with $\E[f(y)]=\kappa_f$, i.e.,
$y\in\Sel^1(Y\mid f,\kappa_f)$, so $Q^{-1}\E[\tilde xy]\in\Theta^I_{\kappa_f}$.

\smallskip
\noindent\textbf{Step 3: $\Theta^I_{\kappa_f}$ is convex.}
Fix $\theta_0,\theta_1\in\Theta^I_{\kappa_f}$ and $\lambda\in[0,1]$; we
exhibit a point of $\Theta^I_{\kappa_f}$ equal to
$(1-\lambda)\theta_0+\lambda\theta_1$. By definition there exist
$y_0,y_1\in\Sel^1(Y\mid f,\kappa_f)$ with $Q^{-1}\E[\tilde xy_0]=\theta_0$
and $Q^{-1}\E[\tilde xy_1]=\theta_1$; in particular
$\E[f(y_0)]=\E[f(y_1)]=\kappa_f$.

Define the $(d+2)$-dimensional integrable random vector
\[
H=
\begin{pmatrix}
f(y_1)-f(y_0)\\
\tilde x(y_1-y_0)
\end{pmatrix}
\in L^1(\PP;\R^{d+2})
\]
(integrability of each component was verified above) and the
corresponding vector measure $\nu(A)=\int_AH\,d\PP$. As in Step 1,
atomlessness of $\PP$ makes $\nu$ a nonatomic finite vector measure, so
by the Lyapunov convexity theorem its range is convex and compact, and
for every $\lambda\in[0,1]$ there exists $A_\lambda\in\mathcal F$ with
\[
\nu(A_\lambda)=\lambda\,\nu(\Omega).
\]
Because $y_0,y_1\in\Sel^1(Y\mid f,\kappa_f)$, the \emph{first} coordinate
of $\nu(\Omega)$ is
\[
\E[f(y_1)]-\E[f(y_0)]=\kappa_f-\kappa_f=0,
\]
so the first coordinate of $\nu(A_\lambda)=\lambda\nu(\Omega)$ is $0$ for
\emph{every} $\lambda\in[0,1]$ -- not merely at $\lambda=0,1$. This is
the key point at which the argument differs from Step 1: the constraint
coordinate is pinned to zero along the entire splice, simultaneously
with the $\theta$-coordinates tracing out the segment.

Define $y_\lambda=y_1\mathbf 1_{A_\lambda}+y_0\mathbf 1_{A_\lambda^c}$,
which lies in $\Sel^1(Y)$ by the same pointwise argument as in Step 1.
Then, exactly as in Step 1's computation applied to the first coordinate
of $H$,
\[
\E[f(y_\lambda)]
=
\E[f(y_0)]+(\text{first coordinate of }\nu(A_\lambda))
=
\kappa_f+0=\kappa_f,
\]
so $y_\lambda\in\Sel^1(Y\mid f,\kappa_f)$ for every $\lambda\in[0,1]$.
Applying the same computation to the remaining $d+1$ coordinates of $H$,
\begin{align*}
\E[\tilde xy_\lambda]
&=
\E[\tilde xy_0]+(\text{remaining coordinates of }\nu(A_\lambda))
\\&=
\E[\tilde xy_0]+\lambda\bigl(\E[\tilde xy_1]-\E[\tilde xy_0]\bigr)
\\&=
(1-\lambda)\E[\tilde xy_0]+\lambda\E[\tilde xy_1].
\end{align*}
Applying the linear map $Q^{-1}$,
\[
Q^{-1}\E[\tilde xy_\lambda]
=
(1-\lambda)Q^{-1}\E[\tilde xy_0]+\lambda Q^{-1}\E[\tilde xy_1]
=
(1-\lambda)\theta_0+\lambda\theta_1.
\]
Since $y_\lambda\in\Sel^1(Y\mid f,\kappa_f)$, the left-hand side is a
point of $\Theta^I_{\kappa_f}$ by definition~\eqref{eq:moment_aux_set}.
Hence $(1-\lambda)\theta_0+\lambda\theta_1\in\Theta^I_{\kappa_f}$ for
every $\lambda\in[0,1]$, proving $\Theta^I_{\kappa_f}$ is convex.
\end{proof}

\subsection{Extensions: transformations and quantile information}
\label{app:extensions}

This subsection records the transformation and quantile extensions referred
to in Section~\ref{subsec:transformations}.  The core results in
Section~\ref{sec:condexp} require only Assumption~\ref{as:main}, together
with the stated auxiliary information.  The general transformation and
quantile machinery below requires additional regularity because nonlinear
and discontinuous transformations do not preserve the simple linear
knapsack structure of the untransformed problem.

Let
\[
f=(f_1,\dots,f_J)':\R\to\R^J
\]
be a vector of transformations, and let $\kappa_f\in\R^J$ be a known vector
of population transformation moments.

\begin{assumption}[Regular transformation for the appendix]
\label{as:regular_transform}
Each $f_j$ is continuous.  Moreover, there exists a nonnegative random
variable $F\in L^1(\PP)$ such that
\[
\sup_{t\in[y_L,y_U]}\|f(t)\|\leq F
\qquad
\PP\text{-a.s.}
\]
\end{assumption}

Define
\[
\mathcal K_f
=
\left\{
\E[f(y)]:
y\in\Sel^1(Y)
\right\}
\subseteq\R^J,
\]
and, for $\kappa_f\in\mathcal K_f$,
\[
\Sel^1(Y\mid f,\kappa_f)
=
\left\{
y\in\Sel^1(Y):
\E[f(y)]=\kappa_f
\right\}.
\]
The corresponding sharp region for the original BLP coefficient of $y$ is
\[
\Theta^I_{y\mid f,\kappa_f}
=
\left\{
Q^{-1}\E[\tilde x y]:
y\in\Sel^1(Y\mid f,\kappa_f)
\right\}.
\]
This differs from the transformed-outcome region
$\Theta^I_{f\mid\kappa_f}$ in Section~\ref{subsec:transformations}, where
the target is the BLP coefficient of $f(y^*)$.  Here the transformation
moment is auxiliary information imposed on the original latent outcome.

\begin{proposition}[Convexity under regular transformation moments]
\label{prop:appendix_transform_convexity}
Suppose Assumptions~\ref{as:main}, \ref{as:atomless}, and
\ref{as:regular_transform} hold.  Then $\mathcal K_f$ is compact and convex.
If $\kappa_f\in\mathcal K_f$, then $\Theta^I_{y\mid f,\kappa_f}$ is
nonempty, compact, and convex.
\end{proposition}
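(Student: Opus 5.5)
The plan is to lift the problem to a single Aumann integral of a nonconvex, compact-valued correspondence and apply Theorem~\ref{thm:aumann}. Consider the random curve
\[
\Gamma_f(\omega)=\bigl\{\bigl(f(t),\,\tilde x(\omega)t\bigr): t\in[y_L(\omega),y_U(\omega)]\bigr\}\subseteq\R^{J+d+1}.
\]
Continuity of each $f_j$ makes $t\mapsto(f(t),\tilde x t)$ continuous on the compact interval. Hence $\Gamma_f(\omega)$ is nonempty and compact. Measurability follows because $\Gamma_f$ is the image of the measurable interval correspondence under a Carathéodory map, or directly because it is the closure of the countable family $\{(f(q_k(\omega)),\tilde x q_k(\omega))\}$, where $q_k=y_L+r_k\Delta$ and $r_k$ ranges over the rationals in $[0,1]$. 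Integrable boundedness follows from Assumption~\ref{as:regular_transform} together with $\|\tilde x\|(|y_L|+|y_U|)\in L^1$.

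The integrable selections of $\Gamma_f$ are exactly the vectors $(f(y),\tilde x y)$ with $y\in\Sel^1(Y)$. If a selection is given, one recovers $y$ by a measurable selection of the preimage correspondence $\{t\in[y_L,y_U]:(f(t),\tilde x t)=\text{selection}\}$. This preimage is closed-valued and measurable, so Kuratowski--Ryll-Nardzewski applies. On $\{\tilde x\neq 0\}$ the preimage is in fact unique since the intercept coordinate is $1$. Therefore
\[
\mathcal J:=\int_\Omega\Gamma_f\,d\PP=\bigl\{(\E[f(y)],\E[\tilde x y]):y\in\Sel^1(Y)\bigr\}.
\]
By Theorem~\ref{thm:aumann}(i) and (iii), using atomlessness (Assumption~\ref{as:atomless}), $\mathcal J$ is nonempty, compact and convex. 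Convexity needs no convexity of the curve $\Gamma_f(\omega)$.

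Next I read off both claims from $\mathcal J$. First, $\mathcal K_f$ is the image of $\mathcal J$ under the coordinate projection onto the first $J$ components. A linear image of a compact convex set is compact and convex. Second, for $\kappa_f\in\mathcal K_f$ the slice
\[
\mathcal J_{\kappa_f}=\{m:(\kappa_f,m)\in\mathcal J\}
\]
is nonempty by definition of $\mathcal K_f$. It is the intersection of $\mathcal J$ with an affine subspace, so it is compact and convex. Since $\Theta^I_{y\mid f,\kappa_f}=Q^{-1}\mathcal J_{\kappa_f}$ and $Q^{-1}$ is a linear bijection, the three properties transfer.

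As an alternative for convexity alone, one could repeat the Lyapunov splicing of Proposition~\ref{prop:moment_aux_convexity} with the $(J+d+1)$-vector $H=(f(y_1)-f(y_0),\tilde x(y_1-y_0))$. That route does not give compactness, however.

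The main obstacle is closedness, since compactness is exactly what the splicing argument cannot deliver. I rely on Theorem~\ref{thm:aumann}(i), which requires compact values; this is precisely where continuity of $f$ enters, because a discontinuous $f$ would give nonclosed curves and the integral could fail to be closed. If one wants a self-contained check, take a sequence $y_n$ with $(\E f(y_n),\E\tilde x y_n)\to(k,m)$ and extract a subsequence via Komlós or weak $L^1$ compactness, dominated by $F$ and the integrable envelope. Weak limits only land in $\int\overline{\mathrm{co}}\,\Gamma_f$, and the identity $\int\Gamma_f=\int\overline{\mathrm{co}}\,\Gamma_f$ from Theorem~\ref{thm:aumann}(iii) then returns the limit to $\mathcal J$. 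The secondary technical point is the measurable recovery of $y$ from a selection of $\Gamma_f$, which I handle with the measurable preimage selection described above.
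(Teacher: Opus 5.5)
Your proof is correct. It rests on the same lifted correspondence $\omega\mapsto\{(f(t),\tilde x(\omega)t):t\in[y_L(\omega),y_U(\omega)]\}$ that the paper cites for compactness, but you organize the argument differently.

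The paper proves convexity of $\mathcal K_f$ and of the restricted cross-moment set by two explicit Lyapunov splicings (Lemma~\ref{lem:lyapunov_split}). It then gets compactness by invoking Theorem~\ref{thm:aumann} separately for $\omega\mapsto\{f(t)\}$ and for the lifted correspondence. You instead apply Theorem~\ref{thm:aumann}(i),(iii) once to $\mathcal J$ and read off both claims by projection and slicing.

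Your route makes explicit two steps the paper leaves tacit. First, $\Theta^I_{y\mid f,\kappa_f}$ is $Q^{-1}$ applied to the $\kappa_f$-slice of $\mathcal J$. This is what actually turns compactness of $\mathcal J$ into compactness of the restricted region, and the paper's one-line compactness claim needs it. Second, integrable selections of the correspondence must be matched to selections $y\in\Sel^1(Y)$. Here the lifting genuinely helps. For the paper's correspondence $\{f(t)\}$, writing a selection as $f(y)$ with $y$ measurable requires exactly the measurable-preimage (Filippov-type) step you sketch. In $\Gamma_f$ that step is unnecessary: the intercept entry of the $\tilde x t$ block is $t$ itself, so $y$ is simply that coordinate of the selection. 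Your Kuratowski--Ryll-Nardzewski argument can be dropped, and the set $\{\tilde x\neq 0\}$ you restrict to is all of $\Omega$.

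What the paper's splicing buys in exchange is a direct, constructive convexity argument that needs only integrability of the spliced differences and never uses compact values. That is why the same argument also serves Proposition~\ref{prop:moment_aux_convexity} for a merely Borel $f$.
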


\begin{proof}
Compactness of $\mathcal K_f$ follows from Theorem ~\ref{thm:aumann}
applied to the compact-valued correspondence
\[
\omega\mapsto
\{f(t):t\in[y_L(\omega),y_U(\omega)]\}.
\]
Continuity of $f$ makes the values compact, and
Assumption~\ref{as:regular_transform} gives integrable boundedness.

For convexity, take $y_0,y_1\in\Sel^1(Y)$.  Let
\[
H=f(y_1)-f(y_0)\in L^1(\PP;\R^J).
\]
By Lemma~\ref{lem:lyapunov_split}, for every $\lambda\in[0,1]$ there exists
$A_\lambda\in\mathcal F$ such that
\[
\int_{A_\lambda}H\,d\PP=\lambda\E[H].
\]
Define
\[
y_\lambda
=
y_1\mathbf 1_{A_\lambda}
+
y_0\mathbf 1_{A_\lambda^c}.
\]
Then $y_\lambda\in\Sel^1(Y)$ and
\[
\E[f(y_\lambda)]
=
(1-\lambda)\E[f(y_0)]
+
\lambda\E[f(y_1)].
\]
Hence $\mathcal K_f$ is convex.

Now take $y_0,y_1\in\Sel^1(Y\mid f,\kappa_f)$.  Define the stacked vector
\[
H
=
\begin{pmatrix}
f(y_1)-f(y_0)\\
\tilde x(y_1-y_0)
\end{pmatrix}.
\]
It is integrable under Assumptions~\ref{as:main} and
\ref{as:regular_transform}.  Applying Lemma~\ref{lem:lyapunov_split} gives a
splice $y_\lambda$ satisfying
\[
\E[f(y_\lambda)]
=
(1-\lambda)\kappa_f+\lambda\kappa_f
=
\kappa_f,
\]
and
\[
\E[\tilde x y_\lambda]
=
(1-\lambda)\E[\tilde x y_0]
+
\lambda\E[\tilde x y_1].
\]
Thus the attainable cross-moment set under the transformation restriction is
convex, and its linear image under $Q^{-1}$ is convex.  Compactness follows
from the Aumann integral of the compact-valued correspondence
\[
\omega\mapsto
\left\{
(f(t)',\tilde x(\omega)'t)':
t\in[y_L(\omega),y_U(\omega)]
\right\}.
\]
Its image under $Q^{-1}$ is compact as well.
\end{proof}

\begin{assumption}[Relative-interior constraint qualification]
\label{as:ri_cq}
The known moment $\kappa_f$ belongs to the relative interior of
$\mathcal K_f$.
\end{assumption}

\begin{proposition}[Lagrangian support formula]
\label{prop:appendix_lagrangian}
Suppose Assumptions~\ref{as:main}, \ref{as:atomless},
\ref{as:regular_transform}, and \ref{as:ri_cq} hold.  Fix
$r\in\R^{d+1}$ and write
\[
s_r=r'Q^{-1}\tilde x.
\]
Then the support function of
$\Theta^I_{y\mid f,\kappa_f}$ is
\[
h_{\Theta^I_{y\mid f,\kappa_f}}(r)
=
\min_{c\in\R^J}
\left\{
c'\kappa_f
+
\E\left[
\sup_{t\in[y_L,y_U]}
\{s_rt-c'f(t)\}
\right]
\right\}.
\]
The minimum is attained.
\end{proposition}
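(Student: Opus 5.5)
The plan is to reduce the support problem to a finite-dimensional convex program over a joint moment set, and then apply Lagrangian duality in $\R^J$ together with the support-commutes-with-integral identity of \cref{thm:aumann}(iv).

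\textbf{Step 1 (joint moment set).} Define the joint attainable set
\[
\mathcal J=\left\{\bigl(\E[f(y)],\,\E[s_r y]\bigr): y\in\Sel^1(Y)\right\}\subseteq\R^{J+1}.
\]
This is the Aumann integral of the correspondence
\[
\Gamma_r(\omega)=\{(f(t),s_r(\omega)t):t\in[y_L(\omega),y_U(\omega)]\}.
\]
This correspondence is compact-valued because $f$ is continuous. It is integrably bounded by $F+|s_r|(|y_L|+|y_U|)$, which is integrable by Assumptions~\ref{as:main} and \ref{as:regular_transform}. Measurable selections of $\Gamma_r$ are exactly the images of $y\in\Sel^1(Y)$; I will justify this by a measurable selection argument applied to the closed-valued preimage correspondence. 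By \cref{thm:aumann}(i),(iii) and atomlessness, $\mathcal J$ is compact and convex. Then
\[
h_{\Theta^I_{y\mid f,\kappa_f}}(r)=\sup_{y\in\Sel^1(Y\mid f,\kappa_f)}\E[s_ry]=\max\{z:(\kappa_f,z)\in\mathcal J\}=:V(\kappa_f).
\]
The maximum is attained by compactness.

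\textbf{Step 2 (weak duality).} For any $c\in\R^J$ and feasible $y$,
\[
\E[s_ry]=c'\kappa_f+\E[s_ry-c'f(y)]\le c'\kappa_f+\E\Bigl[\sup_{t\in[y_L,y_U]}\{s_rt-c'f(t)\}\Bigr].
\]
The inner supremum is attained pointwise by continuity and is measurable. It is integrable because it is bounded by $|s_r|(|y_L|+|y_U|)+\|c\|F$. By \cref{thm:aumann}(iv), this expectation equals the support function $h_{\mathcal J}(-c,1)$. So the right-hand side is $\phi(c):=c'\kappa_f+h_{\mathcal J}(-c,1)$, a finite convex function of $c$.

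\textbf{Step 3 (strong duality and attainment).} This is the main obstacle. The function $V(k)=\max\{z:(k,z)\in\mathcal J\}$ is concave and upper semicontinuous on the compact convex set $\mathcal K_f$, which is the projection of $\mathcal J$, and equals $-\infty$ outside it. The quantity $\inf_c\phi(c)$ is exactly the concave biconjugate of $V$ evaluated at $\kappa_f$, because
\[
h_{\mathcal J}(-c,1)=\sup_k\{V(k)-c'k\}.
\]
By Assumption~\ref{as:ri_cq}, $\kappa_f\in\operatorname{ri}(\mathcal K_f)=\operatorname{ri}(\operatorname{dom}V)$. Standard convex analysis (Rockafellar, Thm.~23.4) then gives that the superdifferential $\partial V(\kappa_f)$ is nonempty. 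Take $c^*\in\partial V(\kappa_f)$, so that $V(k)\le V(\kappa_f)+c^{*\prime}(k-\kappa_f)$ for all $k$. Then
\[
\phi(c^*)=c^{*\prime}\kappa_f+\sup_k\{V(k)-c^{*\prime}k\}\le V(\kappa_f),
\]
and combined with weak duality this yields equality, with the minimum attained at $c^*$. One subtlety must be handled: when $\mathcal K_f$ is not full-dimensional, the supergradient is determined only up to the orthogonal complement of $\operatorname{aff}\mathcal K_f$. This does not matter, since any representative works. Finiteness of $V$ near $\kappa_f$ within the affine hull must also be checked; it holds because $V\ge\min_{\mathcal J}z>-\infty$ on $\mathcal K_f$.
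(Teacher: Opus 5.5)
Your proposal is correct and follows essentially the same route as the paper. Both use the same joint attainable set (the paper's $\mathcal A_r$), weak duality from the pointwise inequality $s_ry-c'f(y)\le\sup_{t\in[y_L,y_U]}\{s_rt-c'f(t)\}$, and the support-function identity for the Aumann integral; strong duality and attainment then come from a non-vertical supporting hyperplane to $\mathcal A_r$ at $(\kappa_f,V(\kappa_f))$, which is exactly your supergradient $c^*\in\partial V(\kappa_f)$. Your appeal to Rockafellar's Theorem 23.4 for the concave value function is a more careful version of the paper's ``nonzero coefficient on the last coordinate'' step, because it explicitly covers the case where $\mathcal K_f$ is not full-dimensional.
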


\begin{proof}
For weak duality, take any $y\in\Sel^1(Y\mid f,\kappa_f)$ and any
$c\in\R^J$.  Pointwise,
\[
s_ry-c'f(y)
\leq
\sup_{t\in[y_L,y_U]}\{s_rt-c'f(t)\}.
\]
Taking expectations and using $\E[f(y)]=\kappa_f$ gives
\[
\E[s_ry]
\leq
c'\kappa_f+
\E\left[
\sup_{t\in[y_L,y_U]}\{s_rt-c'f(t)\}
\right].
\]
Taking the supremum over feasible $y$ and then the infimum over $c$ gives
one inequality.

For the reverse inequality, consider the finite-dimensional attainable set
\[
\mathcal A_r
=
\left\{
\left(\E[f(y)],\E[s_ry]\right):
y\in\Sel^1(Y)
\right\}
\subseteq\R^{J+1}.
\]
By the same Aumann and Lyapunov-splicing arguments used above,
$\mathcal A_r$ is compact and convex.  Let
\[
V(\kappa_f)
=
\sup\{z:(\kappa_f,z)\in\mathcal A_r\}.
\]
Assumption~\ref{as:ri_cq} places $\kappa_f$ in the relative interior of the
projection of $\mathcal A_r$ onto its first $J$ coordinates.  Hence the
closed convex set $\mathcal A_r$ admits a supporting hyperplane at
$(\kappa_f,V(\kappa_f))$ with a nonzero coefficient on the last coordinate.
After normalizing this last coefficient to one, there exists
$c^*\in\R^J$ such that, for all $(b,z)\in\mathcal A_r$,
\[
z
\leq
V(\kappa_f)+c^{*\prime}(b-\kappa_f).
\]
Equivalently, for every $y\in\Sel^1(Y)$,
\[
\E[s_ry]
\leq
V(\kappa_f)
+
c^{*\prime}\{\E[f(y)]-\kappa_f\}.
\]
Rearranging gives
\[
V(\kappa_f)
\geq
c^{*\prime}\kappa_f
+
\E[s_ry-c^{*\prime}f(y)]
\]
for every selection $y$.  Taking the supremum over selections and applying
the support-function identity to the compact-valued correspondence
\[
\omega\mapsto
\left\{
s_r(\omega)t-c^{*\prime}f(t):
t\in[y_L(\omega),y_U(\omega)]
\right\}
\]
yields
\[
V(\kappa_f)
\geq
c^{*\prime}\kappa_f
+
\E\left[
\sup_{t\in[y_L,y_U]}
\{s_rt-c^{*\prime}f(t)\}
\right].
\]
Together with weak duality, this proves equality and shows that $c^*$ is a
minimizer.
\end{proof}

\subsubsection{Quantile information}

Quantile information is obtained by taking an indicator transformation, for
example
\[
f(t)=\mathbf 1\{t\leq m\}.
\]
Then the restriction
\[
\E[\mathbf 1\{y^*\leq m\}]=q
\]
states that the latent distribution places probability $q$ weakly below
$m$.  Since the indicator is discontinuous, it does not satisfy
Assumption~\ref{as:regular_transform}.  The following finite-sample
description gives the support value directly by sorting.

\begin{proposition}[Finite-sample quantile restriction]
\label{prop:appendix_quantile}
Consider observations
\[
(y_{L,i},y_{U,i},\tilde x_i),
\qquad i=1,\dots,n,
\]
and fix a direction $r$.  Let
\[
Q_n=n^{-1}\sum_{i=1}^n\tilde x_i\tilde x_i',
\qquad
s_i=r'Q_n^{-1}\tilde x_i.
\]
Impose the empirical quantile restriction
\[
n^{-1}\sum_{i=1}^n\mathbf 1\{y_i\leq m\}=q.
\]
Define
\[
B=\{i:y_{U,i}\leq m\},
\qquad
A=\{i:y_{L,i}>m\},
\qquad
S=\{i:y_{L,i}\leq m<y_{U,i}\}.
\]
Suppose
\[
K=qn-|B|
\]
is an integer satisfying $0\leq K\leq |S|$.  For each $i\in S$, define
\[
t_i^-=
\begin{cases}
m, & s_i>0,\\
y_{L,i}, & s_i\leq0,
\end{cases}
\qquad
t_i^+=
\begin{cases}
y_{U,i}, & s_i>0,\\
m, & s_i\leq0,
\end{cases}
\]
and
\[
g_i=s_i(t_i^- - t_i^+).
\]
Then the supremum of the directional objective under the empirical quantile restriction is obtained by assigning the ``below'' status to the $K$ indices in $S$ with the largest values of $g_i$.  The value is interpreted as a supremum: when $s_i\leq0$ and an index is assigned to the above group, the choice $t_i^+=m$ may need to be approached from above by $m+\varepsilon$ rather than attained exactly.
\end{proposition}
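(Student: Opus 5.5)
The statement is the finite-sample quantile proposition (Proposition~\ref{prop:appendix_quantile}). I will reduce it to a finite combinatorial problem that separates across observations once the below/above status of each index is fixed.

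\textbf{Step 1: forced statuses.} Every feasible completion $y=(y_i)$ with $y_{L,i}\le y_i\le y_{U,i}$ has fixed status on two of the three index sets. Indices in $B$ are always below $m$. Indices in $A$ are always above $m$. Only indices in $S$ have free status. The empirical restriction therefore requires exactly $K=qn-|B|$ indices of $S$ to be below. The objective is $n^{-1}\sum_i s_i y_i$, using $r'\hat\theta=n^{-1}\sum_i s_iy_i$ as in Section~\ref{subsec:directional_bounds} with $Q_n$ in place of $Q$.

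\textbf{Step 2: optimize within a fixed assignment.} Fix a subset $T\subseteq S$ with $|T|=K$, interpreted as the indices assigned below. Given $T$, the constraint is only on status, so the problem decouples across $i$. On $B\cup A$, each $y_i$ ranges over its full interval, and the optimum is $y_{U,i}$ when $s_i>0$ and $y_{L,i}$ otherwise. These terms do not depend on $T$.

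For $i\in T$, $y_i$ ranges over $[y_{L,i},m]$. The maximizer of $s_iy_i$ is $m$ if $s_i>0$ and $y_{L,i}$ if $s_i\le0$, which is exactly $t_i^-$.

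For $i\in S\setminus T$, $y_i$ ranges over $(m,y_{U,i}]$. The supremum is $y_{U,i}$ if $s_i>0$. If $s_i\le 0$ it is $m$, approached from above but not attained, which is $t_i^+$ with the stated caveat.

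\textbf{Step 3: reduce to a sorting problem.} The value for a given $T$ is a constant plus
\[
\sum_{i\in T}s_it_i^-+\sum_{i\in S\setminus T}s_it_i^+ = \sum_{i\in S}s_it_i^+ + \sum_{i\in T} g_i .
\]
Maximizing over $|T|=K$ is then a cardinality-constrained selection with additive weights $g_i$. Choosing the $K$ largest values of $g_i$ is optimal by an exchange argument: swapping any chosen index for an unchosen one with larger $g$ weakly increases the sum. Because the outer maximum is over a finite family and each inner problem is a supremum, the overall value is the supremum, and it is attained except for the limiting $m+\varepsilon$ choices.

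\textbf{Main obstacle.} The delicate point is the half-open interval for the above group. I must check that suprema commute correctly: the supremum of a finite sum of independent coordinate suprema equals the sum of the suprema. I must also confirm that the $g_i$ are well defined when $t_i^\pm$ are boundary values. The sign conventions in $t_i^\pm$ need care when $s_i=0$, but there $g_i=0$ and the ambiguity is harmless.
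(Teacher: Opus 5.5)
Your proposal is correct and follows essentially the same route as the paper: you fix the forced statuses on $B$ and $A$, optimize each index in $S$ separately given its below/above status (which yields $t_i^-$ and $t_i^+$), and reduce the problem to choosing $K$ indices with additive weights $g_i$, settled by an exchange argument. Your explicit identity $\sum_{i\in T}s_it_i^-+\sum_{i\in S\setminus T}s_it_i^+=\sum_{i\in S}s_it_i^++\sum_{i\in T}g_i$ just spells out the paper's ``constant plus $\sum_{i\in S}b_ig_i$'' step.
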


\begin{proof}
For $i\in B$, every admissible value satisfies $y_i\leq m$.  For
$i\in A$, every admissible value satisfies $y_i>m$. Hence only the
indices in $S$ have a nontrivial below/above assignment.

Conditional on assigning $i\in S$ to the below group, the best representative
is $t_i^-$.  Conditional on assigning it to the above group, the best value is $t_i^+$, with the boundary convention stated above.  Therefore, over $S$, the objective is a constant plus
\[
\sum_{i\in S}b_i g_i,
\]
where $b_i\in\{0,1\}$ indicates whether index $i$ is assigned below.  The quantile restriction imposes
\[
\sum_{i\in S}b_i=K.
\]
This finite assignment problem is solved by setting $b_i=1$ for the
$K$ largest values of $g_i$.  If an alleged optimum assigned $i$ above and
$j$ below with $g_i>g_j$, swapping their assignments would strictly increase
the objective while preserving the count, a contradiction.
\end{proof}

\subsubsection{Bounded transformations}

\begin{proposition}[Bounded-transformation ceiling]
\label{prop:appendix_bounded}
Let $J=1$, and suppose Proposition~\ref{prop:appendix_lagrangian} applies.
Let $c^*(r)$ be a dual minimizer for the restriction
$\E[f(y^*)]=\kappa_f$.  Suppose that for some finite constant $B$,
\[
\sup_{t\in[y_L,y_U]}f(t)-\inf_{t\in[y_L,y_U]}f(t)
\leq B
\qquad
\PP\text{-a.s.}
\]
Then
\[
0
\leq
h_{\Theta^I}(r)-h_{\Theta^I_{y\mid f,\kappa_f}}(r)
\leq
|c^*(r)|B.
\]
\end{proposition}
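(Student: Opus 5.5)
The lower bound is immediate. Since $\Sel^1(Y\mid f,\kappa_f)\subseteq\Sel^1(Y)$, we have $\Theta^I_{y\mid f,\kappa_f}\subseteq\Theta^I$, and support functions are monotone under inclusion. So the substance is the upper bound. The plan is to compare the two support values through the dual formula of Proposition~\ref{prop:appendix_lagrangian}, evaluated at the minimizer $c^*=c^*(r)$, with the unconstrained support value written pointwise through Theorem~\ref{thm:aumann}(iv):
\[
h_{\Theta^I}(r)=\E\Bigl[\sup_{t\in[y_L,y_U]} s_r t\Bigr].
\]

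First, write
\[
h_{\Theta^I_{y\mid f,\kappa_f}}(r)=c^*\kappa_f+\E\Bigl[\sup_{t\in[y_L,y_U]}\{s_rt-c^*f(t)\}\Bigr].
\]
Pointwise in $\omega$, let $t^\circ(\omega)\in\arg\max_{t\in[y_L,y_U]}s_rt$; this is a measurable choice, either $y_U$ or $y_L$ according to the sign of $s_r$. Then
\[
\sup_t\{s_rt-c^*f(t)\}\ \ge\ s_rt^\circ-c^*f(t^\circ)=\sup_t s_rt-c^*f(t^\circ).
\]
Taking expectations gives
\[
h_{\Theta^I}(r)-h_{\Theta^I_{y\mid f,\kappa_f}}(r)\le c^*\bigl(\E[f(t^\circ)]-\kappa_f\bigr).
\]

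Second, bound $c^*(\E[f(t^\circ)]-\kappa_f)$ by $|c^*|B$. Since $\kappa_f\in\mathcal K_f$, there is a selection $y$ with $\E[f(y)]=\kappa_f$. Hence
\[
\E[f(t^\circ)]-\kappa_f=\E[f(t^\circ)-f(y)].
\]
Both $t^\circ$ and $y$ lie in $[y_L,y_U]$, so the pointwise oscillation bound gives $|f(t^\circ)-f(y)|\le B$ almost surely. Therefore $|\E[f(t^\circ)]-\kappa_f|\le B$, and the upper bound follows.

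The main obstacle I expect is only technical: making $t^\circ$ a measurable integrable selection so that $\E[f(t^\circ)]$ is defined. I would handle this by taking $t^\circ=y_U\1\{s_r>0\}+y_L\1\{s_r\le0\}$ explicitly. This selection is measurable, and $f(t^\circ)$ is dominated by $F$ under Assumption~\ref{as:regular_transform}, so it is integrable. Integrability of the dual integrand is already part of Proposition~\ref{prop:appendix_lagrangian}, so no further work is needed there.
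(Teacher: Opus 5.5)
Your proposal is correct and follows essentially the same argument as the paper: evaluate the dual formula at $c^*(r)$, bound the pointwise supremum below by its value at the unconstrained maximizer $t^\circ$ (the paper's $t_0$), and use any feasible selection together with the oscillation bound to get $|\E[f(t^\circ)]-\kappa_f|\le B$. Your explicit choice $t^\circ=y_U\1\{s_r>0\}+y_L\1\{s_r\le0\}$, with domination by $F$, simply spells out the measurability and integrability that the paper leaves implicit.
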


\begin{proof}
The lower bound follows from
\[
\Sel^1(Y\mid f,\kappa_f)\subseteq\Sel^1(Y).
\]
Let $t_0(\omega)$ be a measurable maximizer of $s_r(\omega)t$ over
$t\in[y_L(\omega),y_U(\omega)]$.  Then
\[
h_{\Theta^I}(r)=\E[s_rt_0].
\]
Let $y'$ be any feasible selection satisfying
\[
\E[f(y')]=\kappa_f.
\]
Since $y'(\omega)$ and $t_0(\omega)$ both lie in the same interval
$[y_L(\omega),y_U(\omega)]$,
\[
|f(y'(\omega))-f(t_0(\omega))|\leq B
\qquad
\PP\text{-a.s.}
\]
Thus
\[
|\kappa_f-\E[f(t_0)]|
=
|\E[f(y')]-\E[f(t_0)]|
\leq B.
\]
Using the dual formula at $c^*=c^*(r)$,
\[
h_{\Theta^I_{y\mid f,\kappa_f}}(r)
=
c^*\kappa_f
+
\E\left[
\sup_{t\in[y_L,y_U]}\{s_rt-c^*f(t)\}
\right].
\]
The supremum is at least its value at $t_0$, hence
\[
h_{\Theta^I_{y\mid f,\kappa_f}}(r)
\geq
c^*\kappa_f+\E[s_rt_0-c^*f(t_0)]
=
h_{\Theta^I}(r)
+
c^*\{\kappa_f-\E[f(t_0)]\}.
\]
Rearranging and using the preceding bound gives
\[
h_{\Theta^I}(r)-h_{\Theta^I_{y\mid f,\kappa_f}}(r)
\leq
|c^*|\,|\kappa_f-\E[f(t_0)]|
\leq
|c^*|B.
\]
\end{proof}

This bound should be interpreted cautiously.  The bounded range of $f$
limits one direct channel through which the auxiliary moment can affect the
support function, but the multiplier $|c^*(r)|$ is endogenous and can be
large.  Thus bounded transformations are not uniformly weak without further
control of their shadow values.

\subsection{Proof of Proposition \ref{prop:conditional_fwl}}

\begin{proof}
Fix
\[
y\in\Sel^1(Y\mid\kappa(x_1)).
\]
The block normal equations are
\[
\begin{pmatrix}
\Sigma_{11}&\Sigma_{12}\\
\Sigma_{21}&\Sigma_{22}
\end{pmatrix}
\begin{pmatrix}
\theta_1\\
\theta_2
\end{pmatrix}
=
\begin{pmatrix}
\E[wy]\\
\E[x_2y]
\end{pmatrix}.
\]
The conditional restriction implies
\[
\E[wy]
=
\E[w\E[y\mid x_1]]
=
\E[w\kappa(x_1)]
=
g_1.
\]
Thus
\[
\theta_1
=
\Sigma_{11}^{-1}
(g_1-\Sigma_{12}\theta_2).
\]

Substitution into the second block gives
\[
\Sigma_{2\cdot1}\theta_2
=
\E[x_2y]
-
\Sigma_{21}\Sigma_{11}^{-1}g_1.
\]
Since $\E[wy]=g_1$,
\[
\E[x_2^*y]
=
\E[x_2y]
-
\Sigma_{21}\Sigma_{11}^{-1}g_1.
\]
Therefore,
\[
\theta_2
=
\Sigma_{2\cdot1}^{-1}\E[x_2^*y].
\]

This proves that every admissible coefficient belongs to the lifted set.

Conversely, take $\theta_2\in\Theta_2^I$. By definition, there exists
$y\in\Sel^1(Y\mid\kappa(x_1))$ such that
\[
\theta_2
=
\Sigma_{2\cdot1}^{-1}\E[x_2^*y].
\]
Define
\[
\theta_1
=
\Sigma_{11}^{-1}
(g_1-\Sigma_{12}\theta_2).
\]
Reversing the preceding algebra shows that the full vector satisfies the
normal equations for the same admissible selection $y$.

\end{proof}
It remains to justify the three claims stated in the main text immediately after the proposition. The map $\theta_2\mapsto(\theta_1',\theta_2')'$ with $\theta_1=\Sigma_{11}^{-1}(g_1-\Sigma_{12}\theta_2)$ is affine in $\theta_2$ and its second block returns $\theta_2$ itself, so distinct values of $\theta_2$ produce distinct images: the map is injective. Since $\Theta^I_{\kappa(x_1)}$ is therefore the image of $\Theta_2^I\subseteq\R^{d_2}$ under an injective affine map, its affine dimension equals that of $\Theta_2^I$, which is at most $d_2$. Finally, every point of $\Theta^I_{\kappa(x_1)}$ satisfies the $d_1+1$ scalar equations $\theta_1=\Sigma_{11}^{-1}(g_1-\Sigma_{12}\theta_2)$, one for each component of $\theta_1\in\R^{d_1+1}$; these are independent because $\Sigma_{11}$ is invertible, and they are exactly the $d_1+1$ exact linear restrictions referred to in the main text.
\subsection{Proof of Proposition \ref{prop:external_sharp}}

\begin{proof}
A coefficient vector is compatible with the external conditional restriction
if and only if there exists
\[
y\in\Sel^1(Y\mid\kappa(v))
\]
such that
\[
\E[\tilde x(y-\tilde x'\theta)]=0.
\]
Since $Q$ is nonsingular, this is equivalent to
\[
\theta=Q^{-1}\E[\tilde xy].
\]
Thus the set in \eqref{eq:external_sharp_set} is exactly the collection of
compatible coefficients and is sharp.

Since
\[
\Sel^1(Y\mid\kappa(v))\subseteq\Sel^1(Y),
\]
we also have
\[
\Theta^I_{\kappa(v)}\subseteq\Theta^I.
\]
Moreover, every $y \in \Sel^1(Y|\kappa(v))$ satisfies 
$$E[y]=E[E[y|v]]=E[\kappa(v)]=\kappa,$$
so $\Sel^1(Y|\kappa(v))\subseteq\Sel^1(Y|\kappa)$, which yields the claim in the Proposition.
\end{proof}

\subsection{Proof of Proposition \ref{prop:external_contraction}}

\begin{proof}
Under the conditional restriction, the allocated width mass in cell $m$ is
fixed at $A_m$. Indeed,
\[
A_m
=
p_m\{\kappa(v_m)-\E[y_L\mid v=v_m]\},
\]
while
\[
\mu_m(\Omega)
=
p_m\E[\Delta\mid v=v_m].
\]
Thus the conditional restriction in cell $m$ is exactly
\[
\int \tau\,d\mu_m=A_m.
\]
Therefore,
\[
h_{\Theta^I_{\kappa(v)}}(r)
=
\E[s_ry_L]
+
\sum_{m=1}^M
\mathcal T_{A_m}^{\mu_m}(s_r).
\]

Under the unconditional restriction, only the total width mass is fixed:
\[
\sum_{m=1}^M a_m=\alpha.
\]
For any feasible split $(a_1,\dots,a_M)$, allocation decisions may be made
independently within cells. Hence
\[
h_{\Theta^I_\kappa}(r)
=
\E[s_ry_L]
+
\max_{\substack{
0\leq a_m\leq\mu_m(\Omega)\\
\sum_m a_m=\alpha
}}
\sum_{m=1}^M
\mathcal T_{a_m}^{\mu_m}(s_r).
\]
The imposed split $(A_1,\dots,A_M)$ is feasible, proving nonnegativity of the
difference.

The pooled optimum is generated by a common threshold $c^*$ applied across
all cells, with cell-specific tie fractions when necessary. Therefore, the
conditional and unconditional support values coincide exactly when the
imposed budgets are generated by such a common threshold, that is, when
there are tie fractions $\gamma_m\in[0,1]$ such that
\[
A_m
=
\mu_m(s_r>c^*)
+
\gamma_m\mu_m(s_r=c^*)
\]
for every $m$.

If $s_r$ is independent of $v$ under the normalized width measure and the
budgets are proportional to cell masses,
\[
A_m
=
\frac{\mu_m(\Omega)}{\mu(\Omega)}\alpha,
\]
then the same quantile cutoff satisfies every cell constraint, and the
difference is zero.
\end{proof}

\subsection{Proof of Corollary \ref{cor:between_cell}}

\begin{proof}
For each cell define the dual objective
\[
\phi_m(c)
=
cA_m
+
\int(s_r-c)^+\,d\mu_m.
\]
Then
\[
\mathcal T_{A_m}^{\mu_m}(s_r)
=
\inf_c\phi_m(c),
\]
with minimizer $c_m^*$.

Define the pooled dual objective
\[
\Phi(c)
=
c\alpha
+
\sum_{m=1}^M
\int(s_r-c)^+\,d\mu_m.
\]
Since
\[
\sum_{m=1}^M A_m=\alpha,
\]
we have
\[
\Phi(c)=\sum_{m=1}^M\phi_m(c).
\]
The pooled cutoff $c^*$ minimizes $\Phi$, and therefore
\[
h_{\Theta^I_\kappa}(r)-\E[s_ry_L]
=
\Phi(c^*)
=
\sum_{m=1}^M\phi_m(c^*).
\]
Consequently,
\[
h_{\Theta^I_\kappa}(r)
-
h_{\Theta^I_{\kappa(v)}}(r)
=
\sum_{m=1}^M
\{\phi_m(c^*)-\phi_m(c_m^*)\}.
\]

Since $\phi_m'(c_m^*)=0$, Taylor's theorem with Lagrange remainder gives, for each $m$, a point $\xi_m$ between $c^*$ and $c_m^*$ such that
\[
\phi_m(c^*)-\phi_m(c_m^*)
=
\frac12\phi_m''(\xi_m)(c^*-c_m^*)^2
=
\frac12\mu_m(\Omega)f_m(\xi_m)(c_m^*-c^*)^2.
\]
Because $\xi_m$ lies between $c^*$ and $c_m^*$, $|\xi_m-c^*|\le|c_m^*-c^*|$, which tends to $0$ along the sequence by hypothesis. By the assumed uniform continuity of the densities $f_m$ in neighborhoods containing $c^*$ and every $c_m^*$ along the sequence,
\[
\max_{1\le m\le M}|f_m(\xi_m)-f_m(c^*)|\ \longrightarrow\ 0.
\]
Hence
\[
\phi_m(c^*)-\phi_m(c_m^*)
=
\frac12\mu_m(\Omega)f_m(c^*)(c_m^*-c^*)^2
+
\frac12\mu_m(\Omega)\bigl[f_m(\xi_m)-f_m(c^*)\bigr](c_m^*-c^*)^2,
\]
and summing the second term over the finitely many cells is $o\left(\sum_{m=1}^M\mu_m(\Omega)(c_m^*-c^*)^2\right)$, since the bracketed factor is uniformly small across all $m$ while $\mu_m(\Omega)$ and $M$ are fixed. Summing over the finitely many cells proves \eqref{eq:between_cell_expansion}.
\end{proof}

\section{Additional Numerical Details}
\label{app:numerical}

This appendix collects numerical detail supplementary to
Section~\ref{sec:numerical}: results that verify or illustrate the
theory of Section~\ref{sec:condexp} but are not central to the main
narrative. Code reproducing all calculations and figures in this appendix is available at
\url{https://github.com/BehroozMoosavi/Codes/tree/main/PI_with_%20Auxiliary_restriction}.

\subsection{Coordinate breadth}

Figure~\ref{fig:coord_bounds} plots the unconstrained coordinate bounds
for $\theta_0$ and $\theta_{\mathrm{educ}}$ (Proposition~\ref{prop:width}).
The coordinate-score identity $s_j=\tilde x_j^*/\sigma_j^2$ is verified
directly by computing the partialled-out residual $\tilde x_j^*$ from a
linear regression of $\mathrm{educ}$ on the intercept and comparing to the
closed-form score implied by $Q^{-1}$; the two agree to machine precision.

\begin{figure}[htbp]
\centering
\includegraphics[width=0.5\textwidth]{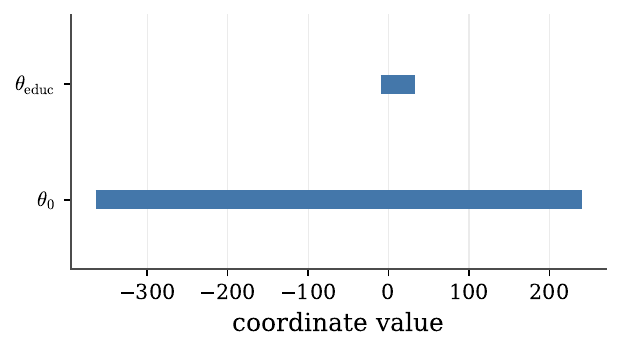}
\caption{Unconstrained coordinate bounds for $\theta_0$ and $\theta_{\mathrm{educ}}$.}
\label{fig:coord_bounds}
\end{figure}

\subsection{Local quadratic behavior}

Corollary~\ref{cor:local_mean_contraction} requires the score $s_r$ to
admit a continuous density under the width-weighted measure in a
neighborhood of zero. On the real CPS data this hypothesis fails outright: educational attainment takes only 12 distinct values, so $s_r$ has a discrete law under $\bar\mu$ with no density, and $f_r(0)$ in Corollary~\ref{cor:local_mean_contraction} is undefined rather than merely hard to estimate. The ratio plotted for the real-data curve therefore uses [SPECIFY: e.g., a kernel density estimate of $f_r$ evaluated at $0$] in place of $f_r(0)$, which is why the reported ratio diverges rather than converging to 1 as $h\to0$, exactly as expected once the density itself does not exist. Repeating the identical construction with a synthetic continuous
regressor in place of $\mathrm{educ}$ confirms the ratio converges cleanly
to 1 in that case, isolating discreteness of the data -- not an error in
the formula -- as the cause.

\begin{figure}[htbp]
\centering
\includegraphics[width=0.5\textwidth]{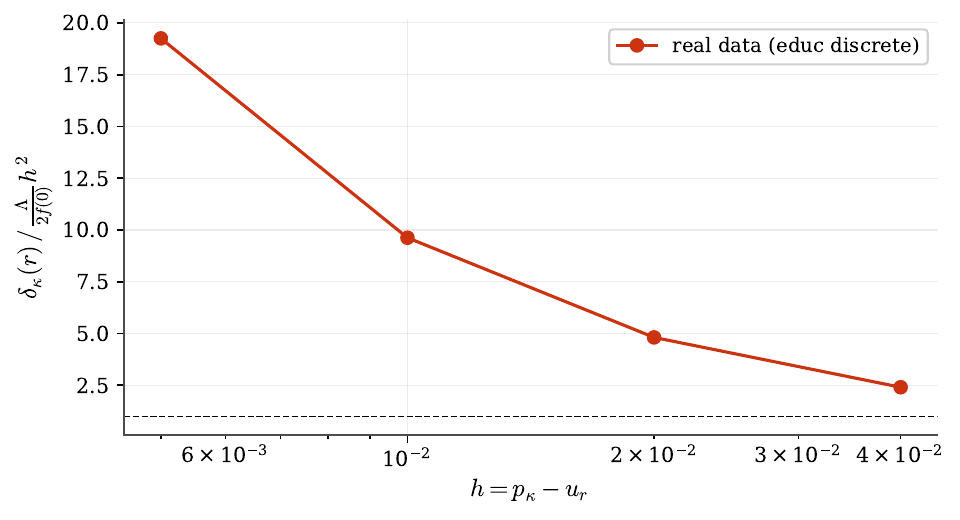}
\caption{Ratio of the realized contraction to its local quadratic
approximation, real (discrete) versus synthetic (continuous) regressor.}
\label{fig:local_quadratic}
\end{figure}

\subsection{The full three-dimensional picture for own-covariate conditioning}

Section~\ref{subsec:conditional_x1_num} reports only the resulting
one-dimensional coefficient interval for the return to education.
Figures~\ref{fig:case_c_3d} and~\ref{fig:case_c_projection} report the
full three-dimensional unconstrained region for
$\tilde x=(1,\mathrm{race},\mathrm{educ})$, together with the
conditional-on-race segment, and its exact two-dimensional projection
onto $(\theta_{\mathrm{race}},\theta_{\mathrm{educ}})$ obtained by
dropping the intercept coordinate from every vertex and re-computing the
convex hull -- exact, not approximate, since for a linear map $P$ and a
convex hull of vertices $V$, $P(\mathrm{conv}(V))=\mathrm{conv}(P(V))$.

The segment's endpoints lie exactly on the boundary of the full
three-dimensional $\Theta^I$ (verified analytically: the race-cell-specific
knapsack cutoffs are exactly reproducible as a threshold on a single
$\tilde x$-linear score, since race is itself a component of $\tilde x$).
This boundary-touching does not need to survive the projection to
$(\theta_{\mathrm{race}},\theta_{\mathrm{educ}})$ unless the supporting
hyperplane at those points has zero component in the dropped ($\theta_0$)
direction; here it does not, so the segment appears strictly interior in
Figure~\ref{fig:case_c_projection} despite touching the boundary in
Figure~\ref{fig:case_c_3d}.

\begin{figure}[htbp]
\centering
\includegraphics[width=0.6\textwidth]{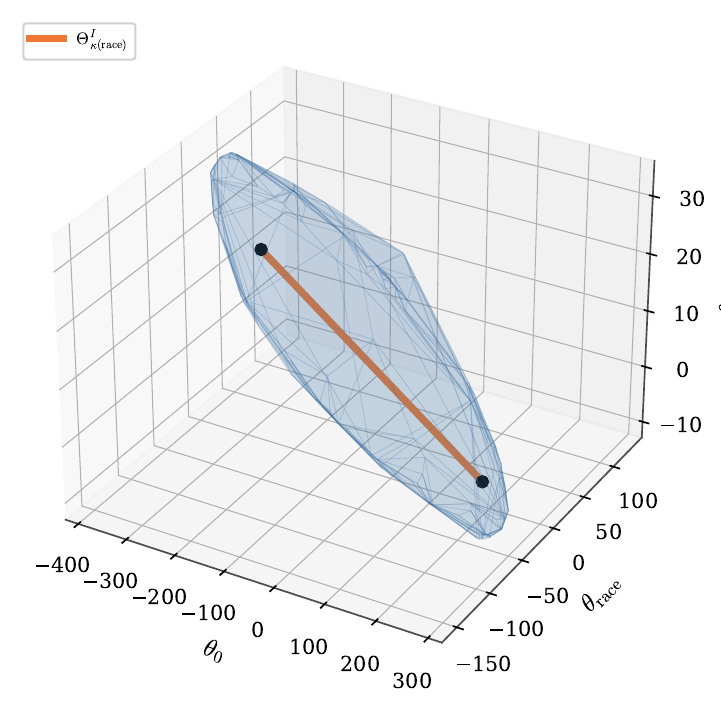}
\caption{Full three-dimensional unconstrained $\Theta^I$ for
$\tilde x=(1,\mathrm{race},\mathrm{educ})$, with the conditional-on-race
segment touching its boundary.}
\label{fig:case_c_3d}
\end{figure}

\begin{figure}[htbp]
\centering
\includegraphics[width=0.5\textwidth]{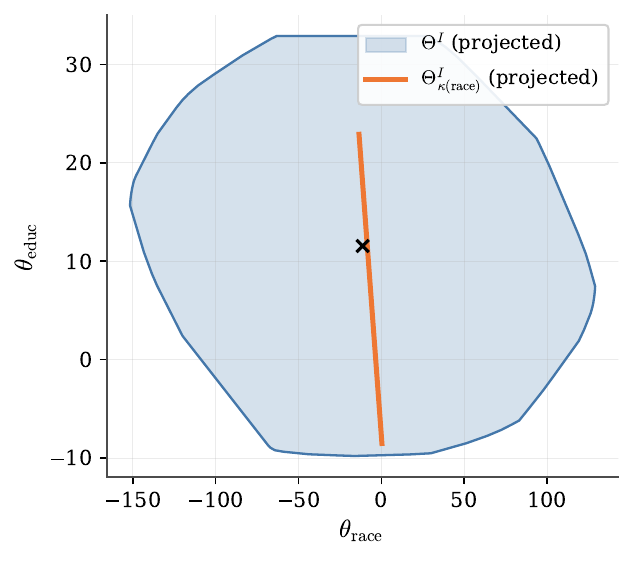}
\caption{Exact projection of Figure~\ref{fig:case_c_3d} onto
$(\theta_{\mathrm{race}},\theta_{\mathrm{educ}})$.}
\label{fig:case_c_projection}
\end{figure}

\subsection{Quantile information: a known median}

Figure~\ref{fig:median_quantile} illustrates
Proposition~\ref{prop:appendix_quantile} with $f(y)=\mathbf 1\{y\leq m\}$,
$m$ the sample median. The region $\Theta^I_{y\mid f,q}$ retains $96.5\%$ of the unconstrained area. Proposition~\ref{prop:appendix_bounded} bounds the contraction from above by $|c^*(r)|B$, and here $B=1$ regardless of the scale of income; but, as noted following that proposition, the multiplier $|c^*(r)|$ is endogenous and the bound need not be small in general. In this instance $|c^*(r)|$ is itself small, so the observed contraction is modest --- consistent with, but not implied uniformly by, boundedness of the transformation alone. The shrinkage is small enough that the inset zoom is needed to see it at all.

\begin{figure}[htbp]
\centering
\includegraphics[width=0.55\textwidth]{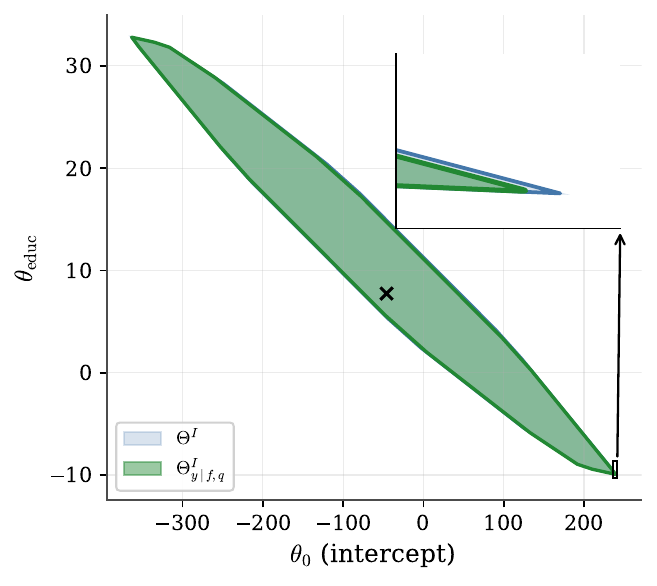}
\caption{$\Theta^I$ versus $\Theta^I_{y\mid f,q}$ under a known median,
with a zoomed inset showing the shrinkage.}
\label{fig:median_quantile}
\end{figure}

\subsection{Cutoff dispersion underlying the external-\texorpdfstring{$v$}{v} contraction}

Table~\ref{tab:cutoffs} reports the pooled cutoff $c^*$ and the five
age-cell-specific cutoffs $c_m^*$ underlying the extra contraction
reported in Section~\ref{subsec:conditional_v_num}, at the two coordinate
directions $r=e_{\mathrm{educ}}$ and $r=-e_{\mathrm{educ}}$ that determine,
respectively, the upper and lower endpoints of the $\theta_{\mathrm{educ}}$
interval in Figure~\ref{fig:external_v}. The two directions behave
differently. At $r=e_{\mathrm{educ}}$, every cell shares the pooled
cutoff exactly, so conditioning on age contributes no additional
contraction at the upper endpoint --- exactly why that endpoint is
unchanged between the pooled and conditional intervals in
Figure~\ref{fig:external_v}. At $r=-e_{\mathrm{educ}}$, two of the five
cells share the pooled cutoff but the remaining three do not, and it is
this dispersion alone that generates the positive extra contraction of
Proposition~\ref{prop:external_contraction} at the lower endpoint.

\begin{table}[htbp]
\centering
\caption{Pooled versus age-cell-specific cutoffs, $\theta_{\mathrm{educ}}$.}
\label{tab:cutoffs}
\begin{tabular}{lcc}
\toprule
& cutoff & differs from pooled? \\
\midrule
\multicolumn{3}{l}{\emph{$r=e_{\mathrm{educ}}$ (upper bound)}} \\
pooled ($c^*$) & 0.2562 & --- \\
cell 0 & 0.2562 & no \\
cell 1 & 0.2562 & no \\
cell 2 & 0.2562 & no \\
cell 3 & 0.2562 & no \\
cell 4 & 0.2562 & no \\
\midrule
\multicolumn{3}{l}{\emph{$r=-e_{\mathrm{educ}}$ (lower bound)}} \\
pooled ($c^*$) & 0.3157 & --- \\
cell 0 & 0.3157 & no \\
cell 1 & 0.3157 & no \\
cell 2 & 0.0298 & yes \\
cell 3 & 0.0298 & yes \\
cell 4 & 0.0298 & yes \\
\bottomrule
\end{tabular}
\end{table}

\end{document}